%% file: dm-monopoly-final.tex
\documentclass[letterpaper,11pt]{article}
\usepackage[T1]{fontenc}
\usepackage[margin=1in]{geometry}
\usepackage{amsmath,amsfonts,amsthm}
\usepackage{thm-restate}
\usepackage{pifont}
\usepackage{soul}
\usepackage{graphicx}
\usepackage{enumerate}
\usepackage{bm}
\usepackage{verbatim}
\usepackage{diagbox}
\usepackage[hypertexnames=false,bookmarksnumbered=true,final]{hyperref}
\usepackage[capitalize,nameinlink]{cleveref}
\usepackage[dvipsnames]{xcolor}
\usepackage{appendix}
\usepackage{makecell}
\usepackage{enumitem}
\usepackage{tablefootnote}
\usepackage{tikz}
\usetikzlibrary{datavisualization}
\usepackage{xifthen}
\usepackage{subcaption}
\usepackage{hhline}
\usepackage{authblk}

\def\colorschemesepia{sepia}
\def\colorschemedark{dark}
\def\colorschemelight{light}

\ifx\colorscheme\undefined
\let\colorscheme\colorschemelight
\fi

\ifx\colorscheme\colorschemelight
\colorlet{textColor}{black}
\colorlet{bgColor}{white}
\fi

\ifx\colorscheme\colorschemesepia
\definecolor{textColor}{HTML}{433423}
\definecolor{bgColor}{HTML}{fbf0da}
\fi

\ifx\colorscheme\colorschemedark
\definecolor{textColor}{HTML}{bdc1c6}
\definecolor{bgColor}{HTML}{202124}
\definecolor{textRed}{HTML}{ff968c}  %
\definecolor{textGreen}{HTML}{70cc70}  %
\definecolor{textBlue}{HTML}{8cbcff}  %
\definecolor{textCyan}{HTML}{70cccc}  %
\definecolor{textMagenta}{HTML}{d982d9}  %
\definecolor{textYellow}{HTML}{bfbf69}  %
\definecolor{textPurple}{HTML}{c58af9}
\colorlet{textHeavy}{white}
\else
\colorlet{textRed}{red!50!black}
\colorlet{textGreen}{green!35!black}
\colorlet{textBlue}{blue!40!black}
\definecolor{textPurple}{HTML}{995272}
\colorlet{textHeavy}{textColor}
\fi

\ifx\colorscheme\colorschemelight\else
\pagecolor{bgColor}
\color{textColor}
\fi

\colorlet{dimColor}{textColor!50!bgColor}

\hypersetup{
    colorlinks,
    citecolor=textGreen,
    linkcolor=textBlue,
    urlcolor=textPurple,
    pdfpagemode=UseNone,
    pdfstartview=FitW
}

\renewcommand{\v}{\tau}

\let\eps\varepsilon
\newcommand*{\defeq}{:=}
\newcommand*{\Th}{^{\textrm{th}}}

\newcommand*{\wLoG}{without loss of generality}

\newcommand*{\boolOne}{\mathbbold{1}}  %
\newcommand*{\vecOne}{\mathbf{1}}
\newcommand*{\vecZero}{\mathbf{0}}
\newcommand*{\vecE}{\mathbf{e}}

\newcommand*{\linf}[1]{\lVert #1 \rVert_{\infty}}

\DeclareMathOperator*{\E}{\mathbb{E}}
\DeclareMathOperator*{\Var}{Var}

\DeclareMathOperator*{\argmax}{argmax}

\DeclareMathOperator{\conv}{conv}
\DeclareMathOperator{\epi}{epi}
\DeclareMathOperator{\disc}{disc}
\DeclareMathOperator{\plin}{plin}

\DeclareMathOperator{\Dem}{\mathit{OPT}}
\DeclareMathOperator{\Pre}{Pre}
\DeclareMathOperator{\Beta}{Beta}
\DeclareMathOperator{\Unif}{Unif}

\newcommand*{\train}{\mathrm{train}}
\newcommand*{\test}{\mathrm{test}}

\newcommand*{\p}{\bm{p}}
\newcommand*{\x}{\bm{x}}
\newcommand*{\y}{\bm{y}}
\newcommand*{\z}{\bm{z}}
\newcommand*{\bvec}{\bm{b}}
\newcommand*{\pvec}{\bm{p}}
\newcommand*{\rvec}{\bm{r}}
\newcommand*{\xvec}{\bm{x}}
\newcommand*{\yvec}{\bm{y}}
\newcommand*{\zvec}{\bm{z}}
\newcommand*{\alphavec}{\bm{\alpha}}

\DeclareMathAlphabet{\mathbbold}{U}{bbold}{m}{n}

\newcommand*{\D}{\mathcal{D}}
\newcommand*{\I}{\mathcal{I}}
\renewcommand{\P}{\mathcal{P}}
\newcommand*{\Dcal}{\mathcal{D}}

\newcommand*{\ptild}{\widetilde{p}}

\newcommand*{\xtild}{\widetilde{x}}
\newcommand*{\ytild}{\widetilde{y}}
\newcommand*{\ztild}{\widetilde{z}}
\newcommand*{\pvectild}{\widetilde{\pvec}}

\newcommand*{\fhat}{\widehat{f}}
\newcommand*{\ghat}{\widehat{g}}
\newcommand*{\hhat}{\widehat{h}}

\newcommand*{\phat}{\widehat{p}}

\newcommand*{\xhat}{\widehat{x}}
\newcommand*{\Xhat}{\widehat{X}}
\newcommand*{\yhat}{\widehat{y}}
\newcommand*{\Yhat}{\widehat{Y}}
\newcommand*{\zhat}{\widehat{z}}
\newcommand*{\pvechat}{\widehat{\pvec}}
\newcommand*{\xvechat}{\widehat{\xvec}}
\newcommand*{\yvechat}{\widehat{\yvec}}
\newcommand*{\deltahat}{\widehat{\delta}}

\theoremstyle{plain}
\newtheorem{theorem}{Theorem}
\newtheorem*{theorem*}{Theorem}

\newtheorem{lemma}{Lemma}
\newtheorem{observation}{Observation}
\newtheorem{remark}{Remark}
\newtheorem{conjecture}{Conjecture}
\theoremstyle{definition}
\newtheorem{definition}{Definition}
\newtheorem{example}{Example}

\crefname{appsec}{Appendix}{Appendices}
\crefname{lemma}{Lemma}{Lemmas}
\crefname{theorem}{Theorem}{Theorems}
\crefname{definition}{Definition}{Definitions}
\crefname{fact}{Fact}{Facts}
\crefname{claim}{Claim}{Claims}
\crefname{proposition}{Proposition}{Propositions}

\newcommand*{\optprog}[3]{
\begin{array}{*4{>{\displaystyle}l}}
#1 & \multicolumn{3}{>{\displaystyle}l}{#2}
#3 \end{array}}

\tikzset{
  my grid style/.style={
    tikz/data visualization/all axes={
      grid,grid={
        minor={style={draw={textColor!10!bgColor}}},
        major={style={draw={textColor!10!bgColor}}},
      }
    }
  }
}

\allowdisplaybreaks
\makeatletter
\g@addto@macro{\UrlBreaks}{%
\do\/%
\do\a\do\b\do\c\do\d\do\e\do\f\do\g\do\h\do\i\do\j\do\k\do\l\do\m%
\do\n\do\o\do\p\do\q\do\r\do\s\do\t\do\u\do\v\do\w\do\x\do\y\do\z%
\do\A\do\B\do\C\do\D\do\E\do\F\do\G\do\H\do\I\do\J\do\K\do\L\do\M%
\do\N\do\O\do\P\do\Q\do\R\do\S\do\T\do\U\do\V\do\W\do\X\do\Y\do\Z%
\do\0\do\1\do\2\do\3\do\4\do\5\do\6\do\7\do\8\do\9%
}
\makeatother

\makeatletter
\@ifpackageloaded{enumitem}{%
\newenvironment*{tightemize}{\begin{itemize}[noitemsep]}{\end{itemize}}%
\newenvironment*{tightenum}{\begin{enumerate}[noitemsep]}{\end{enumerate}}%
}{%
\newenvironment*{tightemize}{\begin{itemize}}{\end{itemize}}%
\newenvironment*{tightenum}{\begin{enumerate}}{\end{enumerate}}%
}
\makeatother

\let\citep\cite
\let\citet\cite

\title{Non-Linear Pricing Restores Tractability for a Data Seller%
\thanks{J.~Garg and E.~Sharma were supported by NSF grant CCF-2334461.
B.~R.~Chaudhury and J.~Song were supported by the NSF Career Award CCF-2441580.}}

\author[1]{Bhaskar Ray Chaudhury}
\author[1]{Jugal Garg}
\author[1]{Eklavya Sharma}
\author[1]{Jiaxin Song}

\affil[1]{University of Illinois, Urbana-Champaign}
\affil[ ]{\{braycha, jugal, eklavya2, jiaxins8\}@illinois.edu}
\date{\empty}

\begin{document}

\maketitle

\begin{abstract}
We consider a data seller who designs pricing mechanisms over multiple datasets to maximize revenue from budget-constrained buyers.  The seller offers multiple datasets and assigns each a pricing function that maps the quantity purchased to a total payment. The goal is to design these pricing functions to maximize revenue, anticipating that buyers—who trade off accuracy gains against cost—choose bundles optimally subject to their budget constraints.

Prior work \texorpdfstring{\cite{chaudhury2026revenue}}{[Chaudhury et al., 2026]} studies such optimal pricing under the restriction that each dataset is assigned a linear price, and shows that computing optimal linear prices is computationally intractable. In contrast, we allow each dataset to be priced via a general function and show that this additional flexibility can not only increase the revenue but also restore tractability, yielding a surprising simultaneous improvement in economic performance and computational efficiency.

Even when pricing functions are only required to be monotone and lower-continuous, optimal pricing admits a highly structured and simple form: each pricing function is piecewise linear and convex (PLC), and the optimal solution can be computed in polynomial time. Moreover, the total number of \emph{kinks} across all pricing functions is bounded by the number of buyers. Consequently, when datasets significantly outnumber buyers, most pricing functions are effectively linear. We further empirically study the structure of optimal pricing by analyzing the number of kinks and the revenue gap between optimal nonlinear pricing and optimal linear pricing on simulations generated from a real dataset.

\end{abstract}

\section{Introduction}
\label{sec:intro}

The rise of machine learning and artificial intelligence has transformed data into one of the most valuable economic resources of our time. As these technologies have been widely adopted across industry and society over the past decade, the demand for large-scale, high-quality datasets has grown dramatically. The global big data market is projected to reach \$473.6 billion by 2030 \citep{AcumenDM}. Given the growing economic importance of data, developing a rigorous theory of data pricing is a crucial step toward fully realizing its value.

In this paper, we study a setting in which a monopolistic data seller offers $m$ datasets to $n$ budget-constrained buyers. The seller sets a price for each dataset, after which each buyer selects a utility-maximizing subset subject to their budget constraint. Our goal is to address the question ``How should each dataset be priced to maximize the seller's total revenue?"

This problem was recently studied by \citet{chaudhury2026revenue}, who consider a model in which buyers can purchase fractional amounts of datasets. Specifically, each dataset comprises a large number of records, and buyers may choose to purchase any fraction of them. The authors assume \emph{linear pricing}: if a dataset is priced at $p \in \mathbb{R}_{\ge 0}$ and a buyer purchases an $x \in [0, 1]$ fraction, then the payment is $p \cdot x$. They show that, given buyers' budgets and utilities for the datasets, computing revenue-maximizing prices is APX-hard. %

In this work, we consider more general pricing schemes beyond linear pricing. For each dataset, we specify a \emph{pricing function} $p : [0, 1] \rightarrow \mathbb{R}_{\geq 0}$ that maps the purchased fraction to a non-negative payment. That is, if a buyer purchases an $x \in [0,1]$ fraction of a dataset, she pays $p(x)$. Apart from a natural requirements that $p$ be monotone%
\footnote{monotonicity means that purchasing a larger fraction of a dataset cannot cost less.} and lower continuous, we impose no additional structural assumptions on the pricing function.

There are two main motivations for considering more general pricing schemes. First, expanding the space of allowable pricing functions can lead to higher revenue. In particular, we provide examples showing that moving from linear to general pricing can increase revenue by up to a factor of two. Second, and perhaps more surprisingly, \emph{the revenue-maximization problem becomes tractable under general pricing!}

Our approach proceeds in two steps. We first derive a structural result showing that there always exists a revenue-maximizing solution within the class of piecewise linear convex (PLC) pricing functions. We then formulate a linear program that computes the optimal PLC pricing functions.

To formally state our model and results, we first introduce the buyers' utility model over the datasets and the necessary notation.

\subsection{Our Model}
\label{sec:our-model}

We consider a set $N$ of $n$ buyers and a set $M$ of $m$ datasets. Each buyer $i \in N$ has a budget $b_i$.
Buyers aim to improve the accuracy of their machine learning models' predictions. Accordingly, data derives its value from its ability to improve these predictions.

Specifically, each buyer $i \in N$ wants to estimate an unknown parameter $\theta_i \in \mathbb{R}$, whose prior distribution is known to her. Her \emph{data bundle} $\xvec_i \defeq (x_{i,1}, x_{i,2}, \ldots, x_{i,m})$%
---the collection containing an $x_{i,j}$ fraction of each dataset $j \in M$---%
allows her to update her belief about $\theta_i$.
A buyer's value for bundle $\xvec_i$ is proportional to the \emph{accuracy gain}, defined as the reduction in uncertainty about~$\theta_i$:
\[ a_i(\xvec_i) \defeq \E\!\left[\Pre(\theta_i \mid S(\xvec_i))\right] - \Pre(\theta_i), \]

where $\Pre(\theta) \defeq 1 / \Var(\theta)$.
This formulation follows the framework of \citet{ChaudhuryGMS26,chaudhury2026revenue,ChaudhuryGSS2026},
who build on the literature on the \emph{value of data} \citet{baley2025data}.
Under Gaussian priors with additive Gaussian noise, it can be shown \citep{chaudhury2026revenue} that precision increases linearly in the data bundle. Consequently, the accuracy gain takes the form
\[ a_i(\xvec_i) = \sum_{j=1}^m \tau_{i,j}x_{i,j}. \]
Here $\tau_{i,j}$ can be interpreted as a measure of the informativeness of dataset $j$ for buyer $i$.

As is standard in revenue maximization and auction theory~\citep{myerson1981optimal,klemperer2004auctions}, we assume that a buyer's \emph{net utility} from a data bundle~$\xvec_i$ equals her value for the accuracy gain minus the total payment.
Let $p_j(\cdot)$ denote the pricing function for the $j\Th$ dataset, and define the aggregate pricing function $\pvec(\cdot)$ as $\pvec(\xvec_i) = \sum_{j=1}^m p_j(x_{i,j})$. Then the net utility of buyer $i$ from bundle~$\xvec_i$ is given by
\[ u_i(\xvec_i, \pvec) = \alpha_i \cdot a_i(\xvec_i) - \sum_{j=1}^m p_j(x_{i,j}), \]
where $\alpha_i$ captures buyer $i$'s valuation for accuracy improvements. Since $a_i$ is linear, we can assume \wLoG{} that $\alpha_i = 1$ for all $i\in N$.

\noindent\textbf{Buyers demand optimal bundles.}
Once the pricing functions are fixed, each buyer selects a data bundle that maximizes her utility subject to her budget constraint. Formally, given the aggregate pricing function $\pvec$, buyer~$i$'s \emph{demand set} $\Dem_i(\pvec)$ is defined as
\begin{align*}
\Dem_i(\pvec) \defeq \argmax_{\yvec \in [0, 1]^m:\,\pvec(\yvec) \le b_i} u_i(\yvec, \pvec).
\end{align*}
Here, $\yvec = (y_1, \dots, y_m)$ denotes the fractions of datasets purchased, and the constraint $\pvec(\yvec) \le b_i$ ensures that the total cost does not exceed the buyer's budget.

\noindent\textbf{The revenue-maximization problem.}
The seller aims to choose pricing functions that maximize total revenue, assuming that each buyer purchases a utility-maximizing bundle.
We consider a broad class of pricing functions: those that are monotone---meaning that purchasing a large fraction of a dataset cannot cost less---and lower-continuous. As illustrated in \cref{ex:no-util-max} (\cref{sec:rev_max}), if pricing functions fail to satisfy lower continuity, a utility-maximizing demand bundle for a buyer may fail to exist. Thus, lower continuity is essential for ensuring that our problem is well-defined. This class of functions is sufficiently rich to capture a wide range of economically meaningful pricing schemes.

Under these assumptions, the resulting bi-level optimization problem can be expressed as follows:
\begin{equation}
\label{pgm:rev-max}
\begin{aligned}
\max_{\pvec,\; \xvec_i \in \Dem_i(\pvec) \,\forall i}
\quad &
\sum_{i=1}^n \pvec(\xvec_i) = \sum_{i=1}^n \sum_{j=1}^m p_j(x_{i,j})
\\ \text{s.t.} \quad & p_j(\cdot) \text{ is monotone lower-continuous, for all } j \in M.
\end{aligned}
\end{equation}

We remark that general pricing functions are particularly well-suited to non-rivalrous assets such as data. In contrast, in the traditional setting of rivalrous goods, such pricing functions are less desirable, as a buyer could potentially exploit an increasing pricing scheme by duplicating into multiple identities to purchase smaller quantities at lower marginal prices. However, for non-rivalrous assets such as data, this behavior is self-defeating: duplicate buyers would obtain overlapping, redundant data records that do not further reduce uncertainty or increase utility. Overall, non-linear pricing thus emerges as a natural and practically relevant approach for selling datasets.

\subsection{Our Contributions}
\label{sec:our-contrib}

Our main contribution is to prove that in the monopolistic data pricing problem, revenue-maximizing pricing functions can be found in polynomial time, i.e., the optimization problem \eqref{pgm:rev-max} can be solved in polynomial time. To establish this result, we make several key observations along the way.

\begin{enumerate}
\item \textbf{Existence of structured optima}:
    We first show the existence of revenue-maximizing pricing functions satisfying certain desirable properties
    (see \cref{thm:struct} in \cref{sec:overview}).
    Specifically, we show that for any aggregate pricing function $\pvec = (p_1, \ldots, p_m)$,
    there exists another aggregate pricing function $\pvechat = (\phat_1, \ldots, \phat_m)$
    whose revenue is at least as large as that of $\pvec$,
    and such that for every dataset $j \in [m]$, the function $\phat_j$ is piecewise-linear and convex (PLC).
    Moreover, for each $x \in (0, 1]$ the left derivative of $\phat_j$ at $x$ belongs to the set $\{\v_{i,j}: i \in N\}$.

\item \textbf{LP formulation}:
    By the preceding argument, selecting a pricing function is equivalent to choosing the \emph{lengths} of \emph{segments} of each $\phat_j$ (see \cref{fig:plc-ex} for an example; we further elaborate on this connection in \cref{sec:sharding-lp}). This observation effectively reduces the pricing space from an infinite-dimensional
    space of functions to $m$ vectors in $\mathbb{R}_{\ge 0}^n$. As a result, we can reformulate the revenue-maximization problem as a linear program, which yields a polynomial-time algorithm.

\begin{figure*}[htb]
\centering
\begin{subfigure}[b]{0.45\textwidth}
\centering
\begin{tikzpicture}[scale=0.6,baseline={(current bounding box.center)}]
\datavisualization [
school book axes,
visualize as line={my data},
my grid style,
my data={no lines},
x axis={length={5cm},label=$x$,ticks={step=0.2}},
y axis={length={5cm},label=$\phat_j(x)$,ticks={step=5}},
]
data {
x,y
0,0
1,30
}
info {
\path [draw={textBlue},very thick]
       (visualization cs: x=0.0, y= 0)
    -- (visualization cs: x=0.4, y= 4)
    -- (visualization cs: x=0.8, y=14)
    -- (visualization cs: x=1.0, y=27);
};
\end{tikzpicture}

\caption{Pricing function}
\label{fig:plc-ex:func}
\end{subfigure}
\begin{subfigure}[b]{0.45\textwidth}
\centering
\begin{tikzpicture}[scale=0.6,baseline={(current bounding box.center)}]
\datavisualization [
school book axes,
visualize as line={my data},
my grid style,
my data={no lines},
x axis={length={5cm},label=$x$,ticks={step=0.2}},
y axis={length={5cm},label=$\phat_j'(x)$,ticks={step=10}},
]
data {
x,y
0,0
1,70
}
info {
\path [draw={textBlue},very thick]
    (visualization cs: x=0.0, y=10)
    -- (visualization cs: x=0.4, y=10)
    (visualization cs: x=0.4, y=25)
    -- (visualization cs: x=0.8, y=25)
    (visualization cs: x=0.8, y=65)
    -- (visualization cs: x=1.0, y=65);
\path [draw={textBlue},very thick,fill={bgColor}]
    (visualization cs: x=0.0, y=10) circle [radius=2pt]
    (visualization cs: x=0.4, y=25) circle [radius=2pt]
    (visualization cs: x=0.8, y=65) circle [radius=2pt];
};
\end{tikzpicture}

\caption{Left derivative}
\label{fig:plc-ex:der}
\end{subfigure}
\caption{A PLC pricing function $\phat_j$ having three \emph{segments}:
$[0, 0.4]$, $[0.4, 0.8]$, and $[0.8, 1]$.}
\label{fig:plc-ex}
\end{figure*}

\item \textbf{Sparsity of solutions}:
    Next, using properties of extreme points of the linear program, we show that a sparse optimal solution always exists: the total number of \emph{kinks} (i.e., interior points of non-differentiability) across all pricing functions is at most $n$. Consequently, when the number of datasets significantly exceeds the number of buyers,
    most datasets are priced linearly. In this sense, the revenue-maximizing pricing scheme is \emph{almost} linear.

\item \textbf{Empirical insights}:
    Finally, we empirically investigate properties of revenue-maximizing pricing functions by constructing data pricing instances from a real-world dataset. We define the \emph{linearity gap} of an instance to be the ratio of the maximum revenue achievable under general pricing to that under linear pricing. Our experiments indicate that, in realistic instances, both the linearity gap and the number of kinks are substantially smaller than those observed in specially constructed worst-case examples.
\end{enumerate}

\subsection{Related Work}
\label{sec:related-work}

Our work adopts a very general model for data valuation, defining a buyer's utility in terms of improvements in prediction accuracy. A number of studies have examined more domain-specific models of how agents value data~\citep{farboodi2025valuing, veldkamp2023valuing, farboodi2023data}. We refer the reader to~\citet{fleckenstein2023review} for a detailed review of data valuation methods.
A significant body of work has also studied mechanisms to incentivize data sharing among agents~\citep{fallah2024optimal, cummings2023optimal, fallah2022bridging, MurhekarYCLM23, acemoglu2023good, ghosh2011selling, nissim2014redrawing, chen2018optimal, varian2009economic} typically by compensating participants for their privacy loss. Relatedly, another line of work has examined mechanisms that incentivize sellers to truthfully report the variances of their datasets to a data aggregator tasked with achieving a specified variance level for prediction accuracy~\citep {cummings2015accuracy}.

A large body of work studies the problem of a monopolistic seller offering one or more datasets to buyers
\citep{AgarwalDS19, admati1986monopolistic, admati1990direct, bergemann2018design, BabaioffKP12, mehta2021sell, cai2020sell, bergemann2022economics, horner2016selling, ChawlaRTT22, ChenHK24}. However, these works differ substantially from ours in their modeling assumptions, including what buyers and sellers know about each other, whether information flows between them, whether the seller can price-discriminate, whether datasets can be modified or combined, and whether the buyers are budget-constrained. We refer the reader to \cite{pei2020survey} for a survey on data pricing.

Even when multiple data sellers are present, they often participate through a centralized data marketplace that sets prices and subsequently distributes revenue among sellers \cite{AgarwalDS19}. This effectively reduces the system to a single monopolistic seller, making the monopolistic setting relevant even in multi-seller environments.

There is also a line of work studying equilibrium behavior and auctions in data markets in the presence of externalities~\citep{AgarwalDHR20, Hossain024,ichihashi2021competing}.
In contrast to our monetary setting, prior work has investigated stable solutions in data exchange economies, where agents exchange data without monetary transfers \citep{BhaskaraGIKMS24,akrami2025theoretical,song2025existence}.

\section{Technical Overview}
\label{sec:overview}

We now outline the main ideas underlying our results,
and defer the details to \cref{sec:price-char}.
We begin by reviewing the pricing model and formally defining the revenue of a pricing function.

\subsection{The Model}
\label{sec:overview:model}

Given $m$ datasets, a pricing function is a function $\p: [0, 1]^m \to \mathbb{R}_{\ge 0}$ such that
$\p(\vecZero) = 0$ (i.e., we do not charge a buyer who did not buy anything)
and $\p$ is monotone, i.e., if $x_j \le y_j$ for all $j \in [m]$, then $\p(\xvec) \le \p(\yvec)$,
since it does not make sense to charge more for a smaller bundle.

Note that this model is more general than the one we introduced in \cref{sec:intro}.
The pricing functions there were \emph{separable}, i.e.,
we assumed $\p(\xvec) = \sum_{j=1}^m p_j(x_j)$.
We consider a more general model here because it simplifies some of our proofs.

\begin{definition}[Separable function]
A function $f: [0, 1]^m \to \mathbb{R}$ is called \emph{separable} if
there exist functions $f_1, \ldots, f_m: [0, 1] \to \mathbb{R}$ such that
$f(\xvec) = \sum_{j=1}^m f_j(x_j)$.
The functions $f_1, \ldots, f_m$ are called the \emph{components} of $f$.
\end{definition}

We now define the revenue of a pricing function.

\begin{definition}[Revenue of a pricing function]
\label{defn:revenue}
Consider a pricing function $\p: [0, 1]^m \to \mathbb{R}_{\ge 0}$,
and a buyer having valuation function $\v: [0, 1]^m \to \mathbb{R}_{\ge 0}$
and budget $b \in \mathbb{R}_{\ge 0} \cup \{\infty\}$.
\begin{enumerate}[leftmargin=0.5cm]
\item Let $F(\p \mid b) \defeq \{\xvec \in [0, 1]^m: \p(\xvec) \le b\}$ be the \emph{feasible set}
    or the \emph{affordable set} for $\p$.
    Note that $\vecZero \in F(\p \mid b)$ and $F(\p \mid \infty) = [0, 1]^m$.
\item Let $u^*(\p \mid \v, b) \defeq \sup_{\xvec \in F(\p \mid b)} (\v(\xvec) - \p(\xvec))$
    be the \emph{maximum net utility} for $\p$.
\item Let $\Dem(\p \mid \v, b) \defeq \{\xvec \in F(\p \mid b): \v(\xvec) - \p(\xvec) = u^*(\p \mid \v, b)\}$
    be the \emph{demand set} of $\p$.
\item Let $r(\p \mid \v, b) \defeq \sup_{\xvec \in \Dem(\p \mid \v, b)} \p(\xvec)$
    be the \emph{revenue} of $\p$.
\end{enumerate}
\end{definition}

We remark that the demand set in \cref{defn:revenue} may be empty, i.e.,
there may not exist a utility-maximizing bundle for the buyer.
However, whenever the pricing function is MLC (monotone lower-continuous),
the demand set is always non-empty, so the revenue is well-defined.
We defer the formal definition of MLC to
\cref{sec:real-analysis:mlc} (page~\pageref{sec:real-analysis:mlc}).
See \cref{sec:price-char:mlc} for a discussion on pricing functions being MLC.

\subsection{Structural Theorem}
\label{sec:overview:struct}

The key ingredient in our result is the \emph{structural theorem},
which states that revenue can be maximized using a piecewise-linear convex (PLC) pricing function.

\begin{restatable}[Structural Theorem]{theorem}{thmStruct}
\label{thm:struct}
Consider $m$ datasets and $n$ buyers, where each buyer $i \in [n]$ has
budget $b_i$ and a linear valuation function $\v_i$,
where $\v_i(x) = \sum_{j=1}^m \v_{i,j}x_j$ for all $x \in [0, 1]^m$.
Let $\p: [0, 1]^m \to \mathbb{R}_{\ge 0}$ be a separable MLC pricing function.
Then there exists a separable pricing function $\pvechat$ such that the following hold:
\begin{tightenum}
\item $r(\pvechat \mid \v_i, b_i) \ge r(\p \mid \v_i, b_i)$ for all $i \in [n]$.
\item Each component of $\pvechat$ is monotone, continuous, convex, and piecewise-linear.
\item For each component $\phat_j$ of $\pvechat$, at any point $x \in (0, 1]$,
    the left derivative $\phat_j'(x)$ belongs to the set $\{\v_{i,j}: i \in [n]\}$.
\end{tightenum}
\end{restatable}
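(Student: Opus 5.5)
The plan is to fix, for each buyer $i$, a demand bundle $\xvec_i\in\Dem(\p\mid\v_i,b_i)$ realizing (or approximating within $\eps$) the revenue $r(\p\mid\v_i,b_i)$. Such bundles exist because $\p$ is MLC. The proof then modifies each component $p_j$ one dataset at a time. The aim is a new component that is PLC with slopes in $T_j\defeq\{\v_{i,j}:i\in[n]\}$ and under which every buyer still pays at least $p_j(x_{i,j})$ on dataset $j$. Since the statement only compares revenues buyer by buyer, it suffices to keep $\sum_j \phat_j(\text{new demand}_{i,j}) \ge \sum_j p_j(x_{i,j})$ for each $i$.

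\textbf{Step 1 (local optimality inequalities).} For $y\le x_{i,j}$, lowering coordinate $j$ of $\xvec_i$ to $y$ keeps the bundle affordable, by monotonicity and separability. Optimality of $\xvec_i$ then gives
\[ p_j(x_{i,j})-p_j(y)\le \v_{i,j}\,(x_{i,j}-y)\qquad\text{for all }0\le y\le x_{i,j}. \]
So on each "prefix" $[0,x_{i,j}]$, the payment buyer $i$ makes for dataset $j$ grows at average rate at most $\v_{i,j}$ from any lower point. Upward deviations are blocked either by utility or by the budget $b_i$; I will record both cases.

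\textbf{Step 2 (construction).} Sort the distinct purchase levels $0<z_1<\dots<z_k$ of dataset $j$ among buyers. Define $\phat_j$ as the PLC function starting at $0$ whose slope on $(z_{\ell-1},z_\ell]$ is the smallest value in $T_j$ large enough to keep $\phat_j(z_\ell)\ge p_j(z_\ell)$. Beyond $z_k$ the slope is $\max T_j$, or $+\infty$ is emulated by slope $\max T_j$, which makes further purchase useless to everyone. I will then enforce convexity by taking running maxima of the slopes. By the Step 1 inequality, buyer $i$'s own marginal value $\v_{i,j}$ is always large enough to reach level $x_{i,j}$. Hence every slope used on $[0,x_{i,j}]$ can be chosen $\le \v_{i,j}$, so convexification does not push prices on that prefix above $\v_{i,j}$. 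This yields properties 2 and 3 of the theorem directly.

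\textbf{Step 3 (buyers' new demand).} Under a separable convex price with linear values, a buyer without a budget buys dataset $j$ exactly up to the point where the marginal price exceeds $\v_{i,j}$. That point is at least $x_{i,j}$ by construction, and she pays $\phat_j(\cdot)\ge p_j(x_{i,j})$. The main obstacle is the budget: the buyer may now be budget-constrained. She then solves a fractional-knapsack problem, buying cheapest value-per-dollar segments first until the budget binds. The first thing to try is the following. If the budget binds, she spends all of $b_i\ge \p(\xvec_i)$, so revenue is at least the old one. If it does not bind, she buys each dataset up to the slope threshold, and the per-coordinate bound from Step 2 applies. The delicate point is whether, when the budget binds, some tie-breaking over demand bundles gives lower payment. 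Since $r$ is defined as a supremum over the demand set, we may choose the revenue-maximizing tie-break, which resolves this.
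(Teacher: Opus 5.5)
Step 1 is correct, but Step 2 rests on a claim that is false. You require $\phat_j(z_\ell)\ge p_j(z_\ell)$ at every purchase level, and then assert that, by Step 1, every slope on $[0,x_{i,j}]$ can be taken $\le\tau_{i,j}$. Step 1 only bounds secants of $p_j$ over intervals $[y,x_{i,j}]$ that end at buyer $i$'s own level. It says nothing about the secant over $[0,z_\ell]$ for a smaller level $z_\ell$ bought by a different (high-value, budget-constrained) buyer, and that buyer is the one who dictates the slopes on the prefix.

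Here is a concrete instance. Take one dataset with $p(x)=\min(1,10x)$, buyer $A$ with $\tau_A=2$ and $b_A=2$, and buyer $B$ with $\tau_B=20$ and $b_B=0.5$. Then $A$ demands $x_A=1$ and pays $1$, while $B$ demands $x_B=0.05$ and pays $0.5$. Now take any convex $\phat$ with $\phat(0)=0$, left derivatives in $\{2,20\}$, and $\phat(0.05)\ge 0.5$. Its average slope on $[0,0.05]$ is at least $10$, so its left derivative is $20$ on all of $(0.05,1]$. Buyer $A$ therefore stops before $0.05$ and pays at most $1/18<1$. So no function satisfying properties 2--3 together with your level-wise invariant can satisfy property 1, whether or not you apply the running-maximum step. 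The invariant ``each buyer still pays at least $p_j(x_{i,j})$ on dataset $j$'' is the wrong thing to preserve, and Step 3 inherits the failure.

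The theorem does hold in this example. The price $\phat(x)=2x$ yields revenue $2$ from $A$ and $0.5$ from $B$, yet $\phat(0.05)=0.1<p(0.05)$. Buyer $B$'s price at her old level drops, and her revenue is recovered only because she now buys more and exhausts her budget. That is the missing idea.

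The paper lowers each $p_j$ to its convex envelope, so prices may decrease at old demand levels. It then compares each buyer's total payment through budget exhaustion, which for continuous convex prices gives $r(\cdot\mid\tau_i,b_i)=\min\bigl(b_i,r(\cdot\mid\tau_i,\infty)\bigr)$. Under $\conv(\p)$, one of two things happens. Either the unbudgeted revenue already reaches $b_i$, or every old demand bundle stays in the new demand set; the latter holds because an unbudgeted maximizer of $\tau_i-\conv(\p)$ can be chosen at a point where $\p=\conv(\p)$.

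Only once prices are convex does a per-dataset comparison become valid. At that stage, rounding slopes up into $T_j$ can only move the threshold point $\ell(\cdot\mid\tau_{i,j})$ to the right and raise the payment there. Separability then lets you sum these comparisons over datasets.
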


To prove this, we begin with a separable MLC pricing function
and show how it can be transformed into a PLC function without any loss of revenue.
We do this transformation in two steps: \emph{convexification} and \emph{piecewise-linearization}.

\subsubsection{Step 1: Convexification}
\label{sec:overview:convexify}

\begin{definition}[Convex envelope of a function \citep{rockafellar1997convex05}]
\label{defn:convex-env}
For a function $f: [0, 1]^m \to \mathbb{R}$, define
\[ U_f(\xvec) \defeq \left\{y \in \mathbb{R}: \begin{array}{ll}
    \exists k \in \mathbb{N}, \exists \lambda \in \Delta_k, \exists \xvec^{(1)}, \ldots, \xvec^{(k)} \in [0, 1]^m
    \\ \text{such that } \xvec = \sum_{i=1}^k \lambda_i \xvec^{(i)} \text{ and } y \ge \sum_{i=1}^k \lambda_i f(\xvec^{(i)})
    \end{array} \right\}, \]
and $\fhat(\xvec) \defeq \inf(U_f(\xvec))$ for all $\xvec \in [0, 1]^m$.
Then $\fhat: [0, 1]^m \to \mathbb{R}_{\ge 0}$ is called
the \emph{convex hull} or \emph{convex envelope} of $f$, and is denoted by $\conv(f)$.
\end{definition}

$\conv(f)$ is, intuitively, the \emph{largest} convex function $\fhat$ such that $\fhat(x) \le f(x)$ for all $x$.
We show that replacing a pricing function $\pvec$
by its convex envelope $\pvectild$ does not reduce the revenue.
See \cref{fig:convex-hull} for an example of a pricing function and its convex envelope.

\begin{restatable}[Convexification]{theorem}{thmConvexEnvDominates}
\label{thm:convex-env-dominates}
Consider a buyer with linear valuation function $\v: [0, 1]^m \to \mathbb{R}_{\ge 0}$ and budget $b$.
Let $\p$ be an MLC pricing function. Then
\begin{tightenum}
\item $r(\conv(p) \mid \v, b) \ge r(\p \mid \v, b)$.
\item $\conv(p)$ is monotone, continuous, and convex.
\end{tightenum}
\end{restatable}

\begin{figure}[htb]
\centering
\begin{tikzpicture}[scale=0.6]
\datavisualization [
school book axes,
visualize as line={my data},
my grid style,
my data={no lines},
x axis={length={6cm},label=$x$,ticks={step=0.2,minor steps between steps=1}},
y axis={length={6cm},label={price},ticks={step=1,minor steps between steps=1}},
]
data {
x,y
0,0
1,5
}
info {
\path[draw={red!75!textHeavy},thick] (visualization cs: x=0, y=0)
    to[out=10, in=225] (visualization cs: x=0.4, y=1)
    to[out=45, in=225] (visualization cs: x=0.5, y=2.5)
    to[out=45, in=225] (visualization cs: x=0.8, y=3)
    to[out=45, in=260] (visualization cs: x=1, y=5);
\path[draw={blue!75!textHeavy},thick,dashed] (visualization cs: x=0, y=0)
    to[out=10, in=225] (visualization cs: x=0.4, y=1)
    -- (visualization cs: x=0.8, y=3)
    to[out=45, in=260] (visualization cs: x=1, y=5);
\path[fill={textColor}] (visualization cs: x=0.4, y=1) circle [radius=1.5pt]
    (visualization cs: x=0.8, y=3) circle [radius=1.5pt];
};
\end{tikzpicture}

\caption[A pricing function and its convex hull]{%
A pricing function $\textcolor{red!75!textHeavy}{\pvec}$ (solid red)
and its convex hull $\textcolor{blue!75!textHeavy}{\pvectild}$ (dashed blue).
Note that $\pvec(x) \neq \pvectild(x)$ for $x \in (0.4, 0.8)$,
and $\pvectild$ is linear in that interval.}
\label{fig:convex-hull}
\end{figure}

We formally prove \cref{thm:convex-env-dominates} in \cref{sec:price-char:convex-hull}.
To gain some insight into why \cref{thm:convex-env-dominates} holds, consider a buyer with infinite budget.
Now, the key observation is that \emph{the buyer's utility-maximizing bundle
cannot lie in a region where $\pvec$ strictly exceeds its convex envelope $\pvectild$,
that is, where $\pvec(\xvec) > \pvectild(\xvec)$}.
Intuitively, such regions correspond to the local non-convex portions of the pricing function, which can be ironed out.
(We compare this with popular ironing procedures in \cref{sec:ironing-review}.)
This observation follows from the following two ideas:
\begin{enumerate}
\item $\conv(\p - \v) = \conv(\p) - \v$, i.e., taking the convex envelope of a pricing function
    is equivalent to taking the concave envelope of the buyer's net utility.
    This follows easily from the definition of $\conv$ and the linearity of $\v$.
\item Any function $f$ is minimized at a point $\xvec$ if and only if
    $f(\xvec) = \conv(f)(\xvec)$ and $\xvec$ minimizes $\conv(f)$.
    Applying this to $\p - \v$ tells us that a bundle $\xvec$ is utility-maximizing for price $\p$
    iff it is utility maximizing for price $\pvectild$ and satisfies
    $(\p - \v)(\xvec) = \conv(\p - \v)(x) = (\pvectild - \v)(\xvec)$,
    which implies $\p(\xvec) = \pvectild(\xvec)$.
\end{enumerate}

Thus, $\Dem(\p \mid \v, \infty)$ is a subset of $\Dem(\pvectild \mid \v, \infty)$,
so $r(\p \mid \v, \infty) \le r(\pvectild \mid \v, \infty)$.
In \cref{sec:price-char:convex-hull}, we extend this argument to the setting where
the buyer has only a finite budget, through some additional non-trivial observations and insights.

\subsubsection{Step 2: Piecewise-Linearization}
\label{sec:overview:plin}

We now describe the second step: piecewise-linearization.
Unlike step 1 (convexification), this step requires the pricing function to be univariate,
i.e., we assume there is a single dataset.

For any function $f: [0, 1] \to \mathbb{R}$, let $f'(x)$ denote the left derivative of $f$ at any point $x \in (0, 1]$.
For any finite set $S \subseteq \mathbb{R}$, let $\disc(f, S)$ be another function $\fhat$
such that $\fhat(0) = f(0)$, and for all $x \in (0, 1]$, we have
$\fhat'(x) = \min(\{\alpha \in S: f'(x) \le \alpha\})$, i.e.,
$\fhat'(x)$ is obtained by rounding up $f'(x)$ to the smallest number in $S$.
Then $\fhat$ is called the \emph{discretization} or \emph{piecewise-linearization} of $f$.
See \cref{fig:plin} for an example of piecewise-linearizing a pricing function.

\begin{figure*}[htb]
\centering
\begin{tikzpicture}[scale=0.55,baseline={(current bounding box.center)}]
\datavisualization [
school book axes,
visualize as line={my data},
my grid style,
my data={no lines},
x axis={length={5cm},label=$x$,ticks={step=0.2}},
y axis={length={5cm},label=$p'(x)$,ticks={step=1}},
]
data {
x,y
0,0
1,6
}
info {
\path [draw={textColor!50!bgColor},dashed]
    (visualization cs: x=0.0, y=2)
    -- (visualization cs: x=1.0, y=2)
    (visualization cs: x=0.0, y=3)
    -- (visualization cs: x=1.0, y=3)
    (visualization cs: x=0.0, y=5)
    -- (visualization cs: x=1.0, y=5);
\path [draw={textBlue},very thick]
    (visualization cs: x=0.0, y=1)
    -- (visualization cs: x=0.2, y=1)
    (visualization cs: x=0.2, y=2)
    -- (visualization cs: x=0.4, y=3)
    -- (visualization cs: x=0.6, y=3)
    (visualization cs: x=0.6, y=4)
    -- (visualization cs: x=1.0, y=6);
\path [draw={textBlue},very thick,fill={bgColor}]
    (visualization cs: x=0.0, y=1) circle [radius=2pt]
    (visualization cs: x=0.2, y=2) circle [radius=2pt]
    (visualization cs: x=0.6, y=4) circle [radius=2pt];
};
\end{tikzpicture}

\qquad
\tikz[baseline={(current bounding box.center)}]
    \node at (0,0) {\huge$\longrightarrow$};
\quad\qquad
\begin{tikzpicture}[scale=0.55,baseline={(current bounding box.center)}]
\datavisualization [
school book axes,
visualize as line={my data},
my grid style,
my data={no lines},
x axis={length={5cm},label=$x$,ticks={step=0.2}},
y axis={length={5cm},label=$\phat'(x)$,ticks={step=1}},
]
data {
x,y
0,0
1,6
}
info {
\path [draw={textColor!50!bgColor},dashed]
    (visualization cs: x=0.0, y=2)
    -- (visualization cs: x=1.0, y=2)
    (visualization cs: x=0.0, y=3)
    -- (visualization cs: x=1.0, y=3)
    (visualization cs: x=0.0, y=5)
    -- (visualization cs: x=1.0, y=5);
\path [draw={textBlue},very thick]
    (visualization cs: x=0.0, y=2)
    -- (visualization cs: x=0.2, y=2)
    (visualization cs: x=0.2, y=3)
    -- (visualization cs: x=0.6, y=3)
    (visualization cs: x=0.6, y=5)
    -- (visualization cs: x=1.0, y=5);
\path [draw={textBlue},very thick,fill={bgColor}]
    (visualization cs: x=0.0, y=2) circle [radius=2pt]
    (visualization cs: x=0.2, y=3) circle [radius=2pt]
    (visualization cs: x=0.6, y=5) circle [radius=2pt];
};
\end{tikzpicture}

\caption[Piecewise-linearizing a pricing function]{%
The first plot shows the left derivative of a pricing function $p$.
The second plot shows the left derivative of the piecewise-linearized
$\phat \defeq \disc(p, \{2, 3, 5\})$.}
\label{fig:plin}
\end{figure*}

\begin{restatable}[Piecewise-linearization]{theorem}{thmDiscRevenue}
\label{thm:disc-revenue}
Consider a single dataset, and a buyer with linear valuation function $\v: [0, 1] \to \mathbb{R}_{\ge 0}$ and budget $b$.
Let $p: [0, 1] \to \mathbb{R}_{\ge 0}$ be a monotone, continuous, and convex pricing function.
Let $S \subset \mathbb{R}$ be a finite set such that $\v(1) \in S$. Let $\phat \defeq \disc(p, S)$. Then
\begin{tightenum}
\item $r(\phat \mid \v, b) \ge r(p \mid \v, b)$.
\item $\phat$ is monotone, continuous, convex, and piecewise-linear.
\end{tightenum}
\end{restatable}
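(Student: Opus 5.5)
The plan is to verify part~2 directly from the construction and then prove part~1 by locating the revenue-maximizing demanded bundle under $p$ and under $\phat$ explicitly. Write $\v(x) = \tau x$ with $\tau = \v(1) \in S$, and assume, as for pricing functions, $p(0) = 0$.

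\textbf{Part 2.} Since $p$ is convex, its left derivative $p'$ is nondecreasing on $(0,1]$, and it is nonnegative because $p$ is monotone. Rounding up to the smallest element of $S$ is a nondecreasing map, so $\phat'$ is nondecreasing. Since $\phat$ is the integral of $\phat'$, this makes $\phat$ convex, continuous and monotone. Since $\phat'$ takes finitely many values and is monotone, it is a step function with finitely many steps, so $\phat$ is piecewise linear.

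One caveat: if $p'(x) > \max S$, the rounding is undefined. I would fix the convention so that $\phat'(x) \ge p'(x)$ still holds and $\phat'$ stays nondecreasing, for instance by leaving $\phat' = p'$ there or rounding to some fixed large value. Only two facts about such $x$ matter below: they satisfy $p'(x) > \tau$, and the buyer never purchases there. Finally, since $\phat(0) = p(0)$ and $\phat' \ge p'$ pointwise, integration gives $\phat \ge p$ on $[0,1]$.

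\textbf{Part 1, preliminary facts.} Because $p$ is convex and nondecreasing, the budget-feasible set is an interval $[0, x_b]$ with $x_b = \max\{x : p(x) \le b\}$, using continuity for the max. Likewise the feasible set under $\phat$ is $[0, \xhat_b]$, and $\xhat_b \le x_b$ because $\phat \ge p$. Net utility $\tau x - p(x)$ is concave. Let $x_\tau \defeq \sup\{x \in [0,1] : p'(x) \le \tau\}$, with $x_\tau = 0$ if the set is empty. Then $\tau x - p(x)$ is nondecreasing on $[0, x_\tau]$ and strictly decreasing afterwards.

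The key observation is that, because $\tau \in S$, on $(0, x_\tau]$ we have $p' \le \tau$ and hence $\phat' \le \tau$. Beyond $x_\tau$ we have $p' > \tau$ and hence $\phat' > \tau$. So $\tau x - \phat(x)$ is also nondecreasing on $[0, x_\tau]$ and strictly decreasing after it: the rounding never moves the point where utility peaks.

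\textbf{Revenue under $p$.} Since $p$ is nondecreasing, the revenue is $p$ evaluated at the largest demanded point. Under $p$ that point is $x^* = \min(x_\tau, x_b)$, so $r(p \mid \v, b) = p(x^*)$.

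\textbf{Revenue under $\phat$.} Let $\xhat^* = \min(x_\tau, \xhat_b)$. The monotonicity structure just established shows $\xhat^*$ is in the demand set for $\phat$. I would then compare $\phat(\xhat^*)$ with $p(x^*)$ in two cases.

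\emph{Case $\phat(x_\tau) \le b$.} Then $x_\tau \le \xhat_b$, so $\xhat^* = x_\tau$. This gives $\phat(x_\tau) \ge p(x_\tau) \ge p(x^*)$.

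\emph{Case $\phat(x_\tau) > b$.} Then $\xhat_b < x_\tau$ and, by continuity, $\phat(\xhat_b) = b$. This gives revenue $b \ge p(x^*)$, since $x^*$ is feasible for $p$.

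In both cases $r(\phat \mid \v, b) \ge \phat(\xhat^*) \ge p(x^*) = r(p \mid \v, b)$. The case $b = \infty$ is covered by the first case.

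\textbf{Main obstacle.} I expect the main difficulty to be bookkeeping at boundary points rather than any conceptual step. The delicate points are: left versus right derivatives at $x_\tau$; flat pieces where utility is constant, so the demand set is an interval and one must take its right endpoint; the undefined-rounding convention above; and the degenerate cases $x_\tau = 0$ and $x_\tau = 1$.
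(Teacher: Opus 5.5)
Your proof is correct, but it follows a different route from the paper's. The paper does not analyze the one-dimensional budgeted problem directly. It uses the budget exhaustion theorem (\cref{thm:budget-exhaustion}) to write $r(\cdot\mid\tau,b)=\min(b,r(\cdot\mid\tau,\infty))$ for both $p$ and $\phat$. It then identifies the unbudgeted revenue as the price at $\ell(\cdot\mid\tau)$ (\cref{thm:convex-price-rev}). Finally, \cref{thm:disc-dominates} compares segment by segment to show $\ell(\phat\mid\beta)\ge\ell(p\mid\beta)$ and $\phat(\ell(\phat\mid\beta))\ge p(\ell(p\mid\beta))$ for every $\beta\in S$.

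You instead exploit one-dimensionality. Both feasible sets are intervals, and both net utilities are unimodal with a peak at $x_\tau$. Your two cases are exactly the two branches of $\min(b,\cdot)$, and Case~2 is the intermediate-value step of budget exhaustion specialized to an interval. Your version is self-contained and makes transparent why $\tau\in S$ is needed: rounding cannot flip the sign of $\phat'-\tau$. The paper's version is modular. It builds the reduction to infinite budget and the $\ell$-characterization, which it needs anyway in \cref{thm:convex-disc-dom}. There, with $m$ datasets, the feasible set is no longer an interval, and revenues are summed across components at infinite budget before the budget is reinstated.

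One thing to fix is the convention for slopes above $\max S$. The $\phat$ in the statement is $\disc(p,S)$ from \cref{defn:disc}, which handles the undefined case by rounding \emph{down}. On $(\ell(p\mid\max S),1]$ it has slope $\max S$, even though $p'>\max S$ there. Neither of your substitutes is this function. Leaving $\phat'=p'$ breaks Part~2. No fixed finite slope dominates $p'$ when $p'$ is unbounded, e.g.\ $p(x)=1-\sqrt{1-x}$.

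With the paper's convention, suppose $\tau=\max S$ and $x_\tau<1$. Then your claim that $\phat'>\tau$ beyond $x_\tau$ is false, and $\phat\ge p$ can fail near $1$. The slope on $(x_\tau,1]$ is exactly $\tau$, the net utility is flat there, and $\ell(\phat\mid\tau)=1>x_\tau$. This is why \cref{thm:disc-dominates} only asserts $x^*\le\xhat$.

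Your argument still goes through if you use only the weak facts that do hold: $\phat'\ge p'$ on $(0,x_\tau]$ and $\phat'\ge\tau$ on $(x_\tau,1]$.
\begin{itemize}
\item These still make $\min(x_\tau,\xhat_b)$ a demanded bundle. It need not be the largest one, but you only need $r(\phat\mid\tau,b)\ge\phat(\xhat^*)$.
\item Case~1 needs $\phat\ge p$ only on $[0,x_\tau]$.
\item The claim $\xhat_b\le x_b$ is never used and can be dropped.
\end{itemize}
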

\begin{proof}[Proof sketch]
We defer the full proof to \cref{sec:price-char:disc}.

Since $p$ is convex, the buyer purchases data records from the dataset only up to the rightmost point where
her marginal utility equals the marginal price. Beyond this point, convexity ensures that
the marginal price continues to rise, so any additional purchase can only reduce the buyer's net utility.

Since $\v(1) \in S$, we get that $p'(x) - \v'(x)$ and $\phat'(x) - \v'(x)$ do not have opposite signs for any $x \in (0, 1]$.
Thus, the buyer will buy the same amount for both pricing functions $p$ and $\phat$.
Since $\phat(x) \ge p(x)$ for all $x \in [0, 1]$, we get $r(\phat \mid \v, b) = r(\phat \mid \v, b)$.
\end{proof}

Now suppose we have multiple buyers, and each buyer $i \in [n]$ has
valuation function $\v_i: [0, 1] \to \mathbb{R}$ and budget $b_i$.
If we change the dataset's price to $\phat \defeq \disc(p, \{\v_i(1): i \in [n]\})$,
then \cref{thm:disc-revenue} tells us that the total revenue increases and $\phat$ is PLC.

\subsubsection{Composing the Transformations}

We would like to combine the two steps---convexification and piecewise-linearization---%
to prove the structural theorem (\cref{thm:struct}).
This is not straightforward when there are multiple datasets,
since the piecewise-linearization result, as stated above, only works for a single dataset.

To extend it to multiple datasets, we utilize the separability of the pricing function.
For any separable MLC pricing function $p: [0, 1]^m \to \mathbb{R}_{\ge 0}$,
let $\phat_j \defeq \disc(\conv(p_j), \{\v_{i,j}: i \in [n]\})$.
Then the separable pricing function $\phat$ having components $\phat_1, \ldots, \phat_m$
is what we need in \cref{thm:struct}.
This works because of several important properties of separable functions
that we prove in \cref{sec:price-char:sep}.
Finally, we prove the structural theorem (\cref{thm:struct}) in \cref{sec:price-char:together}.

\subsection{Finding the Optimal PLC Pricing Using an LP}
\label{sec:overview:lp}

Finally, we demonstrate that within the class of separable PLC functions,
an optimal pricing scheme can be determined via a linear program (LP). We now outline the construction.

For each dataset $j \in [m]$, let $\sigma_j$ be a permutation of the buyers $[n]$ such that
for each $k \in [n]$, $\sigma_i(k)$ is the buyer with the $k\Th$ smallest valuation for dataset $j$,
i.e., $\v_{\sigma_j(1),j} \le \ldots \le \v_{\sigma_j(n),j}$.
For each dataset $j \in [m]$, let $z_{\ell,j}$ represent the fraction of data records priced at slope $\v_{\ell,j}$.
Thus, the function $p_j(\cdot)$ increases with slope $\v_{\sigma_j(1),j}$ up to $z_{1,j}$,
then with slope $\v_{\sigma_j(2),j}$ from $z_{1,j}$ to $z_{1,j} + z_{2,j}$, and so on.

Intuitively, each buyer $i$ is willing to purchase records as long as it still has budget to spend
and the \emph{marginal utility} from an additional record exceeds its \emph{marginal price}.
Thus, the total expenditure of buyer $i$ is the minimum of his budget and the total price of
all data that yields a non-negative net utility, i.e.,
\begin{equation}
\label{eq:lp-rev}
r(p \mid \v_i, b_i) = \min\Bigl(b_i,\; \sum_{j=1}^m \sum_{t=1}^n \v_{t,j}\boolOne(\v_{t,j} \le \v_{i,j}) z_{t, j} \Bigr),
\end{equation}
where $\boolOne(P)$ is 1 if proposition $P$ is true and 0 otherwise.
One can prove this formally using \cref{thm:budget-exhaustion} (budget exhaustion) from \cref{sec:price-char:budget}
and \cref{thm:revenue-sep} from \cref{sec:price-char:sep}.

One can interpret a PLC-priced dataset as being broken into \emph{shards}, where each shard is linearly priced.
By this interpretation, each buyer is solving a fractional knapsack problem over the shards.
\Cref{sec:sharding-lp} describes this connection in more detail.

Now, observe that the total expenditure by buyers can be maximized through the following LP.
\begin{equation}
\label{eqn:lp}
\optprog{\max_{\substack{\zvec \in \mathbb{R}^{n \times m} \\ \rvec \in \mathbb{R}^n}}}{\sum_{i=1}^n r_i}{%
\\ \text{ where } & r_i \le b_i & \forall i \in [n] & \text{(budget constraint)}
\\ & r_i \le \sum_{j=1}^m \sum_{t=1}^n \v_{t,j}\boolOne(\v_{t,j} \le \v_{i,j}) z_{t, j}
    & \forall i \in [n] & \text{(by \eqref{eq:lp-rev})}
\\ & \sum_{t=1}^n z_{t,j} = 1 & \forall j \in [m] & \text{(shard sizes sum to 1)}
\\ & z_{t,j} \ge 0 & \forall t \in [n], \forall j \in [m] & \text{(non-negativity)}
}
\end{equation}

Thus, we can find the revenue-maximizing separable pricing function in polynomial time by solving LP \eqref{eqn:lp}.

\paragraph{Almost linear structure of the optimal PLC function.}
In addition to allowing revenue to be maximized in polynomial time,
the LP reveals a striking geometric property of the optimal PLC pricing functions%
---namely, that the optimal solution exhibits an \emph{almost linear} structure.

\begin{theorem}
\label{thm:lp-bfs}
Let $(\rvec, \zvec)$ be an optimal basic feasible solution to LP \eqref{eqn:lp}.
Then at most $m+n$ entries in $\zvec$ are positive.
\end{theorem}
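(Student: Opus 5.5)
The plan is a standard counting argument on the rank of the constraint system at a vertex. First, put LP \eqref{eqn:lp} into a fixed form. The variables are $\rvec \in \mathbb{R}^n$ and $\zvec \in \mathbb{R}^{n\times m}$, for $n + nm$ variables in total. The constraints are:
\begin{tightenum}
\item the $n$ budget inequalities $r_i \le b_i$;
\item the $n$ revenue inequalities $r_i - \sum_{j}\sum_t \v_{t,j}\boolOne(\v_{t,j}\le \v_{i,j}) z_{t,j} \le 0$;
\item the $m$ equalities $\sum_t z_{t,j} = 1$;
\item the $nm$ nonnegativity constraints $z_{t,j}\ge 0$.
\end{tightenum}
The variables $r_i$ are free. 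A basic feasible solution is a feasible point at which some $n+nm$ of the tight constraints have linearly independent normals; equivalently, it is the unique solution of those tight constraints.

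The key step is to count how many of these linearly independent tight constraints can be nonnegativity constraints. Fix such a basis of $n+nm$ linearly independent tight constraints. It consists of:
\begin{tightenum}
\item at most $m$ of the equality constraints;
\item for each $i$, at most two constraints involving $r_i$, namely the budget constraint and the revenue constraint of buyer $i$;
\item the remaining constraints, which must come from tight nonnegativity constraints $z_{t,j}=0$.
\end{tightenum}
A naive count gives at most $m+2n$ non-nonnegativity constraints. That leaves at least $nm - m - n$ nonnegativity constraints tight, hence at most $m+n$ positive entries of $\zvec$, which is exactly the claimed bound.

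The subtle point, and the main obstacle, is that the naive count $2n+m$ must be absorbed correctly. A cleaner route is to eliminate $\rvec$ first. At an optimal solution, each $r_i$ equals $\min(b_i, \text{revenue expression}_i)$, since otherwise we could raise $r_i$. So for each $i$, at least one of the two constraints on $r_i$ is tight.

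Now split buyers into two classes. Let $B$ be the set of buyers whose budget constraint is tight, and let $R$ be the set whose revenue constraint is tight; $B\cup R=[n]$. The $n$ free coordinates $r_i$ require $n$ basis constraints to pin them down. In the basis, count the constraints that involve $r$: they number at most $|B|+|R|$, and every one of them must be used up determining the $n$ coordinates $r_i$, except for the at most $|B\cap R|\le n$ surplus ones, which constrain $\zvec$.

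More precisely, I would project onto $\zvec$. Consider the face of the polytope in $\zvec$-space defined by the equalities, by the conditions "revenue expression$_i \ge b_i$" for $i\in B\cap R$ and "revenue expression$_i = r_i$" for the rest, and by the tight nonnegativity constraints. One checks that $\zvec$ must be a vertex of the polytope defined in $\zvec$-space by at most $m$ equalities plus at most $n$ additional tight linear constraints coming from buyers. Otherwise a direction $\mathbf d$ in $\zvec$-space could be extended, by adjusting each $r_i$ with $i\in R\setminus B$ along its revenue expression, to a direction in the null space of the basis. That contradicts the basis having full rank.

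A vertex in $\mathbb{R}^{nm}$ needs $nm$ tight linearly independent constraints. At most $m+n$ of these are non-nonnegativity constraints, so at least $nm-m-n$ coordinates of $\zvec$ are zero, and at most $m+n$ entries are positive.

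I will write the argument directly as the rank count: the basis has $n+nm$ rows, at most $m+2n$ of which are not of the form $z_{t,j}\ge 0$. The $n$ columns for $\rvec$ force at least $n$ of those rows to be "spent" on $\rvec$, and I would justify this by row-reducing to eliminate the $r_i$ columns. This gives at most $m+n$ effective rows on $\zvec$ beyond nonnegativity, which yields the bound.
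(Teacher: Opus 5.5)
Your first (``naive'') count is exactly the paper's proof. A basic feasible solution has at least $mn+n$ linearly independent tight constraints, at most $m+2n$ of which are not nonnegativity constraints, so at least $mn-m-n$ entries of $\zvec$ vanish and at most $m+n$ are positive. That count already gives the claimed bound, so there is no ``subtle point'' left to absorb. The projection and row-reduction detour is unnecessary: it only recovers the paper's intermediate bound $m+(k-n)$, where $k$ is the number of tight constraints involving $\rvec$.
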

\begin{proof}
LP \eqref{eqn:lp} has a bounded feasible region, since $0 \le r_i \le b_i$ for all $i \in [n]$,
and $0 \le z_{i,j} \le 1$ for all $i \in [n]$ and $j \in [m]$.
In any bounded linear program, there always exists an optimal solution that is \emph{basic feasible}.
In such a solution, the number of tight constraints (i.e., constraints satisfied with equality)
is at least the number of variables \citep{bertsimas1997introduction,bazaraa2010linear}.

Thus, in LP \eqref{eqn:lp}, at least $mn + n$ constraints are tight.
Suppose $k$ of the $2n$ constraints of the form $r_i \le \ldots$ are tight.
Suppose $d$ entries in $\zvec$ are positive.
Then the number of tight constraints is $k + m + (mn - d)$.
So, $k + m + mn - d \ge mn + n$, which implies $d \le (k + m + mn) - (mn + n) = m + (k-n) \le m + n$.
\end{proof}

Thus, in an optimal basic feasible solution, the total number of \emph{kinks}
(points of non-differentiability) in the PLC pricing functions is at most $n$.
Thus, $\max(0, m-n)$ of the datasets are linearly priced.
When $m \gg n$, nearly all datasets are priced \emph{linearly}.

We now present an example with $n-1$ kinks, showing that our upper bound on the number of kinks is nearly tight.

\begin{restatable}{lemma}{thmMaxKinks}
\label{thm:max-kinks}
Consider a data pricing instance with $n$ buyers and a single dataset.
Each buyer $i \in [n]$ values the dataset at $i$,
and her budget is $b_i \defeq \sum_{j=1}^i j/n = i(i+1)/2n$.
Let $z^*_{i,1} \defeq 1/n$ for all $i \in [n]$.
Then $(\bvec, \zvec^*)$ is the unique optimal solution to LP \eqref{eqn:lp}.
\end{restatable}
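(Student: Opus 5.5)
We work directly with LP \eqref{eqn:lp} for $m=1$, writing $z_t \defeq z_{t,1}$. Here $\v_{t,1}=t$, so the $i$-th revenue constraint reads $r_i \le \sum_{t=1}^{i} t\, z_t$. Set $S_i \defeq \sum_{t\le i} t z_t$. The objective equals $\sum_i \min(b_i, S_i)$ at optimum. Since $r_i \le b_i$ for every $i$, the objective is at most $\sum_i b_i$. Hence it suffices to show two things: (a) $(\bvec,\zvec^*)$ is feasible, so it attains this upper bound; (b) any feasible $(\rvec,\zvec)$ attaining $\sum_i b_i$ must equal $(\bvec,\zvec^*)$.

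For (a), plug in $z_t = 1/n$. This gives $S_i = \sum_{t\le i} t/n = i(i+1)/2n = b_i$. So $r_i = b_i$ satisfies both constraints, and $\sum_t z_t = 1$ holds.

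For (b), optimality forces $r_i = b_i$ for all $i$, and therefore $S_i \ge b_i$ for every $i$. We show by induction on $i$ that $z_t = 1/n$ for all $t$. The plan is to combine the lower bounds $S_i \ge b_i$ with the normalization $\sum_t z_t = 1$ via Abel summation. Let $Z_i \defeq \sum_{t\le i} z_t$. Then
\[ S_i = \sum_{t\le i} t z_t = i Z_i - \sum_{k=1}^{i-1} Z_k. \]

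This identity alone does not pin down $z$; the key additional input is the constraint $\sum_t z_t = 1$. Consider $i=n$: $S_n = \sum_t t z_t \ge b_n = (n+1)/2$ with $\sum_t z_t = 1$. That is only a lower bound on the mean of the distribution $z$, so we need the interaction with the smaller constraints.

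A cleaner route is to take a suitable nonnegative combination of the constraints $S_i \ge b_i$ (that is, use LP duality). We look for multipliers $\lambda_i \ge 0$ and a free $\mu$ with $\sum_i \lambda_i\, t\,\boolOne(t\le i) = \mu$ for every $t$. Solving gives $\lambda_n = \mu/n$ and $\lambda_i = \mu\,(1/i - 1/(i+1)) > 0$ for $i<n$. With $\mu = 1$ the combination reads
\[ \sum_t z_t \ge \sum_i \lambda_i b_i, \qquad \text{with } \sum_t z_t = 1. \]
Next we compute the right-hand side:
\[ \sum_i \lambda_i b_i = \sum_{i<n} \frac{1}{i(i+1)}\cdot\frac{i(i+1)}{2n} + \frac{1}{n}\cdot\frac{n+1}{2} = \frac{n-1}{2n} + \frac{n+1}{2n} = 1. \]
So the weighted inequality is tight. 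Since every $\lambda_i > 0$, each individual constraint must be tight: $S_i = b_i$ for all $i$. Then $S_i - S_{i-1} = i z_i = b_i - b_{i-1} = i/n$, which gives $z_i = 1/n$ and hence uniqueness.

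The main obstacle is finding these dual multipliers; this is the step we would verify carefully. A sign or indexing slip there would break the tightness argument.
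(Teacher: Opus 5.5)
Your proof is correct and is essentially the paper's. You use the same multipliers, $1/(i(i+1))$ for $i<n$ and $1/n$ for $i=n$, to collapse the constraints $\sum_{t\le i} t z_{t,1} \ge b_i$ into $\sum_t z_{t,1} \ge 1$; then $\sum_t z_{t,1}=1$ and the strict positivity of the weights force every constraint to be tight, and differencing gives $t z_{t,1} = b_t - b_{t-1} = t/n$, a step the paper leaves to the reader. The Abel-summation and induction detour in the middle is unnecessary and can be dropped.
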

\begin{proof}
Proof deferred to \cref{sec:price-char:kinks}.
\end{proof}

\subsection{Linearity Gap}
\label{sec:overview:lingap}

Since the optimal pricing function is nearly linear, it is natural to ask
how much revenue is sacrificed when all pricing functions are required to be linear.
We formalize this as the \emph{linearity gap}:
the maximum ratio between the revenues of an optimal PLC and a linear pricing scheme, i.e.,
\begin{align}
\text{linearity gap} =
    \frac{\displaystyle \max_{\p,\,\xvec_i \in \Dem_i(\pvec)\,\forall i}\;\sum_i \pvec(\xvec_i)}{%
        \displaystyle \;\max_{\p:\text{linear},\,\xvec_i \in \Dem_i(\pvec)\,\forall i}\;\sum_i \pvec(\bm x_i)\;}
\end{align}
Note that if we add a constraint in LP \eqref{eqn:lp} insisting on the integrality of $\zvec$,
then the resulting mixed-integer LP gives us the revenue-maximizing linear pricing.
Hence, the linearity gap is precisely the integrality gap of LP \eqref{eqn:lp}.

We now give an example where the linearity gap is $2 - 1/n$.
The key idea is to have a single dataset, one \emph{rich} buyer, and $n-1$ \emph{poor} buyers.
The rich buyer has a large value and budget.
The poor buyers are identical and have small values and budgets.
With linear pricing, we can either set a low price and collect revenue from all the buyers,
or we can set a high price and collect a large revenue from the rich buyer and no revenue from the poor buyers.
With non-linear pricing, this dilemma disappears:
we can set a low price for a prefix of the data and a high price for the remainder such that every buyer exhausts her budget.

\begin{example}[linearity gap of $2-1/n$]
\label{ex:overview:lin-gap}
Consider an instance with $n \ge 2$ buyers and a single dataset.
Let $\eps > 0$ be an infinitesimally small number.
For all $i \in [n-1]$, we have $\v_{i,1} = \eps$ and $b_i = \eps(1-\eps)$.
We have $\v_{n,1} = (n-1)(1-\eps)$ and $b_n = n\eps(1-\eps)$.

Consider the pricing function $p: [0, 1] \to \mathbb{R}_{\ge 0}$ such that
$p'(x) = \v_{1,1}$ for $0 < x \le 1-\eps$ and $p'(x) = \v_{n,1}$ for $1-\eps < x \le 1$.
Then $p(1-\eps) = b_1$ and $p(1) = b_n$.
Then the revenue from $p$ is $\sum_{i=1}^n b_i = (2n-1)\eps(1-\eps)$.
Since buyers spend all their budget, $p$ is revenue-optimal.

Among linear prices, we only need to look at $\v_{1,1}$ and $\v_{n,1}$
(by Lemma~1 in \cite{chaudhury2026revenue},
or by \cref{thm:disc-revenue} in \cref{sec:price-char:disc}).
With a linear price of $\v_{1,1}$, the revenue is
$(n-1)\min(b_1, \v_{1,1}) + \min(b_n, \v_{1,1}) = (n-1)\eps(1-\eps) + \eps = \eps(n(1-\eps) + \eps)$.
With a linear price of $\v_{n,1}$, the revenue is $\min(b_n, \v_{n,1}) = n\eps(1-\eps)$.
The former is larger. Hence, the linearity gap is
\[ \frac{(2n-1)\eps(1-\eps)}{\eps(n(1-\eps)+\eps)} = \frac{2n-1}{n + \frac{\eps}{1-\eps}}
    \approx 2 - \frac{1}{n}. \]
\end{example}

\section{Empirics}
\label{sec:empirics}

In this section, we empirically study the properties of the
solution output by the LP-based algorithm.
In particular, we construct data pricing instances using real-world datasets,
and study the linearity gap and the number of kinks in the pricing function.

We construct data pricing instances using the California Housing dataset \cite{caliHousingDataset}. It contains 20640 rows and 9 numeric columns. Each row corresponds to a 1990 US Census block, and each column corresponds to a summary statistic on the block (e.g., median house age, average number of rooms per house, population of the block). One of the 9 numeric columns is \emph{median house value} (the \emph{target column}), and the remaining are \emph{input columns}.
We assume that buyers want to learn to predict the target column using the input columns.

\paragraph{Constructing the valuation matrix.}
Let $R$ be the set of census blocks (rows) and $C$ be the set of input columns.
Assume that the target column is known to all buyers.
We partition $C$ into $m = 4$ supposedly-independent categories:
(i) income: $C_1 =$ \{median income\},
(ii) structure: $C_2 =$ \{house age, average number of rooms, average number of bedrooms\},
(iii) demographics: $C_3 =$ \{population, average occupancy\},
(iv) location: $C_4 =$ \{latitude, longitude\}.
We randomly select 80\% of the rows into a \emph{training set} $R_{\train}$
and the remaining rows comprise the \emph{test set} $R_{\test}$.
For each $j \in [m]$, the seller offers a dataset $\Dcal_j$,
comprising of entries in $R_{\train} \times C_j$.

We sort the rows of the test set by latitude,
and then assign the first $|R_{\test}|/n$ rows to buyer 1,
the next $|R_{\test}|/n$ rows to buyer 2, and so on.
Let $R_{\test,i}$ be the rows assigned to each buyer $i \in [n]$.
Intuitively, $R_{\test,i}$ represents the geographical region where
buyer $i$ is interested in predicting the median house value.

Each buyer $i \in [n]$ estimates the value of dataset $\Dcal_j$
by how well it can predict housing prices on $R_{\test,i}$.
Specifically, we train a ridge regression model on $\Dcal_j$,
and $\v_{i,j}$ is the marginal gain in the precision (reciprocal of the mean squared error) due to $\Dcal_j$.
Since it is not known how to find the optimal prices for non-linear valuations,
we work with the \emph{linearized} instance, i.e., we assume that
the marginal gain in precision due to a bundle $\zvec$ is given by $\v_i(\zvec) = \sum_{j=1}^m \v_{i,j}z_j$.

\paragraph{Budgets.}
For each buyer $i \in [n]$, we set her budget $b_i$ to $B_i \cdot f_{\max} \cdot (\sum_{j=1}^m \v_{i,j})$.
Here $B_i \sim Beta(\alpha_B, 1)$, and $f_{\max}$ and $\alpha_B$ are positive constants.
(One can show that $\Beta(\alpha, 1)$ is the distribution of $X^{1/\alpha}$ for $X \sim \Unif(0, 1)$.)

\paragraph{Computation.}
We pick several values of $n \in [1, 100]$, and for each value,
we construct 16 instances by varying the other parameters
$f_{\max} \in [0.4, 1.3]$ and $\alpha_B \in [0.3, 1]$.
For each of these instances,
\begin{enumerate}
\item we find the revenue-maximizing prices by solving the LP using the dual-simplex method,
    and we observe the number of kinks.
\item for $n \le 20$, we compute the revenue-maximizing linear prices
    using the $O(n^m\cdot nm)$-time brute-force algorithm \citep{chaudhury2026revenue}.
    We then compute the linearity gap.
\end{enumerate}
These computations ran in 16 seconds on an Apple M4 processor (MacBook Air 2025).
The algorithms were implemented in a mix of pure Python and Numpy.
We solved linear programs using the HiGHS dual-simplex solver via SciPy.

\begin{figure}[htb]
\centering
\includegraphics[width=0.48\linewidth]{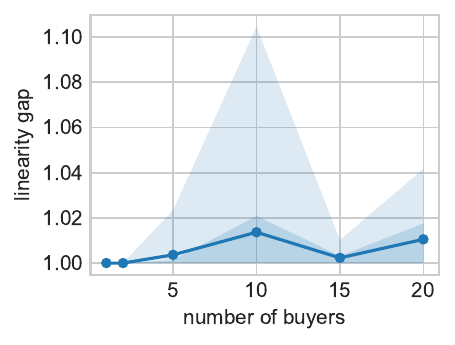}
\hfill
\includegraphics[width=0.48\linewidth]{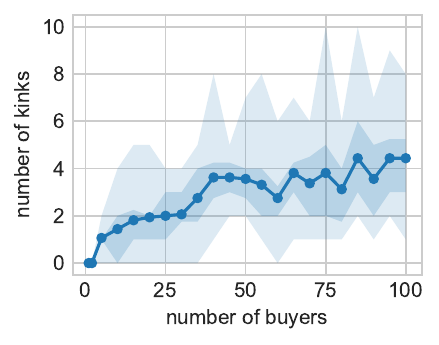}
\caption[Linearity gap and number of kinks]{%
Linearity gap and number of kinks for instances generated from the California Housing dataset. For a given value of $n$, there are 16 data pricing instances. The line represents the mean. The heavily shaded region lies between the first and third quartiles. The lightly-shaded region is between the min and max.}
\label{fig:caliH}
\end{figure}

\paragraph{Results.}
\cref{fig:caliH} summarizes our results.
It shows that the linearity gap is at most 1.1.
This is significantly less than $2-1/n$, which is the largest linearity gap we could demonstrate (\cref{ex:overview:lin-gap}). This indicates that the linearity gap is small in practice,
and motivates us to make the following conjecture.

\begin{conjecture}
For a monopolistic data pricing problem, the linearity gap is at most $2$,
i.e., linear pricing can recover at least half the revenue
of the optimal (potentially non-linear) pricing.
\end{conjecture}

In \cref{fig:caliH}, we also see that the number of kinks is at most 10.
On the other hand, in \cref{thm:max-kinks},
we give an example where the number of kinks is $n-1$.
This shows that in practice, the number of kinks can be much smaller than the worst case.

\phantomsection
\addcontentsline{toc}{section}{References}
\ifx\compatmode\undefined

\else
\fi
\input{dm-monopoly-final.bbl}

\clearpage

\appendix
\crefalias{section}{appendix}
\crefalias{subsection}{appendix}
\crefalias{subsubsection}{appendix}

\section{Details on Revenue Maximization under General Pricing}
\label{sec:rev_max}
\label{sec:price-char}

In this section, we provide extra insights and missing proofs from \cref{sec:overview}.

\paragraph{Notation.}
Recall \cref{defn:revenue}. Given a pricing function $\p$, a valuation function $\v$, and a budget $b$,
it formally defines the feasible set $F(\p \mid \v, b)$, the maximum net utility $u^*(\p \mid \v, b)$,
the demand set $\Dem(\p \mid \v, b)$, and the revenue $r(\p \mid \v, b)$.
In proofs, when $\v$ and $b$ are clear from context, we simply write
$F(\p)$, $u^*(\p)$, $\Dem(\p)$, and $r(\p)$.

We adopt the convention of denoting pricing functions by $\p$, $\pvechat$, etc.~and
denoting general functions by $f$, $\fhat$, etc.
Note that, unlike pricing functions, general functions
may not be monotone and may not satisfy $f(\vecZero) = 0$.

\paragraph{Organization.}
We partition our proofs into two categories:
(i) results specific to the sale of data, and
(ii) well-known or elementary results in real and convex analysis.
The former are in this appendix, and the latter are in \cref{sec:real-analysis}.
Whenever we use a result from \cref{sec:real-analysis}, we either restate it here
or describe it in one line, so that this appendix can be read on its own.
This appendix is organized as follows.
\begin{tightenum}
\item \cref{sec:price-char:mlc} explains why we require pricing functions to be MLC.
\item \cref{sec:price-char:budget} proves the budget exhaustion theorem (\cref{thm:budget-exhaustion}),
    which relates the revenue from a budgeted buyer to that from an unbudgeted buyer.
\item \cref{sec:price-char:convex-hull} proves the convexification theorem (\cref{thm:convex-env-dominates}).
\item \cref{sec:price-char:rate-eq} characterizes the demand set of
    a convex pricing function for a single dataset.
\item \cref{sec:price-char:disc} proves the piecewise-linearization theorem (\cref{thm:disc-revenue}).
\item \cref{sec:price-char:sep} studies separable functions, which lets us extend
    our single-dataset results to multiple datasets.
\item \cref{sec:price-char:together} combines the above to prove the structural theorem (\cref{thm:struct}).
\item \cref{sec:sharding-lp} interprets a PLC-priced dataset as being broken into \emph{shards}.
\item \cref{sec:price-char:kinks} proves \cref{thm:max-kinks}, which gives an instance
    where the optimal PLC pricing function has $n-1$ kinks.
\end{tightenum}

\subsection{MLC Pricing}
\label{sec:price-char:mlc}

We remark that the demand set may be empty, i.e.,
there may not exist a utility-maximizing bundle for the buyer,
as \cref{ex:no-util-max} demonstrates.
Then the revenue $r(\p \mid \v, b)$ is not well-defined.

\begin{example}[No utility-maximizing bundle]
\label{ex:no-util-max}
There is a single dataset. Let $\v(x) = 2x$ and $b = \infty$.
Let $p(x)$ be $x$ if $x < 1/2$ and $2x$ if $x \ge 1/2$.
Then all bundles are affordable and
\[ \v(x) - p(x) = \begin{cases}
x & \text{ if } 0 \le x < 1/2,
\\ 0 & \text{ if } 1/2 \le x \le 1.
\end{cases} \]

\begin{figure*}[htb]
\centering
\begin{subfigure}[b]{0.45\textwidth}
\centering
\begin{tikzpicture}
\datavisualization [
school book axes,
my grid style,
visualize as line/.list={seg1,seg2},
every visualizer/.style={style={draw={textColor},very thick}},
x axis={length={2.5cm},ticks={step=0.5},label=$x$},
y axis={length={2.5cm},label=$\p(x)$},
all axes={grid},
]

data [set=seg1] {
x,y
0,0
0.5,0.5
}
data [set=seg2] {
x,y
0.5,1
1,2
}
info {
\path [draw={textColor},thick,fill={bgColor}] (visualization cs: x=0.5, y=0.5) circle [radius=2pt];
};
\end{tikzpicture}

\caption{Pricing function}
\label{fig:no-util-max:price}
\end{subfigure}
\begin{subfigure}[b]{0.45\textwidth}
\centering
\begin{tikzpicture}
\datavisualization [
school book axes={unit=0.5},
my grid style,
visualize as line/.list={seg1,seg2},
every visualizer/.style={style={draw={textColor}, very thick}},
all axes={grid},
x axis={length={2.5cm},label=$x$},
y axis={length={2.5cm},label=$v(x)-\p(x)$,max value=1,
},
]
data [set=seg1] {
x,y
0,0
0.5,0.5
}
data [set=seg2] {
x,y
0.5,0
1,0
}
info {
\path [draw={textRed},thick,fill={bgColor}] (visualization cs: x=0.5, y=0.5) circle [radius=2pt];
\node[left] at (visualization cs: x=-0.3, y=0.5) {\textcolor{textRed}{($u^*$)}};
\node[below] at (visualization cs: x=-0.5, y=0.40) {\textcolor{textRed}{\small unattainable}};
\path [draw={textRed},dashed]
    (visualization cs: x=0.0, y=0.5)
    -- (visualization cs: x=0.5, y=0.5);
};
\end{tikzpicture}

\caption{Net utility}
\label{fig:no-util-max:net-util}
\end{subfigure}
\caption{Pricing function and net utility for \cref{ex:no-util-max}.}
\label{fig:no-util-max}
\end{figure*}
The plots of $p$ and $\v-p$ are illustrated in \cref{fig:no-util-max}.
Notice that, $\lim_{x \to 0.5^-} (\v-p)(x) = 0.5$, hence, the optimal utility of the buyer is $u^*(p) = 0.5$.
However, since $(\v-p)(x) < 0.5$ when $x \in [0, 0.5)$ and $(\v-p)(0.5) = 0$,
we get that $u^*$ is actually unattainable and the demand set $\Dem(p) = \emptyset$.
\end{example}

The discontinuity of $p$ in \cref{ex:no-util-max} seems to be the reason for
non-existence of a utility-maximizing affordable bundle.
Indeed, we show (in \cref{thm:mlc-demand-nonempty}) that
whenever a pricing function $\p$ is MLC (monotone lower-continuous),
a utility-maximizing affordable bundle always exists.
The formal definition of MLC and the formal proof are deferred to
\cref{sec:real-analysis:mlc} (the proof is on page~\pageref{prf:mlc-demand-nonempty}).

\begin{restatable}{lemma}{lemMLCDemandNonempty}
\label{thm:mlc-demand-nonempty}
Let $\v, \p: [0, 1]^m \to \mathbb{R}$ be functions where $\v$ is monotone and continuous,
$\p$ is MLC, and $\p(\vecZero) = 0$.
Let $b \in \mathbb{R}_{\ge 0}$. Let $F(\p) \defeq \{\xvec \in [0, 1]^m: \p(\xvec) \le b\}$.
Then $\exists \xvec^* \in F(\p)$ such that
\[ \v(\xvec^*) - \p(\xvec^*) = \sup_{\xvec \in F(\p)} (\v(\xvec) - \p(\xvec)). \]
\end{restatable}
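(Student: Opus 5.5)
The plan is a direct compactness argument on a maximizing sequence, arranged so that only continuity of $\p$ \emph{from below} (along coordinatewise non-decreasing sequences) is used. This is the weakest reading of ``lower-continuous'' compatible with \cref{ex:no-util-max}. If MLC instead means monotone and lower semicontinuous, the same argument works and the monotonization step below becomes unnecessary.

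\textbf{Setup.} Since $\p$ is monotone and $\p(\vecZero)=0$, we have $\p \ge 0$ on $[0,1]^m$. Also $\v$ is continuous on a compact set, hence bounded. Therefore
\[ s \defeq \sup_{\xvec \in F(\p)} (\v(\xvec) - \p(\xvec)) \]
is finite, and $F(\p) \ni \vecZero$ is non-empty.

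\textbf{Extracting a limit point.} Choose $\xvec^{(k)} \in F(\p)$ with $\v(\xvec^{(k)}) - \p(\xvec^{(k)}) \to s$. By compactness of $[0,1]^m$, pass to a subsequence converging to some $\xvec^*$.

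\textbf{Monotonizing the sequence.} Set $\yvec^{(k)} \defeq \bigwedge_{l \ge k} \xvec^{(l)}$, the coordinatewise infimum. Then $\yvec^{(k)}$ is coordinatewise non-decreasing, $\yvec^{(k)} \le \xvec^{(k)}$, and $\yvec^{(k)} \to \xvec^*$ because each coordinate's infimum over the tail converges to the limit.

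\textbf{Transferring the bounds.}
\begin{enumerate}
\item By monotonicity, $\p(\yvec^{(k)}) \le \p(\xvec^{(k)}) \le b$, so each $\yvec^{(k)} \in F(\p)$.
\item Since $\yvec^{(k)} \uparrow \xvec^*$, lower continuity gives $\p(\xvec^*) = \lim_k \p(\yvec^{(k)})$. In the semicontinuous reading, it gives $\p(\xvec^*) \le \liminf_k \p(\yvec^{(k)})$. Either way $\p(\xvec^*) \le b$, so $\xvec^* \in F(\p)$.
\item By continuity of $\v$, we have $\v(\yvec^{(k)}) \to \v(\xvec^*)$ and also $\v(\xvec^{(k)}) \to \v(\xvec^*)$.
\end{enumerate}
Combining these,
\[ \v(\xvec^*) - \p(\xvec^*) \;\ge\; \limsup_k \bigl(\v(\xvec^{(k)}) - \p(\yvec^{(k)})\bigr) \;\ge\; \limsup_k \bigl(\v(\xvec^{(k)}) - \p(\xvec^{(k)})\bigr) = s. \]
The first inequality holds because $\v(\xvec^{(k)})$ and $\v(\yvec^{(k)})$ share the limit $\v(\xvec^*)$ and $\p(\xvec^*) \le \liminf_k \p(\yvec^{(k)})$. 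The second uses $\p(\yvec^{(k)}) \le \p(\xvec^{(k)})$. Since $\xvec^* \in F(\p)$, the reverse inequality $\v(\xvec^*) - \p(\xvec^*) \le s$ is trivial, so equality holds.

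\textbf{Main obstacle.} The delicate step is the second transfer step: lower continuity is only available along suitable (increasing) approach directions, while a maximizing sequence can approach $\xvec^*$ from any direction. Monotonicity of $\p$ resolves this by letting us replace $\xvec^{(k)}$ with the smaller, increasing $\yvec^{(k)}$. This replacement keeps the budget constraint and never increases the price, and continuity of $\v$ ensures no valuation is lost in the limit.
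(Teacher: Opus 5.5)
Your argument is correct and is essentially the paper's proof. Both use a maximizing sequence, Bolzano--Weierstrass, the coordinatewise tail-infimum monotonization (the paper's monotone support lemma), and MLC to get $\p(\xvec^*) = \sup_k \p(\yvec^{(k)}) \le \liminf_k \p(\xvec^{(k)})$; the paper only packages $\xvec^* \in F(\p)$ as a separate closedness lemma and obtains the final inequality via the non-negative sequence $\p(\xvec^{(i)}) - \p(\xvechat^{(i)})$. One small point: the paper's lower-continuity takes the infimum over full neighborhoods $N_\delta$, so it is lower semicontinuity, and for monotone $\p$ this coincides with your continuity-from-below reading; the two cases you distinguish are therefore really one.
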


Intuitively, assuming lower-continuity is without loss of generality.
This is because at any point $\xvec \in [0, 1]^m$ of discontinuity,
if the buyer selects the bundle $(1-\delta)\xvec$ instead of $\xvec$
for an infinitesimally small $\delta > 0$,
then her value for the bundle reduces by an infinitesimally small amount
(assuming $\v$ is continuous), but the price reduces significantly.
Hence, she would never purchase at $\xvec$, and so, we can replace the price at $\xvec$
by its \emph{lower limit} at $\xvec$ (formally defined in \cref{defn:continuity}), with no loss to the revenue.
Hence, we can assume \wLoG{} that $\p$ is lower-continuous.

\subsection{Budget Exhaustion}
\label{sec:price-char:budget}

Intuitively, a larger budget leads to more spending, so $r(\p \mid \v, b) \le r(\p \mid \v, \infty)$.
We also have the trivial bound $r(\p \mid \v, b) \le b$, since we cannot extract more revenue than the budget.
Combining these gives us $r(\p \mid \v, b) \le \min(b, r(\p \mid \v, \infty))$
(we prove this formally in \cref{thm:rev-budget-ub}).

When the pricing function is convex and continuous, and the valuation function is linear,
then we show that the inequality actually holds with equality.
We call this result the \emph{budget exhaustion} theorem (\cref{thm:budget-exhaustion}).
It is very helpful, since it allows us to prove results about the revenue
in the simpler unbudgeted setting and easily carry them over to the budgeted setting.

Before proving this theorem, we show that convexity is necessary (for linear valuations),
i.e., we give an example with non-convex pricing where the inequality may not be tight.

\begin{example}
\label{ex:non-convex-price}
Consider a single dataset, and a single buyer with valuation function $\v(x) = 2x$
and budget $b$. The pricing function is
\[ p(x) = \begin{cases}
x& \text{ if } 0 \le x \le 0.4,
\\ 3.5\cdot x - 1 & \text{ if } 0.4 \le x \le 0.6
\\x + 0.5 & \text{ if } 0.6 \le x \le 1
\end{cases}, \]
i.e., the per-unit cost (given by the derivative of $p$)
is 3.5 when $x \in (0.4, 0.6]$ and 1 otherwise.

\begin{figure*}[htb]
\centering
\begin{subfigure}[b]{0.31\textwidth}
\centering
\begin{tikzpicture}[scale=0.85]
\datavisualization [
school book axes,
visualize as line,
my grid style,
every visualizer/.style={style={draw={textColor}, very thick}},
all axes={ticks={step=0.5,minor steps between steps=4}},
x axis={length={4cm},label=$x$},
y axis={length={4cm},label=$\p(x)$},
]
data {
x,y
0.0,0.0
0.4,0.4
0.6,1.1
1.0,1.5
};
\end{tikzpicture}

\caption{Pricing function}
\label{fig:non-convex-price:price}
\end{subfigure}
\hfill
\begin{subfigure}[b]{0.35\textwidth}
\centering
\begin{tikzpicture}[scale=0.85]
\datavisualization [
school book axes,
visualize as line/.list={seg1,seg2,seg3},
seg1={style={draw={textColor},very thick}},
seg2={style={draw={textBlue},thick}},
seg3={style={draw={textBlue},dashed, thick}},
my grid style,
every visualizer/.style={style={draw={textColor},very thick}},
all axes={ticks={step=0.5,minor steps between steps=4}},
x axis={length={4cm},label=$x$},
y axis={length={4cm},label=$v(x)-\p(x)$,max value=1},
]
data [set=seg1] {
x,y
0.0,0.0
0.4,0.4
0.6,0.1
1.0,0.5
}
data [set=seg2] {
x,y
0.0,0.0
0.8,0.0
}
data [set=seg3] {
x,y
0.8,0.0
0.8,0.3
}
info {
\path [fill={textRed}] (visualization cs: x=0.4, y=0.4) circle [radius=3pt];
\path [fill={textGreen}] (visualization cs: x=1.0, y=0.5) circle [radius=3pt];
\node[above] at (visualization cs: x=0.4, y=0.4) {\textcolor{textRed}{\scriptsize local max}};
\node[above] at (visualization cs: x=1, y=0.5) {\textcolor{textGreen}{\scriptsize global max}};
\node[above] at (visualization cs: x=0.45, y=0) {\textcolor{textBlue}{\scriptsize affordable set}};
};
\end{tikzpicture}

\caption{Net utility $\v - p$}
\label{fig:non-convex-price:net-util}
\end{subfigure}
\hfill
\begin{subfigure}[b]{0.31\textwidth}
\centering
\begin{tikzpicture}[scale=0.85]
\datavisualization [
school book axes,
visualize as line/.list={seg1,seg2,seg3,seg4,seg5},
seg1={style={draw={textColor},very thick}},
seg2={style={draw={textColor},very thick}},
seg3={style={draw={textColor},very thick}},
seg4={style={draw={textBlue},dashed,thick}},
my grid style,
x axis={length={4cm},label=$x$,ticks={step=0.2}},
y axis={length={4cm},label=$\p'(x)$,ticks={step=1,minor steps between steps=1}},
]
data [set=seg1] {
x,y
0.0,1.0
0.4,1.0
}
data [set=seg2] {
x,y
0.4,3.5
0.6,3.5
}
data [set=seg3] {
x,y
0.6,1.0
1.0,1.0
}
data [set=seg4] {
x,y
0.0,2.0
1.0,2.0
}
data [set=seg5] {
x,y
0.0,4.0
}
info {
\path [draw={textColor},thick,fill={bgColor}] (visualization cs: x=0.4, y=3.5) circle [radius=2pt];
\path [draw={textColor},thick,fill={bgColor}] (visualization cs: x=0.6, y=1.0) circle [radius=2pt];
\path (visualization cs: x=1.0, y=2.0) node [anchor=west] {$\textcolor{textBlue}{v'(x)}$};
};
\end{tikzpicture}

\caption{Left derivative of price}
\label{fig:non-convex-price:price-d}
\end{subfigure}
\caption{Plots for \cref{ex:non-convex-price}.}
\label{fig:non-convex-price}
\end{figure*}

We can see that $x = 0.4$ is a local maximum for $\v - p$ and $x = 1$ is the global maximum.
At $x = 0.4$, buying a small amount of data decreases the buyer's net utility
(since the per-unit price is $3.5$, and the per-unit value is $2$).
However, buying the entire dataset is a good strategy for the buyer,
since the per-unit price is $1$ for $x > 0.6$,
so the loss incurred by purchasing the segment $(0.4, 0.6]$
is offset by the gain incurred by purchasing $(0.6, 1]$.
Hence, the buyer needs to \emph{think globally} to optimize her net utility.

For $b \ge 1.5$, the affordable set is $[0, 1]$, the demand set is $\{1\}$, and the revenue is $1.5$.
For $b = 1.3$, the affordable set is $[0, 0.8]$, the demand set is $\{0.4\}$, and the revenue is $0.4$.
Hence, the buyer is limited by her budget, but still does not exhaust it.
\end{example}

We now return to proving the budget exhaustion theorem.
We begin by showing (in \cref{thm:affordable-optimum}) that if the maximizer of $\v-\p$ is affordable,
i.e., $r(\p \mid \v, \infty) \le b$, then the maximum net utilities and the demand sets
are the same regardless of whether the budget is $b$ or $\infty$,
which proves that $r(\p \mid \v, b) = r(\p \mid \v, \infty)$.

\begin{lemma}[affordable optimum]
\label{thm:affordable-optimum}
Let $\v: [0, 1]^m \to \mathbb{R}_{\ge 0}$ be a monotone continuous valuation function,
let $\p: [0, 1]^m \to \mathbb{R}_{\ge 0}$ be an MLC pricing function,
and let $b \in \mathbb{R}_{\ge 0}$.
If $r(\p \mid \v, \infty) \le b$, then
$\Dem(\p \mid \v, b) = \Dem(\p \mid \v, \infty)$
and $r(\p \mid \v, b) = r(\p \mid \v, \infty)$.
\end{lemma}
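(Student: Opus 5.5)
We compare the two optimization problems directly, using the fact that $F(\p \mid b) \subseteq F(\p \mid \infty) = [0,1]^m$. The plan is to show first that the maximum net utilities agree, then that the demand sets agree; the revenue equality then follows immediately from the definition of $r$ as a supremum of $\p$ over the demand set.

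\textbf{Step 1: an infinite-budget maximizer is affordable.} By \cref{thm:mlc-demand-nonempty} applied with budget $\infty$ (take any $b' \ge \p(\vecOne)$, so that $F(\p \mid b') = [0,1]^m$), the set $\Dem(\p \mid \v, \infty)$ is nonempty. Every $\xvec \in \Dem(\p \mid \v, \infty)$ satisfies
\[ \p(\xvec) \le \sup_{\yvec \in \Dem(\p \mid \v, \infty)} \p(\yvec) = r(\p \mid \v, \infty) \le b, \]
so $\xvec \in F(\p \mid b)$.

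\textbf{Step 2: the maximum net utilities coincide.} Since $F(\p \mid b) \subseteq [0,1]^m$, we have $u^*(\p \mid \v, b) \le u^*(\p \mid \v, \infty)$. Conversely, any $\xvec \in \Dem(\p \mid \v, \infty)$ lies in $F(\p \mid b)$ by Step~1 and attains the value $u^*(\p \mid \v, \infty)$. Hence $u^*(\p \mid \v, b) = u^*(\p \mid \v, \infty)$.

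\textbf{Step 3: the demand sets coincide.} Call a bundle optimal if its net utility $\v(\xvec) - \p(\xvec)$ equals the common value from Step~2.
\begin{itemize}
\item If $\xvec \in \Dem(\p \mid \v, b)$, then $\xvec \in [0,1]^m$ and it is optimal, so $\xvec \in \Dem(\p \mid \v, \infty)$.
\item If $\xvec \in \Dem(\p \mid \v, \infty)$, then $\xvec$ is affordable by Step~1 and optimal, so $\xvec \in \Dem(\p \mid \v, b)$.
\end{itemize}
Therefore $\Dem(\p \mid \v, b) = \Dem(\p \mid \v, \infty)$, and taking the supremum of $\p$ over these equal sets gives $r(\p \mid \v, b) = r(\p \mid \v, \infty)$.

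\textbf{Main obstacle.} The only delicate point is nonemptiness of the infinite-budget demand set, which is needed in Step~2. If $\Dem(\p \mid \v, \infty)$ were empty, no infinite-budget maximizer would exist to transfer into the budgeted problem, and the suprema in Step~2 could differ. This is why the MLC assumption and \cref{thm:mlc-demand-nonempty} are invoked, with the infinite budget treated as a large finite budget $b' \ge \p(\vecOne)$. Once nonemptiness holds, the rest is direct set comparison.
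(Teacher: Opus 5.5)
Your proof is correct and follows the paper's argument essentially step for step: $\Dem(\p \mid \v, \infty)$ is nonempty by the MLC lemma, its elements are affordable because $r(\p \mid \v, \infty) \le b$, the maximum net utilities therefore coincide, and both inclusions of the demand sets follow. Your replacement of the infinite budget by a finite $b' \ge \p(\vecOne)$ is a small but welcome tightening, since the paper applies that lemma, which is stated for finite budgets, directly with $b = \infty$.
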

\begin{proof}
$F(\p \mid b) \subseteq F(\p \mid \infty) = [0, 1]^m$, so $u^*(\p \mid \v, b) \le u^*(\p \mid \v, \infty)$.
By \cref{thm:mlc-demand-nonempty}, $\Dem(\p \mid \v, \infty)$ is non-empty.
Let $\xvec^* \in \Dem(\p \mid \v, \infty)$.
Since $\p(\xvec^*) \le r(\p \mid \v, \infty) \le b$, we get $\xvec^* \in F(\p \mid b)$,
so $u^*(\p \mid \v, \infty) = \v(\xvec^*) - \p(\xvec^*) \le u^*(\p \mid \v, b)$.
So, $u^*(\p \mid \v, b) = u^*(\p \mid \v, \infty)$,
and thus, $\Dem(\p \mid \v, b) \subseteq \Dem(\p \mid \v, \infty)$.
Moreover, $\xvec^* \in \Dem(\p \mid \v, b)$.
Since we picked $\xvec^*$ arbitrarily from $\Dem(\p \mid \v, \infty)$,
we get $\Dem(\p \mid \v, \infty) \subseteq \Dem(\p \mid \v, b)$.
Hence, the demand sets are equal for budgets $b$ and $\infty$,
and consequently, $r(\p \mid \v, b) = r(\p \mid \v, \infty)$.
\end{proof}

Next, we formally prove that $\min(b, r(\p \mid \v, \infty))$ is an upper bound on $r(\p \mid \v, b)$.

\begin{lemma}
\label{thm:rev-budget-ub}
Let $\v: [0, 1]^m \to \mathbb{R}_{\ge 0}$ be a monotone continuous valuation function,
let $\p: [0, 1]^m \to \mathbb{R}_{\ge 0}$ be an MLC pricing function,
and let $b \in \mathbb{R}_{\ge 0}$.
Then $r(\p \mid \v, b) \le \min(b, r(\p \mid \v, \infty))$.
\end{lemma}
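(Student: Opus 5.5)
We need to show $r(\p \mid \v, b) \le b$ and $r(\p \mid \v, b) \le r(\p \mid \v, \infty)$. The first bound is immediate. Every $\xvec \in \Dem(\p \mid \v, b)$ lies in $F(\p \mid b)$, so $\p(\xvec) \le b$, and taking the supremum gives $r(\p \mid \v, b) \le b$. By \cref{thm:mlc-demand-nonempty} the demand set is nonempty, so the supremum is well-defined.

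For the second bound, fix any $\xvec \in \Dem(\p \mid \v, b)$ and any $\yvec \in \Dem(\p \mid \v, \infty)$; the latter is nonempty by \cref{thm:mlc-demand-nonempty} applied with the budget taken large enough that the constraint is vacuous, e.g.\ $b' \ge \p(\vecOne)$, using monotonicity of $\p$. The goal is to show $\p(\xvec) \le \p(\yvec)$, after which taking suprema finishes. We split into two cases according to whether $\yvec$ is affordable under budget $b$.

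\emph{Case 1: $\p(\yvec) \le b$.} Then $\yvec \in F(\p \mid b)$, so $\v(\yvec) - \p(\yvec) \le u^*(\p \mid \v, b) = \v(\xvec) - \p(\xvec)$. Also $\xvec \in [0,1]^m$, so $\v(\xvec) - \p(\xvec) \le u^*(\p \mid \v, \infty) = \v(\yvec) - \p(\yvec)$. Hence $\xvec \in \Dem(\p \mid \v, \infty)$, and therefore $\p(\xvec) \le r(\p \mid \v, \infty)$ directly from the definition of $r$ as a supremum over the unbudgeted demand set. This avoids needing to compare $\p(\xvec)$ with $\p(\yvec)$ at all.

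\emph{Case 2: $\p(\yvec) > b$.} Then $\p(\xvec) \le b < \p(\yvec) \le r(\p \mid \v, \infty)$.

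In both cases $\p(\xvec) \le r(\p \mid \v, \infty)$. Taking the supremum over $\xvec$ gives $r(\p \mid \v, b) \le r(\p \mid \v, \infty)$, and combining with the first bound yields the claim. The only delicate point is ensuring that both demand sets are nonempty, so that the revenues are genuine suprema over nonempty sets; this is exactly what \cref{thm:mlc-demand-nonempty} provides. Case 1 is essentially the argument of \cref{thm:affordable-optimum} applied pointwise.
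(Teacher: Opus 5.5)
Your proof is correct and is essentially the paper's argument. The paper also obtains $r(\p \mid \v, b) \le b$ trivially, then splits on whether $r(\p \mid \v, \infty) \le b$, citing \cref{thm:affordable-optimum} in that case and the bound $b$ otherwise; you run the same reasoning pointwise, splitting on whether a single $\yvec \in \Dem(\p \mid \v, \infty)$ is affordable. One welcome extra: you reduce nonemptiness of the unbudgeted demand set to a finite budget $b' \ge \p(\vecOne)$, which the paper skips even though \cref{thm:mlc-demand-nonempty} is stated only for finite budgets.
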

\begin{proof}
Every bundle in $\Dem(\p \mid \v, b)$ lies in $F(\p \mid b)$, and so costs at most $b$.
Hence, $r(\p \mid \v, b) \le b$.
If $r(\p \mid \v, \infty) \le b$, then by \cref{thm:affordable-optimum}, we get
$r(\p \mid \v, b) = r(\p \mid \v, \infty) = \min(b, r(\p \mid \v, \infty))$.
Otherwise, $r(\p \mid \v, \infty) > b$, so
$\min(b, r(\p \mid \v, \infty)) = b \ge r(\p \mid \v, b)$.
\end{proof}

Finally, we prove the budget exhaustion theorem, i.e.,
$r(\p \mid \v, b) = \min(b, r(\p \mid \v, \infty))$ when $\v - \p$ is concave.
Intuitively, this holds because if the buyer demands a bundle $\xvec^*$ at a budget of $\infty$
that costs more than $b$, and bundle $\xvec^{(0)}$ at a budget of $b$,
then a convex combination of these bundles is also in the demand set and costs exactly $b$.

\begin{theorem}[budget exhaustion]
\label{thm:budget-exhaustion}
Let $\v: [0, 1]^m \to \mathbb{R}_{\ge 0}$ be a monotone continuous valuation function,
let $\p: [0, 1]^m \to \mathbb{R}_{\ge 0}$ be a continuous pricing function,
and let $b \in \mathbb{R}_{\ge 0}$.
If $\v-\p$ is concave, then $r(\p \mid \v, b) = \min(b, r(\p \mid \v, \infty))$.
\end{theorem}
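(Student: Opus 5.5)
By \cref{thm:rev-budget-ub} we already have $r(\p \mid \v, b) \le \min(b, r(\p \mid \v, \infty))$. So it remains to prove the reverse inequality. The pricing function $\p$ is continuous, hence MLC, so \cref{thm:mlc-demand-nonempty} guarantees non-empty demand sets. If $r(\p \mid \v, \infty) \le b$, \cref{thm:affordable-optimum} gives equality directly. The remaining case is $r(\p \mid \v, \infty) > b$, where I will show that $r(\p \mid \v, b) \ge b$, i.e.\ that some bundle in $\Dem(\p \mid \v, b)$ costs exactly $b$.

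In this case I will pick $\xvec^* \in \Dem(\p \mid \v, \infty)$ with $\p(\xvec^*) > b$. Such a bundle exists because the revenue is a supremum over the demand set exceeding $b$. I will also pick $\xvec^{(0)} \in \Dem(\p \mid \v, b)$, which satisfies $\p(\xvec^{(0)}) \le b$. Consider the segment $\xvec(\lambda) \defeq (1-\lambda)\xvec^{(0)} + \lambda \xvec^*$ for $\lambda \in [0,1]$. The map $\lambda \mapsto \p(\xvec(\lambda))$ is continuous, takes a value $\le b$ at $0$ and a value $> b$ at $1$. By the intermediate value theorem there is a largest $\lambda_0 \in [0,1)$ with $\p(\xvec(\lambda_0)) = b$. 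Then $\xvec(\lambda_0) \in F(\p \mid b)$ and it costs exactly $b$.

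The key step, which I expect to be the main obstacle, is showing $\xvec(\lambda_0) \in \Dem(\p \mid \v, b)$. Write $g \defeq \v - \p$, which is concave, and set $u_b \defeq g(\xvec^{(0)}) = u^*(\p \mid \v, b)$ and $u_\infty \defeq g(\xvec^*) \ge u_b$. Concavity gives $g(\xvec(\lambda_0)) \ge (1-\lambda_0)u_b + \lambda_0 u_\infty \ge u_b$. Combined with feasibility, this yields $g(\xvec(\lambda_0)) = u_b$, so $\xvec(\lambda_0)$ is utility-maximizing under budget $b$. The subtle point is that this argument only uses concavity of $g$ along the segment and never requires $\xvec^{(0)}$ to be cheap. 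So it works even when $\p(\xvec^{(0)}) = b$ already; in that case we may simply take $\lambda_0 = 0$, or observe that $\xvec^{(0)}$ itself already witnesses the claim.

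Therefore $r(\p \mid \v, b) \ge \p(\xvec(\lambda_0)) = b = \min(b, r(\p \mid \v, \infty))$. Together with \cref{thm:rev-budget-ub}, this gives equality. Note that monotonicity of $\p$ is not needed here; we used only continuity of $\p$ along the segment and concavity of $\v - \p$. Linearity of $\v$ is also not needed, since concavity of $\v - \p$ is assumed directly.
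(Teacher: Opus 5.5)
Your proof is correct and follows the paper's argument essentially step for step. You dispose of the case $r(\p\mid\v,\infty)\le b$ via \cref{thm:affordable-optimum}; otherwise you apply the intermediate value theorem on the segment from $\xvec^{(0)}\in\Dem(\p\mid\v,b)$ to some $\xvec^*\in\Dem(\p\mid\v,\infty)$ with $\p(\xvec^*)>b$, and use concavity of $\v-\p$ to get a demanded bundle costing exactly $b$.

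Two minor remarks. Taking the \emph{largest} such $\lambda_0$ is unnecessary. Your aside that monotonicity of $\p$ is unused sits awkwardly with citing \cref{thm:mlc-demand-nonempty}, \cref{thm:affordable-optimum} and \cref{thm:rev-budget-ub}, which are stated for MLC prices. The aside is true in substance, though: a direct compactness argument for continuous $\p$ with $\p(\vecZero)=0$ would avoid monotonicity.
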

\begin{proof}
Since $\p$ is a continuous pricing function, it is MLC.
If $r(\p \mid \v, \infty) \le b$, then by \cref{thm:affordable-optimum}, we get
$r(\p \mid \v, b) = r(\p \mid \v, \infty) = \min(b, r(\p \mid \v, \infty))$.

Now suppose $r(\p \mid \v, \infty) > b$.
We will prove that $r(\p \mid \v, b) = b$, i.e., the buyer exhausts her budget.
Since $r(\p \mid \v, \infty) = \sup_{\xvec \in \Dem(\p \mid \v, \infty)} \p(\xvec) > b$,
we can pick $\xvec^* \in \Dem(\p \mid \v, \infty)$ such that $\p(\xvec^*) > b$.
By \cref{thm:mlc-demand-nonempty}, we can pick $\xvec^{(0)} \in \Dem(\p \mid \v, b)$.
Then $\p(\xvec^{(0)}) \le b$.
For any $t \in [0, 1]$, define $h(t) \defeq \p\left((1-t)\xvec^{(0)} + t\xvec^*\right)$.
Since $\p$ is continuous, $h$ is also continuous. $h(0) \le b$ and $h(1) > b$.
By the intermediate value theorem (\cref{thm:intermed-value} in \cref{sec:real-analysis:mlc}),
there exists an $\alpha \in [0, 1]$ such that $h(\alpha) = b$.
Let $\xvechat \defeq (1-\alpha)\xvec^{(0)} + \alpha \xvec^*$. Then $\p(\xvechat) = b$.

Since $F(\p \mid b) \subseteq F(\p \mid \infty) = [0, 1]^m$, we have $u^*(\p \mid \v, b) \le u^*(\p \mid \v, \infty)$.
Hence, $(\v-\p)(\xvec^{(0)}) = u^*(\p \mid \v, b)$ and
$(\v-\p)(\xvec^*) = u^*(\p \mid \v, \infty) \ge u^*(\p \mid \v, b)$.
By concavity of $\v-\p$, we get
\[ (\v-\p)(\xvechat) \ge (1-\alpha)\cdot(\v-\p)(\xvec^{(0)}) + \alpha\cdot(\v-\p)(\xvec^*) \ge u^*(\p \mid \v, b). \]
Hence, $\xvechat \in \Dem(\p \mid \v, b)$ and $\p(\xvechat) = b$,
which implies $r(\p \mid \v, b) = b = \min(b, r(\p \mid \v, \infty))$.
\end{proof}

\subsection{Convexifying the Pricing Function}
\label{sec:price-char:convex-hull}

In this subsection, we show that any pricing function can be \emph{convexified},
i.e., we prove that replacing a pricing function with its \emph{convex hull} does not decrease revenue.

The convex hull of a function $f$ is, intuitively, the \emph{largest} convex function $\fhat$
such that $\fhat(x) \le f(x)$ for all $x$ (see \cref{defn:convex-env}
in \cref{sec:overview:convexify} for a formal definition).
\Cref{fig:convex-hull} in \cref{sec:overview:convexify} shows an example pricing function and its convex hull,
and gives intuition behind why convexification doesn't decrease the revenue for this pricing function.
We now formalize the intuition developed there.
Specifically, for any pricing function $\p$ and its convex hull $\pvechat$,
we show (in \cref{thm:convex-env-witness,thm:convex-env-v-minus-p-dom})
that any point $\xvec \in [0, 1]^m$ for which $\p(\xvec) \neq \pvechat(\xvec)$
can be written as a convex combination of multiple points where the values of $\p$ and $\pvechat$ coincide.
Then the buyer will always prefer one of these points instead of $\xvec$.

We begin by showing that if $\pvechat$ is the convex hull of an MLC pricing function $\p$,
then $\pvechat$ is monotone, convex, and continuous.
Then, we prove that the revenue of $\pvechat$ is at least that of $\p$.

Below we use notation $f$ and $\fhat$ when referring to general functions,
and $\p$ and $\pvechat$ when referring specifically to pricing functions.
Define $\Delta_n \defeq \{\xvec \in [0, 1]^n: \sum_{i=1}^n x_i = 1\}$ as the $n$-simplex.

\begin{restatable}{lemma}{thmConvexEnvProps}
\label{thm:convex-env-props}
Let $f: [0, 1]^m \to \mathbb{R}_{\ge 0}$ be a function and $\fhat$ be its convex hull. Then
\begin{tightenum}
\item $\fhat(\xvec) \le f(\xvec)$ for all $\xvec \in [0, 1]^m$.
\item $\fhat$ is convex.
\item If $f$ is monotone, then $\fhat$ is also monotone.
\item If $f$ is MLC, then $\fhat$ is continuous.
\end{tightenum}
\end{restatable}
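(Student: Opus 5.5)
Items 1--3 follow directly from \cref{defn:convex-env}; item 4 carries the real content.

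For item 1, take $k=1$, $\lambda=(1)$, $\xvec^{(1)}=\xvec$; then $f(\xvec)\in U_f(\xvec)$, so $\fhat(\xvec)\le f(\xvec)$. Since $f\ge 0$, every element of $U_f(\xvec)$ is nonnegative, so $\fhat$ is real-valued and $\fhat\ge 0$.

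For item 2, fix $\xvec,\yvec$ and $\theta\in[0,1]$. Take near-optimal representations $\xvec=\sum_i\lambda_i\xvec^{(i)}$ and $\yvec=\sum_l\mu_l\yvec^{(l)}$, with values within $\eps$ of $\fhat(\xvec)$ and $\fhat(\yvec)$. Concatenating them with weights $\theta\lambda_i$ and $(1-\theta)\mu_l$ gives a representation of $\theta\xvec+(1-\theta)\yvec$. Hence $\fhat(\theta\xvec+(1-\theta)\yvec)\le\theta\fhat(\xvec)+(1-\theta)\fhat(\yvec)+\eps$, and we let $\eps\to0$.

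For item 3, let $\xvec\le\yvec$ coordinatewise and take a near-optimal representation $\yvec=\sum_i\lambda_i\yvec^{(i)}$. I will shrink it to a representation of $\xvec$. Define the coordinatewise scaling $\xvec^{(i)}_j\defeq\yvec^{(i)}_j\cdot x_j/y_j$, with $\xvec^{(i)}_j\defeq0$ when $y_j=0$. Then $\xvec^{(i)}\in[0,1]^m$, $\xvec^{(i)}\le\yvec^{(i)}$, and $\sum_i\lambda_i\xvec^{(i)}=\xvec$. By monotonicity of $f$, $\sum_i\lambda_if(\xvec^{(i)})\le\sum_i\lambda_if(\yvec^{(i)})$, so $\fhat(\xvec)\le\fhat(\yvec)+\eps$.

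For item 4, I split continuity into lower and upper semicontinuity.

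\textbf{Upper semicontinuity.} A convex function on a polytope is upper semicontinuous; this is a standard fact, e.g.\ Rockafellar Thm.~10.2, and I would cite it as one of the elementary results in \cref{sec:real-analysis}. A direct argument runs as follows. Write a nearby point $\yvec$ as a convex combination of $\xvec$ and points on the boundary of the cube. The weight on the boundary points tends to $0$ as $\yvec\to\xvec$. Since $\fhat$ is bounded above on the cube by $f(\vecOne)$, using monotonicity, convexity gives $\limsup_{\yvec\to\xvec}\fhat(\yvec)\le\fhat(\xvec)$.

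\textbf{Lower semicontinuity.} This is the main obstacle, and it is where the MLC hypothesis enters. Here is my plan.

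First, using Carathéodory, restrict to representations with $k=m+1$ points. This makes the set of representations of a point compact: it is a subset of $\Delta_{m+1}\times([0,1]^m)^{m+1}$.

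Next, take $\yvec_t\to\xvec$ with $\fhat(\yvec_t)\to L$. Choose near-optimal representations $(\lambda^t,\yvec^{(i)}_t)$ of $\yvec_t$ and pass to a convergent subsequence with limit $(\lambda,\xvec^{(i)})$. The limit is a representation of $\xvec$.

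Finally, show $\liminf_t f(\yvec^{(i)}_t)\ge f(\xvec^{(i)})$ for each $i$ with $\lambda_i>0$. Terms with $\lambda_i\to0$ contribute a nonnegative amount, so they can be dropped. If $f$ were lower semicontinuous, this step would be immediate, and it would give $\fhat(\xvec)\le L$.

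The difficulty is that MLC ("lower-continuous" in the monotone sense, defined in \cref{sec:real-analysis:mlc}) probably only controls limits from below. I expect MLC to mean $f(\xvec)=\lim f(\yvec)$ as $\yvec\uparrow\xvec$, and monotonicity then handles approach from above. So for a general sequence I would sandwich it. Set $\zvec^{(i)}_t\defeq\min(\yvec^{(i)}_t,\xvec^{(i)})$ coordinatewise. Then $f(\yvec^{(i)}_t)\ge f(\zvec^{(i)}_t)$ by monotonicity. Moreover $\zvec^{(i)}_t\le\xvec^{(i)}$ and $\zvec^{(i)}_t\to\xvec^{(i)}$. Monotone lower-continuity should then give $\liminf_t f(\zvec^{(i)}_t)\ge f(\xvec^{(i)})$. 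If the definition requires monotone sequences, I would dominate $\zvec^{(i)}_t$ from below by the increasing sequence $\inf_{s\ge t}\zvec^{(i)}_s$ and apply monotonicity once more.

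Combining upper and lower semicontinuity gives continuity of $\fhat$.
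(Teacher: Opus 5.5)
Your proposal is correct. Its skeleton matches the paper's: items 1--3 come from the definition, and item 4 is split into $\fhat^- = \fhat$ and $\fhat^+ = \fhat$. The individual steps differ, though.

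\textbf{Item 3.} The paper shifts a representation $\yvec = \sum_i \lambda_i \yvec^{(i)}$ by $d \defeq \yvec - \xvec$ and uses the points $\yvec^{(i)} - d$. These can leave $[0,1]^m$: for $m = 1$, shifting $\tfrac12 = \tfrac12 \cdot 0 + \tfrac12 \cdot 1$ by $\tfrac14$ produces $-\tfrac14$. Your coordinatewise rescaling stays in the cube and is the argument that actually works. This matters because the paper's proof of item 4 relies on monotonicity of $\fhat$.

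\textbf{Upper continuity.} The paper uses monotonicity of $\fhat$ to reduce $\fhat^+ = \fhat$ to the single direction $\xvec + \delta d$, with $d$ the indicator of the coordinates below $1$. It then applies convexity once along that segment. Your radial argument (or Rockafellar's Theorem 10.2) needs only an upper bound on $\fhat$.

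\textbf{Lower continuity.} The paper shows that $\conv(\epi(f))$ is closed. It truncates the unbounded epigraph to $[0,1]^m \times [f(\vecZero), f(\vecOne)]$, uses closedness of $\epi(f)$ and Rockafellar's Theorem 17.2 (the convex hull of a compact set is compact), and then passes to the limit along points approaching $\xvec$ from below. You run the same Carath\'eodory-plus-compactness mechanism directly on finite representations. This avoids the truncation lemma and uses only $f \ge 0$ and lower continuity of $f$. If you also replace $f(\vecOne)$ by the vertex bound $\fhat \le \max_{v \in \{0,1\}^m} f(v)$, your route proves continuity of $\conv(f)$ for every nonnegative lower-continuous $f$ on the cube, monotone or not.

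The paper's route yields as a by-product that $\inf U_f(\xvec)$ is attained (\cref{thm:Ufx-has-min}), which is used in \cref{thm:convex-env-witness}. Your compactness argument gives this as well when applied to the constant sequence $\yvec_t = \xvec$.

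Two minor points. First, the paper's ``lower-continuous'' means $f(\xvec) = \sup_{\delta > 0} \inf_{N_\delta(\xvec)} f$, i.e.\ ordinary lower semicontinuity, so your sandwich step is unnecessary though harmless. Second, plain Carath\'eodory applied to the points $(\xvec^{(i)}, f(\xvec^{(i)}))$ in $\mathbb{R}^{m+1}$ gives $m+2$ points; reaching $m+1$ needs the basic-solution refinement, but either bound suffices for compactness.
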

\begin{proof}
The proof is deferred (page~\pageref{prf:convex-env-props})
to \cref{sec:real-analysis:convex-functions}.
\end{proof}

We now show a crucial property of the convex hull in \cref{thm:convex-env-witness}.
For any function $f$ and its convex hull $\fhat \defeq \conv(f)$,
let $Y \defeq \{(\xvec, \fhat(\xvec)): \xvec \in [0, 1]^m \text{ and } f(\xvec) = \fhat(\xvec)\}$.
Then any point $(\xvec, \fhat(\xvec))$ can be represented as a convex combination of points in $Y$.

\begin{lemma}
\label{thm:convex-env-witness}
Let $f: [0, 1]^m \to \mathbb{R}$ be an MLC function, $\fhat \defeq \conv(f)$, and $\xvechat \in [0, 1]^m$.
Then $\exists k \in \mathbb{N}$, $\exists \lambda \in \Delta_k$,
and $\exists \xvec^{(1)}, \ldots, \xvec^{(k)} \in [0, 1]^m$ such that
$f(\xvec^{(i)}) = \fhat(\xvec^{(i)})$ for all $i \in [k]$,
$\xvechat = \sum_{i=1}^k \lambda_i\xvec^{(i)}$,
and $\fhat(\xvechat) = \sum_{i=1}^k \lambda_if(\xvec^{(i)})$.
\end{lemma}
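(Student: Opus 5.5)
The plan is to realise the infimum defining $\fhat(\xvechat)$ by an actual convex combination, using compactness together with lower semicontinuity of $f$. Then a short convexity argument will show that every point carrying positive weight satisfies $f = \fhat$.

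\emph{Step 1: reduce to a bounded number of points.} By Carathéodory's theorem applied to the epigraph points $(\xvec^{(i)}, f(\xvec^{(i)})) \in \mathbb{R}^{m+1}$, any pair $(\xvechat, y)$ with $y \in U_f(\xvechat)$ is witnessed by a combination of at most $k \defeq m+2$ points. Padding with zero weights if needed, I may fix $k = m+2$. Hence $\fhat(\xvechat)$ is the infimum of $\sum_{i=1}^k \lambda_i f(\xvec^{(i)})$ over $\lambda \in \Delta_k$ and $\xvec^{(1)}, \ldots, \xvec^{(k)} \in [0,1]^m$ with $\sum_i \lambda_i \xvec^{(i)} = \xvechat$.

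\emph{Step 2: lower semicontinuity of $f$.} This is the step I expect to be the main obstacle, because the MLC hypothesis only controls limits taken from below (the lower limit of \cref{defn:continuity}). I would use monotonicity to upgrade it to full lower semicontinuity. Given $\yvec^{(t)} \to \xvec$, set $\zvec^{(t)} \defeq \yvec^{(t)} \wedge \xvec$ (coordinatewise minimum). Then $\zvec^{(t)} \le \xvec$, $\zvec^{(t)} \to \xvec$, and $f(\yvec^{(t)}) \ge f(\zvec^{(t)})$ by monotonicity. Lower continuity from below then gives
\[ \liminf_t f(\yvec^{(t)}) \ge \liminf_t f(\zvec^{(t)}) \ge f(\xvec). \]
Consequently $f$ is lower semicontinuous on the compact cube, so it attains a minimum $c$ and in particular is bounded below.

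\emph{Step 3: attain the infimum.} Take a minimizing sequence $(\lambda^{(t)}, \xvec^{(1,t)}, \ldots, \xvec^{(k,t)})$, so that $\sum_i \lambda^{(t)}_i f(\xvec^{(i,t)}) \to \fhat(\xvechat)$. By compactness of $\Delta_k \times ([0,1]^m)^k$, pass to a subsequence converging to $(\lambda, \xvec^{(1)}, \ldots, \xvec^{(k)})$; the constraint $\sum_i \lambda_i \xvec^{(i)} = \xvechat$ survives in the limit.
\begin{itemize}
\item For indices with $\lambda_i > 0$, lower semicontinuity gives $\liminf_t \lambda^{(t)}_i f(\xvec^{(i,t)}) \ge \lambda_i f(\xvec^{(i)})$.
\item For indices with $\lambda_i = 0$, the bound $\lambda^{(t)}_i f(\xvec^{(i,t)}) \ge \lambda^{(t)}_i c \to 0$ shows these terms contribute a nonnegative amount in the limit.
\end{itemize}
Thus $\sum_{i:\lambda_i>0} \lambda_i f(\xvec^{(i)}) \le \fhat(\xvechat)$. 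I then discard the zero-weight indices, leaving a valid representation of $\xvechat$. By the definition of $\fhat$ as an infimum, the reverse inequality also holds, so $\fhat(\xvechat) = \sum_i \lambda_i f(\xvec^{(i)})$.

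\emph{Step 4: the witnesses touch the envelope.} By \cref{thm:convex-env-props}, $\fhat$ is convex and $\fhat \le f$. Therefore
\[ \fhat(\xvechat) \le \sum_i \lambda_i \fhat(\xvec^{(i)}) \le \sum_i \lambda_i f(\xvec^{(i)}) = \fhat(\xvechat). \]
Equality must hold throughout. Since every remaining $\lambda_i > 0$ and each difference $f(\xvec^{(i)}) - \fhat(\xvec^{(i)})$ is nonnegative, each such difference must be zero. This gives $f(\xvec^{(i)}) = \fhat(\xvec^{(i)})$ for all $i$, as required.
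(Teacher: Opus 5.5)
Your proof is correct, but you obtain attainment of the infimum by a different route than the paper; the final step, sandwiching with convexity of $\fhat$ and $\fhat \le f$, is the same. The paper proves a closedness statement about sets rather than passing to a limit in parameter space. It invokes \cref{thm:Ufx-has-min}: $U_f(\xvechat) = \conv(\epi(f)) \cap (\{\xvechat\} \times \mathbb{R})$ is closed, so $\fhat(\xvechat) \in U_f(\xvechat)$. This rests on closedness of $\conv(\epi(f))$ (\cref{thm:mlc-conv-epi-closed}), which in turn needs three ingredients:
\begin{itemize}
\item truncating the unbounded epigraph to $H = [0,1]^m \times [f(\vecZero), f(\vecOne)]$ (\cref{thm:mlc-conv-epi-comm});
\item showing that $\epi(f)$ is closed;
\item citing Rockafellar's theorem that the convex hull of a compact set is compact (\cref{thm:conv-closed}).
\end{itemize}

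You instead apply Carath\'eodory first, then take a minimizing sequence in the compact space $\Delta_{m+2} \times ([0,1]^m)^{m+2}$ and pass to the limit using lower semicontinuity of $f$. This is essentially an inline proof of the special case of Rockafellar's theorem that is needed. It avoids the truncation argument entirely, since the parameter space is compact even though the epigraph is not.

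What your route buys is a self-contained argument. It also handles zero-weight indices explicitly, which the paper glosses over when it concludes $f(\xvec^{(i)}) = \fhat(\xvec^{(i)})$ for all $i \in [k]$. What the paper's route buys is reuse: closedness of $\conv(\epi(f))$ is used again in \cref{thm:convex-env-lcont} to show that $\fhat$ is lower-continuous.

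Two minor remarks. First, your Step~2 is unnecessary. In \cref{defn:continuity} the lower limit $f^-$ takes the infimum over the full neighbourhood $N_\delta(\xvechat)$, not only from below, so ``lower-continuous'' in the paper already means lower semicontinuous. Your argument via $\yvec^{(t)} \wedge \xvec$ is valid, just redundant. Second, the lower bound $c$ is simply $f(\vecZero)$ by monotonicity, so no compactness is needed for it.
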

\begin{proof}
Since $f$ is MLC, one can show that $U_f(\xvec)$ has a minimum element
(see \cref{thm:Ufx-has-min} in \cref{sec:real-analysis:convex-functions}),
and so, $\fhat(\xvechat) \in U_f(\xvechat)$.
Then $\exists k \in \mathbb{N}$, $\exists \lambda \in \Delta_k$,
and $\exists \xvec^{(1)}, \ldots, \xvec^{(k)} \in [0, 1]^m$ such that
$\xvec = \sum_{i=1}^k \lambda_i \xvec^{(i)}$ and $\fhat(\xvechat) \ge \sum_{i=1}^k \lambda_i f(\xvec^{(i)})$.

By \cref{thm:convex-env-props}, $\fhat$ is convex.
Hence, $\fhat(\xvechat) \le \sum_{i=1}^k \lambda_i\fhat(\xvec^{(i)})$. Therefore,
\[ \sum_{i=1}^k \lambda_i\fhat(\xvec^{(i)}) \ge \fhat(\xvechat) \ge \sum_{i=1}^k \lambda_i f(\xvec^{(i)})
\implies \sum_{i=1}^k \lambda_i(f(\xvec^{(i)}) - \fhat(\xvec^{(i)})) \le 0. \]
By \cref{thm:convex-env-props}, $f(\xvec) \ge \fhat(\xvec)$ for all $\xvec \in [0, 1]^m$.
Hence, $f(\xvec^{(i)}) = \fhat(\xvec^{(i)})$ for all $i \in [k]$.
\end{proof}

We now use \cref{thm:convex-env-witness} to show that for pricing functions $\p$ and $\pvechat \defeq \conv(\p)$,
a buyer with infinite budget has a point $\xvec^*$ in her demand set where $\p(\xvec^*) = \pvechat(\xvec^*)$.

\begin{lemma}
\label{thm:convex-env-v-minus-p-dom}
Consider a buyer with linear valuation function $\v: [0, 1]^m \to \mathbb{R}_{\ge 0}$.
Let $\p$ be an MLC pricing function, $\pvechat \defeq \conv(\p)$, and $\xvechat \in [0, 1]^m$.
Then $\exists \xvec^* \in [0, 1]^m$ such that $\p(\xvec^*) = \pvechat(\xvec^*)$
and $\v(\xvec^*) - \pvechat(\xvec^*) \ge \v(\xvechat) - \pvechat(\xvechat)$.
\end{lemma}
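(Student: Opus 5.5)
We apply \cref{thm:convex-env-witness} to $f \defeq \p$ at the point $\xvechat$. This gives $k \in \mathbb{N}$, weights $\lambda \in \Delta_k$, and points $\xvec^{(1)}, \ldots, \xvec^{(k)} \in [0, 1]^m$ with the following three properties:
\begin{itemize}
\item $\p(\xvec^{(i)}) = \pvechat(\xvec^{(i)})$ for every $i \in [k]$;
\item $\xvechat = \sum_i \lambda_i \xvec^{(i)}$;
\item $\pvechat(\xvechat) = \sum_i \lambda_i \p(\xvec^{(i)})$.
\end{itemize}
Since $\p$ is MLC, the lemma's hypotheses are satisfied.

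Next we compute the net utility at $\xvechat$ under $\pvechat$. Because $\v$ is linear, $\v(\xvechat) = \sum_i \lambda_i \v(\xvec^{(i)})$. Combining this with the third property and then the first,
\[ \v(\xvechat) - \pvechat(\xvechat) = \sum_{i=1}^k \lambda_i \bigl(\v(\xvec^{(i)}) - \p(\xvec^{(i)})\bigr) = \sum_{i=1}^k \lambda_i \bigl(\v(\xvec^{(i)}) - \pvechat(\xvec^{(i)})\bigr). \]
So the net utility at $\xvechat$ is a weighted average, with weights $\lambda \in \Delta_k$, of the net utilities at the $\xvec^{(i)}$.

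A weighted average is at most its largest term. Choose $i^* \in \arg\max_{i \in [k] : \lambda_i > 0} \bigl(\v(\xvec^{(i)}) - \pvechat(\xvec^{(i)})\bigr)$. This set is nonempty because the $\lambda_i$ sum to $1$, so some $\lambda_i > 0$. Set $\xvec^* \defeq \xvec^{(i^*)}$. Then $\p(\xvec^*) = \pvechat(\xvec^*)$ by the first property, and $\v(\xvec^*) - \pvechat(\xvec^*) \ge \v(\xvechat) - \pvechat(\xvechat)$ by the averaging bound. This is exactly the claim.

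The only substantive step is the existence of an exact witness decomposition, meaning the infimum defining $\pvechat(\xvechat)$ is attained. That is already supplied by \cref{thm:convex-env-witness}, which rests on \cref{thm:Ufx-has-min}, so nothing in this argument is a real obstacle. Linearity of $\v$ is essential: it is what lets $\v$ pass through the convex combination, which is the identity $\conv(\p - \v) = \conv(\p) - \v$ used in the overview.
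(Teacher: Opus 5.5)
Your proposal is correct and matches the paper's proof essentially step for step. You invoke \cref{thm:convex-env-witness}, use linearity of $\v$ to write the net utility at $\xvechat$ as a $\lambda$-weighted average of net utilities at the witness points, and pick the best point. Restricting the argmax to indices with $\lambda_i > 0$ is a harmless refinement, and slightly more careful than the paper: the witness lemma's proof only really certifies $\p(\xvec^{(i)}) = \pvechat(\xvec^{(i)})$ for positive-weight points.
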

\begin{proof}
By \cref{thm:convex-env-witness}, $\exists k \in \mathbb{N}$, $\exists \lambda \in \Delta_k$,
and $\exists \xvec^{(1)}, \ldots, \xvec^{(k)} \in [0, 1]^m$ such that
$\p(\xvec^{(i)}) = \pvechat(\xvec^{(i)})$ for all $i \in [k]$,
$\xvechat = \sum_{i=1}^k \lambda_i\xvec^{(i)}$,
and $\pvechat(\xvechat) = \sum_{i=1}^k \lambda_i\pvechat(\xvec^{(i)})$.
By linearity of $\v$, we get
\[ \v(\xvechat) - \pvechat(\xvechat) = \sum_{i=1}^k \lambda_i(\v(\xvec^{(i)}) - \pvechat(\xvec^{(i)})). \]
Then $x^*=x_t$ where $t \defeq \argmax_{i=1}^k (\v(\xvec^{(i)}) - \pvechat(\xvec^{(i)}))$ satisfies $\v(\xvec^*) - \pvechat(\xvec^*) \ge \v(\xvechat) - \pvechat(\xvechat)$.
\end{proof}

Finally, we show that the revenue of any pricing function $\p$
is at most the revenue of $\conv(\p)$.

\thmConvexEnvDominates*
\begin{proof}
Let $\pvechat \defeq \conv(\p)$.
Since $\p$ is MLC, by \cref{thm:convex-env-props},
$\pvechat$ is monotone, continuous, and convex,
and $\pvechat(\xvec) \le \p(\xvec)$ for all $\xvec \in [0, 1]^m$.

Since $\v$ is linear, $\v - \pvechat$ is concave and continuous.
If $r(\pvechat \mid \v, \infty) \ge b$, then by \cref{thm:budget-exhaustion},
we get $r(\pvechat \mid \v, b) = b \ge r(\p \mid \v, b)$, so we are done.
So now assume $r(\pvechat \mid \v, \infty) < b$.

Recall the definitions of $F$, $u^*$, $\Dem$, and $r$ from \cref{defn:revenue}.
Since $\pvechat$ is MLC, \cref{thm:mlc-demand-nonempty} tells us that
$\Dem(\pvechat \mid \v, \infty)$ is non-empty.
Let $\xvec^* \in \Dem(\pvechat \mid \v, \infty)$.
Then $\v(\xvec^*) - \pvechat(\xvec^*) \ge \v(\xvec) - \pvechat(\xvec)$ for all $\xvec \in [0, 1]^m$.
By \cref{thm:convex-env-v-minus-p-dom}, we can assume \wLoG{} that $\p(\xvec^*) = \pvechat(\xvec^*)$,
since the resulting bundle also maximizes $\v - \pvechat$ over $[0, 1]^m$
and so also lies in $\Dem(\pvechat \mid \v, \infty)$.
Since $r(\pvechat \mid \v, \infty) < b$, \cref{thm:affordable-optimum} gives us
$\Dem(\pvechat \mid \v, \infty) = \Dem(\pvechat \mid \v, b)$, so $\xvec^* \in \Dem(\pvechat \mid \v, b)$.

We next show that $\Dem(\p \mid \v, b) \subseteq \Dem(\pvechat \mid \v, b)$,
which implies $r(\pvechat \mid \v, b) \ge r(\p \mid \v, b)$.

Let $\xvechat \in \Dem(\p \mid \v, b)$.
Suppose $\pvechat(\xvechat) < \p(\xvechat)$. Then
\[ \v(\xvechat) - \p(\xvechat) < \v(\xvechat) - \pvechat(\xvechat)
    \le \v(\xvec^*) - \pvechat(\xvec^*) = \v(\xvec^*) - \p(\xvec^*). \]
This is a contradiction, since $\xvechat$ maximizes $\v-\p$ among all affordable bundles.
Hence, $\pvechat(\xvechat) = \p(\xvechat)$.

Since $\xvechat \in \Dem(\p \mid \v, b)$ and $\xvec^* \in \Dem(\pvechat \mid \v, b)$, we get
\[ \v(\xvechat) - \p(\xvechat) = \v(\xvechat) - \pvechat(\xvechat)
\le \v(\xvec^*) - \pvechat(\xvec^*) = \v(\xvec^*) - \p(\xvec^*) \le \v(\xvechat) - \p(\xvechat). \]
Hence, $u^*(\p \mid \v, b) = u^*(\pvechat \mid \v, b)$,
so $\xvechat \in \Dem(\pvechat \mid \v, b)$.
Since we picked $\xvechat$ arbitrarily from $\Dem(\p \mid \v, b)$,
we get $\Dem(\p \mid \v, b) \subseteq \Dem(\pvechat \mid \v, b)$. Hence,
\[ r(\pvechat \mid \v, b) = \sup_{\xvec \in \Dem(\pvechat \mid \v, b)} \p(\xvec)
    \ge \sup_{\xvec \in \Dem(\p \mid \v, b)} p(\xvec) = r(\p \mid \v, b).
    \qedhere \]
\end{proof}

\subsection{Rate Equalization}
\label{sec:price-char:rate-eq}

In this subsection and the next, we assume that there is a single dataset, i.e., the pricing function is univariate.
(We will remove this assumption in \cref{sec:price-char:sep} by exploiting the pricing function's separability.)

Consider a single dataset and a single buyer having a linear valuation function $\v$ and infinite budget.
If the pricing function $p: [0, 1] \to \mathbb{R}_{\ge 0}$ is convex,
then $\v(\cdot) - p(\cdot)$ is concave, so local maxima are global maxima too.
Hence, the demand set can be determined easily based on where the derivatives of $p$
and $\v$ intersect, as shown in \cref{ex:convex-price-derivative}.

\begin{example}
\label{ex:convex-price-derivative}
Consider a single dataset, and a single buyer having valuation function $\v(x) = x$.
Let $p(x) = x^2$ be the pricing function. Then the derivative of $p$ is $2x$.
The derivative gives us the per-unit price of buying an additional infinitesimal amount of data.
The per-unit valuation of the dataset is 1 (the derivative of $\v$).
Hence, the buyer would purchase as long as either the per-unit price and value equalize (at $x = 1/2$)
or the buyer exhausts her budget (which would happen if the budget is less than $1/4$).
\end{example}

We now formalize this observation.
We will show that the revenue-maximizing point in the buyer's demand set
is the rightmost point for which the \emph{left derivative} of $p$ is at most $\v(1)$.
To do this, we first precisely define this rightmost point.

For any MLC convex function $f: [0, 1] \to \mathbb{R}$ and $\beta \in \mathbb{R}$,
let $f'(x)$ be the left derivative at $x \in (0, 1]$, and let $f'(0) \defeq -\infty$.
Let $\ell(f \mid \beta) \defeq \max(\{x \in [0, 1]: f'(x) \le \beta\})$
be the rightmost point for which the left derivative of $f$ is at most $\beta$.
One can show that $\ell(f \mid \beta)$ is well-defined, i.e.,
$\{x \in [0, 1]: f'(x) \le \beta\}$ has a maximum element.
This is because the left derivative of an MLC convex function is also MLC
(see \cref{thm:convex-ld-mlc} in \cref{sec:real-analysis:ld}),
and $\{x \in [0, 1]: g(x) \le \beta\}$ is a closed set for any MLC function $g$
(see \cref{thm:mlc-feas-compact} in \cref{sec:real-analysis:mlc}).

For any MLC function $f: [0, 1] \to \mathbb{R}$, one can show that
$\ell(f \mid 0)$ is the minimizer of $f$ (\cref{thm:convex-ld-min} in \cref{sec:real-analysis:ld}).
Using this, we show that for any MLC convex pricing function $p$ for a single dataset
and any buyer with valuation function $\v$ and infinite budget,
$\ell(p \mid \v(1))$ is the revenue-maximizing point in the buyer's demand set.

\begin{lemma}
\label{thm:convex-price-rev}
Let $p: [0, 1] \to \mathbb{R}_{\ge 0}$ be an MLC convex pricing function.
Let $\v: [0, 1] \to \mathbb{R}_{\ge 0}$ be a linear function.
Let $x^* \defeq \ell(p \mid \v(1))$.
Then $x^* \in \Dem(p \mid \v, \infty)$ and $r(p \mid \v, \infty) = p(x^*)$.
\end{lemma}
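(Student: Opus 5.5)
Since $\v$ is linear on $[0,1]$ with $\v(0)=0$, we have $\v(x) = \beta x$ with $\beta \defeq \v(1) \ge 0$. Define $g \defeq p - \v$. Then $g$ is convex and MLC, because subtracting a continuous linear function preserves both properties. Its left derivative is $g'(x) = p'(x) - \beta$, and under the convention $g'(0) = -\infty$ this gives
\[ \{x: g'(x) \le 0\} = \{x: p'(x) \le \beta\}, \qquad\text{so}\qquad \ell(g \mid 0) = \ell(p \mid \beta) = x^*. \]
With infinite budget, $F(p \mid \infty) = [0,1]$, so $\Dem(p \mid \v, \infty)$ is exactly the set of minimizers of $g$ on $[0,1]$. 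By \cref{thm:convex-ld-min}, $\ell(g \mid 0)$ is a minimizer of $g$. Hence $x^* \in \Dem(p \mid \v, \infty)$. This gives the first claim.

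For the revenue claim, I will show that every $y \in \Dem(p \mid \v, \infty)$ satisfies $y \le x^*$. Since $p$ is monotone, this gives $p(y) \le p(x^*)$, and so
\[ r(p \mid \v, \infty) = \sup_{y \in \Dem} p(y) = p(x^*), \]
where the supremum is attained at $x^*$ itself. Suppose for contradiction that some minimizer $y$ of $g$ has $y > x^*$. The set of minimizers of a convex function is an interval, so $g$ is constant on $[x^*, y]$. Then the left derivative of $g$ at $y$ is $0$, because $y > x^* \ge 0$ and $g$ is constant immediately to the left of $y$. This means $p'(y) = \beta$, so $y$ lies in $\{x: p'(x) \le \beta\}$. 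That contradicts the maximality of $x^*$.

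The main obstacle is the boundary case $x^* = 0$, together with checking that the convention $f'(0) = -\infty$ is used consistently. When $x^*=0$, the interval argument still works: if $y > 0$ minimizes $g$, then $0$ must also minimize $g$, since $x^* = \ell(g \mid 0)$ is a minimizer. So $g$ is constant on $[0,y]$, and again $g'(y) = 0$, which is the same contradiction.

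I also need the set of minimizers to be an interval, that is, convex. If $u < w$ are both minimizers, convexity of $g$ gives $g \le \min g$ on $[u,w]$, so every point in between is also a minimizer. The routine real-analysis facts I rely on are \cref{thm:convex-ld-min} and the well-definedness of $\ell$. The latter holds because the left derivative of an MLC convex function is itself MLC, so the sublevel set defining $\ell$ is closed and has a maximum element.
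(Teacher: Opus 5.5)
Your proof is correct and follows the paper's route. You pass to $g = p - \v$, note that $\ell(g \mid 0) = \ell(p \mid \v(1))$, invoke \cref{thm:convex-ld-min}, and finish with monotonicity of $p$. Your interval argument only re-derives the strict-inequality half of \cref{thm:convex-ld-min}, which the paper cites directly. One small slip: $g = p - \v$ need not be monotone (e.g.\ $p(x) = x$, $\v(x) = 2x$), so $g$ is lower-continuous and convex but not MLC in general. This is harmless, since \cref{thm:convex-ld-min} and \cref{thm:convex-ld-mlc} require only lower-continuity and convexity.
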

\begin{proof}
Since the sum of convex functions is convex, the sum of lower-continuous functions is lower-continuous,
and the left derivative distributes over addition,
we get that $p-\v$ is convex and lower-continuous, and $(p-\v)'(x) = p'(x) - \v(1)$ for all $x \in [0, 1]$.
Thus, $x^* = \ell(p \mid \v(1)) = \ell(p - \v, 0)$.
By \cref{thm:convex-ld-min} in \cref{sec:real-analysis:ld}, we get that
$x^*$ minimizes $p - \v$, and no point in $(x^*, 1]$ minimizes $p - \v$.
Thus, $x^* \in \Dem(p \mid \v, \infty)$, and for any point $x \in \Dem(p \mid \v, \infty)$,
we have $x \le x^*$. By monotonicity of $p$, we get $p(x) \le p(x^*)$.
Thus, $r(p \mid \v, \infty) = p(x^*)$.
\end{proof}

\subsection{Piecewise-Linearization}
\label{sec:price-char:disc}

In this subsection, we show how to \emph{piecewise-linearize} a univariate convex pricing function,
i.e., we prove that a convex pricing function for a single dataset can be replaced by
a piecewise linear pricing function without losing revenue.

Convexity makes it easy to modify the pricing function without losing revenue.
In \cref{ex:convex-price-derivative}, if we increase the price such that
the sign of $\v'(x) - p'(x)$ does not invert for any $x$,
then the global maximum of $\v-p$ would remain the same.
Hence, the buyer would continue to either exhaust her budget
or buy at the same global maximum, which now has a higher price.

If we have multiple buyers, and each buyer $i$ has a linear valuation function $\v_i$,
then we round up the value of $p'(x)$ for each $x$ to the nearest element in the set $\{\v_i(1): i \in [n]\}$
(and round it down to $\max_i \v_i(1)$ if $p'(x) > \max_i \v_i(1)$).
Then $p'$ becomes piecewise constant, and so, $p$ becomes piecewise linear.
See \cref{fig:plin} in \cref{sec:overview:plin} for an example.

Note that such a modification does not always work for non-convex functions.
In \cref{ex:non-convex-price}, if we increase the per-unit price in the interval $(0.4, 0.6]$
from $3.5$ to $5$, then the loss incurred by purchasing the segment $(0.4, 0.6]$
will not be offset by the gain incurred by purchasing $(0.6, 1]$.
For large budgets, this would change the demand set from $\{1\}$ to $\{0.4\}$,
decreasing the revenue from 1.5 to 0.4.

We now formally define piecewise-linearization and prove that it does not decrease revenue.

\begin{definition}[Piecewise-linear functions]
\label{defn:plin}
Let $k \in \mathbb{N}$, $\zvec \in [0, 1]^{k-1}$, and $\alphavec \in \mathbb{R}^k$ such that
$z_1 \le \ldots \le z_{k-1}$ and $\alpha_1 \le \ldots \le \alpha_k$. Let $z_0 \defeq 0$.
We define $\plin(\zvec, \alphavec)$ to be the function $f: [0, 1] \to \mathbb{R}$
where $f(0) = 0$, and for any $x > 0$,
\[ f(x) \defeq \left(\sum_{j=1}^{i-1} \alpha_j(z_j - z_{j-1})\right) + \alpha_i(x - z_{i-1}), \]
where $i \in [k]$ is the largest such that $z_{i-1} < x$.
\end{definition}

\begin{remark}
\label{thm:plin-is-mcc}
For non-decreasing sequences $\zvec \in [0, 1]^{k-1}$ and $\alphavec \in \mathbb{R}^k$,
the function $f \defeq \plin(\zvec, \alphavec)$ is continuous and convex.
Moreover, if $\alpha_1 \ge 0$, then $f$ is also monotone.
\end{remark}

\begin{definition}[Piecewise-linearization]
\label{defn:disc}
Let $f: [0, 1] \to \mathbb{R}$ be a continuous convex function.
Let $S = \{\alpha_1, \ldots, \alpha_k\}$, where $\alpha_1 < \ldots < \alpha_k$.
For all $i \in [k-1]$, let $z_i \defeq \ell(f \mid \alpha_i)$ (c.f.~\cref{sec:price-char:rate-eq}).
Then $0 \le z_1 \le \ldots \le z_{k-1} \le 1$.
The function $\fhat \defeq f(0) + \plin(\zvec, \alphavec)$
is called the \emph{piecewise-linearization} or \emph{discretization} of $f$,
and is denoted by $\disc(f, S)$.
\end{definition}

To see how piecewise linearizing a pricing function $p$ affects the revenue,
we must first understand how the revenue-maximizing point in the demand set changes.
By \cref{thm:convex-price-rev}, we know that this point is $\ell(p \mid \v(1))$,
so let's first see how this point is affected.

\begin{lemma}
\label{thm:disc-dominates}
Let $f: [0, 1] \to \mathbb{R}$ be a continuous convex function,
let $S \subset \mathbb{R}_{\ge 0}$ be a finite set of non-negative numbers,
and let $\fhat \defeq \disc(f, S)$.
Let $\beta \in S$, let $x^* \defeq \ell(f \mid \beta)$, and let $\xhat \defeq \ell(\fhat \mid \beta)$.
Then $x^* \le \xhat$ and $f(x^*) \le \fhat(\xhat)$.
\end{lemma}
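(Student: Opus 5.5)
We must show $x^* \le \xhat$ and $f(x^*) \le \fhat(\xhat)$. Write $S=\{\alpha_1<\dots<\alpha_k\}$, $z_i=\ell(f\mid\alpha_i)$ for $i\in[k-1]$, and $z_0=0$. We may also set $z_k=1$. Let $\beta=\alpha_s$.

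\textbf{Step 1: the left derivative of $\fhat$.} First I would describe $\fhat'$ explicitly. By \cref{defn:plin}, for $x\in(z_{i-1},z_i]$ we have $\fhat'(x)=\alpha_i$. This means $\fhat'(x)\le\alpha_s$ exactly when $x\le z_s$ (if $s=k$, every $x\in(0,1]$ qualifies). Since $\fhat'(0)=-\infty$, we get $\xhat=\ell(\fhat\mid\alpha_s)=z_s$ for $s\le k-1$, and $\xhat=1$ for $s=k$. In both cases $x^*=\ell(f\mid\alpha_s)\le\xhat$: either $\xhat=z_s=x^*$ by definition, or $\xhat=1\ge x^*$. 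There is a degenerate case where some $z_{i-1}=z_i$, so an interval is empty. Then the largest index $i$ with $z_{i-1}<x$ still gives a slope $\le\alpha_s$ iff $x\le z_s$, so the claim holds unchanged. I expect this bookkeeping to be routine.

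\textbf{Step 2: $\fhat\ge f$ on $[0,1]$.} This is the real content. Both functions are convex and continuous, hence absolutely continuous, and they agree at $0$. So it suffices to show $\fhat'(x)\ge f'(x)$ for almost every $x$, and then integrate: $\fhat(x)-f(x)=\int_0^x(\fhat'-f')\ge0$. Take $x\in(z_{i-1},z_i]$ with $i\le k-1$. Since $x\le z_i=\ell(f\mid\alpha_i)$ and $f'$ is nondecreasing, $f'(x)\le f'(z_i)\le\alpha_i=\fhat'(x)$. Here I use that $z_i$ lies in the closed set $\{f'\le\alpha_i\}$; the maximum exists as noted in \cref{sec:price-char:rate-eq}.

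\textbf{Main obstacle: the last segment.} On $(z_{k-1},1]$ we have $\fhat'=\alpha_k$, but $f'$ may exceed $\alpha_k$, so the pointwise comparison fails there. I would avoid needing it. If $s\le k-1$, then $\xhat=z_s\le z_{k-1}$, so the integral argument on $[0,\xhat]$ uses only the earlier segments and gives $\fhat(\xhat)\ge f(\xhat)=f(x^*)$. If $s=k$, integrate only over $[0,x^*]$ instead. On $(0,x^*]$ we have $f'\le f'(x^*)\le\alpha_k$, and $\fhat'$ equals $\alpha_i\ge f'$ on each earlier segment and $\alpha_k$ on the last one. Hence $\fhat(x^*)\ge f(x^*)$.

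In both cases, $\fhat(\xhat)\ge\fhat(x^*)\ge f(x^*)$, because $\fhat$ is monotone on $[x^*,\xhat]$. This uses $\alpha_1\ge0$, which holds since $S\subset\mathbb{R}_{\ge0}$, together with \cref{thm:plin-is-mcc}. This completes the proof.
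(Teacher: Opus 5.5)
Your proof is correct and follows essentially the same route as the paper. Both arguments identify $\xhat = z_s$ (or $\xhat = 1$ when $\beta = \max S$), show $f(x^*) \le \fhat(x^*)$ by comparing slopes segment by segment up to $x^* = \ell(f \mid \beta)$, and finish with monotonicity of $\fhat$. (Your heading ``$\fhat \ge f$ on $[0,1]$'' overstates what holds, as your last paragraph correctly recognizes.)

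The only technical difference is how the slopes are compared. You integrate $\fhat' - f' \ge 0$ almost everywhere, which needs absolute continuity. The paper avoids that by telescoping the chord inequality $f(z_i) - f(z_{i-1}) \le f'(z_i)(z_i - z_{i-1}) \le \alpha_i (z_i - z_{i-1}) = \fhat(z_i) - \fhat(z_{i-1})$ from \cref{thm:convex-ld-rates} over the breakpoints, with $z_k \defeq \ell(f \mid \alpha_k)$.
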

\begin{proof}
Let $S = \{\alpha_1, \ldots, \alpha_k\}$, where $\alpha_1 < \ldots < \alpha_k$.
By \cref{defn:disc}, $\fhat = \plin(\zvec, \alphavec)$ for some $\zvec \in [0, 1]^{k-1}$,
where $z_i = \ell(f \mid \alpha_i)$.
Let $z_0 = 0$, $z_k \defeq \ell(f \mid \alpha_i)$, and $\alpha_0 = 0$.

For all $i \in [k]$ and $x \in (z_{i-1}, z_i]$,
we have $f'(x) \in (\alpha_{i-1}, \alpha_i]$ and $\fhat'(x) = \alpha_i$.
Hence, for all $i \in [k-1]$, we have
$\ell(\fhat \mid \alpha_i) = z_i = \ell(f \mid \alpha_i)$.
Moreover, $\ell(\fhat \mid \alpha_k) = 1 \ge z_k = \ell(f \mid \alpha_k)$.

For all $i \in [k]$, we have $f'(z_i) \le \alpha_i$.
By \cref{thm:convex-ld-rates} in \cref{sec:real-analysis:ld},
we have $f(b) - f(a) \le f'(b)(b-a)$ for a convex function $f$ and $0 \le a < b \le 1$. Hence,
\[ f(z_i) - f(z_{i-1}) \le f'(z_i)(z_i - z_{i-1}) \le \alpha_i(z_i - z_{i-1}) = \fhat(z_i) - \fhat(z_{i-1}). \]
Summing over $i$, we get $f(z_i) \le \fhat(z_i)$ for all $i \in [k]$.
Moreover, since $\fhat$ is monotone, we get $\fhat(1) \ge \fhat(z_k) = f(z_k)$.

Hence, for all $i \in [k]$, we get
$\ell(\fhat \mid \alpha_i) \ge \ell(f \mid \alpha_i)$
and $\fhat(\ell(\fhat \mid \alpha_i)) \ge f(\ell(f \mid \alpha_i))$.
\end{proof}

\thmDiscRevenue*
\begin{proof}
$\phat$ is monotone, continuous, convex, and piecewise-linear by \cref{thm:plin-is-mcc}.
Let $x^* \defeq \ell(p \mid \v(1))$ and $\xhat \defeq \ell(\phat \mid \v(1))$.
By \cref{thm:disc-dominates}, we get $\phat(\xhat) \ge p(x^*)$.
By \cref{thm:convex-price-rev,thm:budget-exhaustion}, we get
$r(\phat \mid \v, b) = \min(b, \phat(\xhat)) \ge \min(b, p(x^*)) = r(p \mid \v, b)$.
\end{proof}

\subsection{Separability}
\label{sec:price-char:sep}

So far we have seen how to \emph{convexify} and \emph{piecewise-linearize} pricing functions.
However, for a separable pricing function, our convexification procedure operates on the entire function,
whereas piecewise-linearization operates on the components.
To compose these operations, we must understand how operations on a separable function
affect the components and vice-versa.
Hence, in this subsection, we explore the properties of separable functions.

\begin{remark}
Let $f: [0, 1]^m \to \mathbb{R}$ be a separable function
and $f_1, \ldots, f_m$ be its components.
Then for any $j \in [m]$ and any $x \in [0, 1]$, we have $f_j(x) = f_j(0) + f(x\vecE^{(j)}) - f(\vecZero)$.
Hence, the decomposition into components is unique up to the value at zero.
\end{remark}

A useful idea for analyzing separable functions is to induct on the number of components.
Say we have a separable function $h: [0, 1]^{m+n} \to \mathbb{R}$.
Let's denote the sum of the first $m$ components by $f$ and the last $n$ components by $g$.
Then $f$ and $g$ are separable, and we get $h(\xvec, \yvec) = f(\xvec) + g(\yvec)$
for all $\xvec \in [0, 1]^m$ and $\yvec \in [0, 1]^n$.

We begin by showing that a separable function being
monotone, (lower) continuous, or convex is equivalent to its components having the same properties.

\begin{restatable}{lemma}{thmSepProps}
\label{thm:sep-props}
Let $h(\xvec, \yvec) \defeq f(\xvec) + g(\yvec)$ for all $\xvec$ and $\yvec$, where $f: [0, 1]^m \to \mathbb{R}$,
$g: [0, 1]^n \to \mathbb{R}$, and $h: [0, 1]^{m+n} \to \mathbb{R}$. Then
\begin{tightenum}
\item $h$ is monotone iff $f$ and $g$ are monotone.
\item $h$ is lower-continuous iff $f$ and $g$ are lower-continuous.
\item $h$ is upper-continuous iff $f$ and $g$ are upper-continuous.
\item $h$ is convex iff $f$ and $g$ are convex.
\end{tightenum}
\end{restatable}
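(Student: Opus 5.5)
The plan is to prove each of the four equivalences by the same two-direction template. The ``if'' direction follows by adding the properties of $f$ and $g$. The ``only if'' direction follows by restricting $h$ to a slice where one block of coordinates is frozen, which turns $h$ into $f$ or $g$ plus a constant. Concretely, for any fixed $\yvec_0 \in [0,1]^n$ we have $f(\xvec) = h(\xvec, \yvec_0) - g(\yvec_0)$ for all $\xvec$, and symmetrically for $g$. A property of $h$ that survives restriction to an affine slice and subtraction of a constant therefore passes to $f$, and likewise to $g$. I will take the definitions of monotone and of lower/upper continuity from \cref{sec:real-analysis:mlc}, reading lower continuity as $\liminf_{\zvec \to \zvec_0} h(\zvec) \ge h(\zvec_0)$, or its one-sided variant if that is what \cref{defn:continuity} specifies.

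\textbf{Monotonicity.} If $f$ and $g$ are monotone and $(\xvec,\yvec) \le (\xvec',\yvec')$ coordinatewise, then $h(\xvec,\yvec) = f(\xvec)+g(\yvec) \le f(\xvec')+g(\yvec') = h(\xvec',\yvec')$. Conversely, if $\xvec \le \xvec'$, then $(\xvec,\yvec_0) \le (\xvec',\yvec_0)$, so $f(\xvec)-f(\xvec') = h(\xvec,\yvec_0) - h(\xvec',\yvec_0) \le 0$.

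\textbf{Convexity.} The forward direction holds because a sum of convex functions of disjoint coordinate blocks is convex. For the converse, take $\lambda\in[0,1]$ and $\xvec,\xvec'$. Then
\[
f(\lambda\xvec+(1-\lambda)\xvec') + g(\yvec_0) = h(\lambda(\xvec,\yvec_0)+(1-\lambda)(\xvec',\yvec_0)) \le \lambda h(\xvec,\yvec_0)+(1-\lambda)h(\xvec',\yvec_0),
\]
and after cancelling $g(\yvec_0)$ this is exactly convexity of $f$.

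\textbf{Lower continuity (upper is symmetric).} For the forward direction, use superadditivity of $\liminf$. For any sequence $(\xvec_k,\yvec_k) \to (\xvec,\yvec)$ we get $\liminf (f(\xvec_k)+g(\yvec_k)) \ge \liminf f(\xvec_k) + \liminf g(\yvec_k) \ge f(\xvec)+g(\yvec)$. All values are finite, since the functions are real-valued and the $\liminf$s are bounded below by lower continuity, so no $\infty-\infty$ issue arises. For the converse, given $\xvec_k \to \xvec$, consider the sequence $(\xvec_k,\yvec_0) \to (\xvec,\yvec_0)$; then $\liminf f(\xvec_k) = \liminf h(\xvec_k,\yvec_0) - g(\yvec_0) \ge f(\xvec)$. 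If lower continuity is instead defined only along monotone or one-sided approaches, the same argument works, because freezing $\yvec_0$ preserves the direction of approach: $\xvec_k \uparrow \xvec$ implies $(\xvec_k,\yvec_0)\uparrow(\xvec,\yvec_0)$. For upper continuity, the same argument applied with $\limsup$ in place of $\liminf$ (equivalently, applied to $-h$) gives the result.

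\textbf{Main obstacle.} The only delicate step is the forward direction of lower continuity. If the paper's definition restricts the allowed approach directions, for example approach from below, I must check that a joint sequence approaching $(\xvec,\yvec)$ in that sense projects to sequences $\xvec_k$ and $\yvec_k$ that each approach in the same sense. For coordinatewise-from-below approach this holds trivially, so I expect the definition to make the argument go through. Otherwise I will pass to subsequences realizing each $\liminf$ separately and use that $\liminf$ of a sum dominates the sum of the $\liminf$s.
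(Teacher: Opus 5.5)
Your monotonicity and convexity arguments are the same as the paper's. The only difference is that the paper freezes $\yvec=\vecZero$ where you freeze an arbitrary $\yvec_0$.

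For lower and upper continuity you take a different route, and it is valid. The paper works directly with \cref{defn:continuity}. Since $N_\delta$ is an $\ell_\infty$-ball, $N_\delta((\xvechat,\yvechat))=N_\delta(\xvechat)\times N_\delta(\yvechat)$, so the infimum of $f+g$ over it splits. An $\eps$-argument then gives the exact pointwise identity $h^-(\xvechat,\yvechat)=f^-(\xvechat)+g^-(\yvechat)$. Both directions follow at once; the converse comes from $0=h-h^-=(f-f^-)+(g-g^-)$ with each bracket nonnegative.

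You instead use the sequential $\liminf$ characterization: superadditivity of $\liminf$ for the forward direction, and slicing at $\yvec_0$ for the converse. This does not need the neighborhoods to be products. However, it rests on the equivalence between $\sup_{\delta>0}\inf_{N_\delta}$ and sequential $\liminf$, which the paper never states. You should cite it or prove it: if $f^-(\xvechat)<c<f(\xvechat)$, choose $\xvec_k\in N_{1/k}(\xvechat)$ with $f(\xvec_k)<c$. The paper's route avoids sequences entirely and produces the reusable identity $h^-=f^-+g^-$.

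Two minor points:
\begin{itemize}
\item \cref{defn:continuity} uses full two-sided neighborhoods, and the lemma assumes no monotonicity. Your hedge about one-sided approach, which you call the ``main obstacle,'' is therefore moot.
\item The $\liminf$s in your forward direction need not be finite, since a real-valued lower-continuous function can be unbounded near a point. What you need, and what holds, is that both are $>-\infty$, which rules out $\infty-\infty$.
\end{itemize}
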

\begin{proof}
The proof is deferred (page~\pageref{prf:sep-props}) to \cref{sec:real-analysis:sep}.
\end{proof}

Next, we show that taking the convex hull of a separable function
is equivalent to taking the convex hull of each of its components.

\begin{restatable}{lemma}{thmConvexEnvSep}
\label{thm:convex-env-sep}
Let $h(\xvec, \yvec) \defeq f(\xvec) + g(\yvec)$ for all $\xvec$ and $\yvec$, where $f: [0, 1]^m \to \mathbb{R}$,
$g: [0, 1]^n \to \mathbb{R}$, and $h: [0, 1]^{m+n} \to \mathbb{R}$.
Let $\hhat \defeq \conv(h)$, $\fhat \defeq \conv(f)$, and $\ghat \defeq \conv(g)$.
Then $\hhat(\xvec, \yvec) = \fhat(\xvec) + \ghat(\yvec)$ for all $\xvec$ and $\yvec$.
\end{restatable}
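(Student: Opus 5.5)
We prove the two inequalities $\hhat(\xvec,\yvec) \le \fhat(\xvec)+\ghat(\yvec)$ and $\hhat(\xvec,\yvec) \ge \fhat(\xvec)+\ghat(\yvec)$ directly from \cref{defn:convex-env}, working with the sets $U_h$, $U_f$, $U_g$ rather than any characterization requiring lower-continuity.

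\textbf{Upper bound.} Fix $(\xvec,\yvec)$ and any $s \in U_f(\xvec)$, $t \in U_g(\yvec)$. Let the witnesses be $\lambda \in \Delta_k$ with points $\xvec^{(1)},\ldots,\xvec^{(k)}$, and $\mu \in \Delta_l$ with points $\yvec^{(1)},\ldots,\yvec^{(l)}$. Use the product weights $\lambda_a\mu_c$ on the $kl$ points $(\xvec^{(a)},\yvec^{(c)})$; this is the one construction in the proof. These weights lie in $\Delta_{kl}$ and average to $(\xvec,\yvec)$. Moreover,
\[ \sum_{a,c}\lambda_a\mu_c\, h(\xvec^{(a)},\yvec^{(c)}) = \sum_a \lambda_a f(\xvec^{(a)}) + \sum_c \mu_c g(\yvec^{(c)}) \le s+t, \]
so $s+t \in U_h(\xvec,\yvec)$. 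Taking infima over $s$ and $t$ gives $\hhat(\xvec,\yvec) \le \fhat(\xvec)+\ghat(\yvec)$.

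\textbf{Lower bound.} Take any $y \in U_h(\xvec,\yvec)$ with witness $\lambda \in \Delta_k$ and points $(\xvec^{(i)},\yvec^{(i)})$. Projecting onto each block, the same $\lambda$ witnesses $\sum_i\lambda_i f(\xvec^{(i)}) \in U_f(\xvec)$ and $\sum_i\lambda_i g(\yvec^{(i)}) \in U_g(\yvec)$. Hence
\[ y \ge \sum_i \lambda_i f(\xvec^{(i)}) + \sum_i \lambda_i g(\yvec^{(i)}) \ge \fhat(\xvec)+\ghat(\yvec). \]
Taking the infimum over $y$ gives $\hhat(\xvec,\yvec) \ge \fhat(\xvec)+\ghat(\yvec)$.

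\textbf{Main obstacle.} The only real subtlety is infinite values. If $f$ or $g$ is unbounded below, an infimum could be $-\infty$, and the sum $\fhat(\xvec)+\ghat(\yvec)$ must then be read in the extended reals. Both inequalities still hold there, since the argument compares sets and never needs the infima to be attained. In the application $f,g \ge 0$, so all values are finite and no extended-real issues arise.
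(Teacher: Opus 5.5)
Your proof is correct and follows the paper's argument. The paper establishes the set identity $U_h((\xvec,\yvec)) = U_f(\xvec) + U_g(\yvec)$, using your product-weight construction for one inclusion and your block-wise projection of a witness for the other, and then concludes via $\inf(S+T) = \inf(S) + \inf(T)$; you prove the two inequalities directly, which is the same content, and your projection step is stated more cleanly than the paper's version, which drops the weights $\rho_i$ in its definition of $z'$.
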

\begin{proof}
For any $S, T \subseteq \mathbb{R}$, let $S + T \defeq \{x+y: x \in S, y \in T\}$.
Then $\inf(S + T) = \inf(S) + \inf(T)$.
We will show that for all $\xvec \in [0, 1]^m$ and $\yvec \in [0, 1]^n$, we have
$U_h((\xvec, \yvec)) = U_f(\xvec) + U_g(\yvec)$. Then
$\hhat(\xvec, \yvec) = \inf(U_h((\xvec, \yvec))) = \inf(U_f(\xvec)) + \inf(U_g(\yvec)) = \fhat(\xvec) + \ghat(\yvec)$.

Let $\xvechat \in [0, 1]^m$ and $\yvechat \in [0, 1]^n$.
Let $z_1 \in U_f(\xvechat)$ and $z_2 \in U_g(\yvechat)$.
This means that
$\exists p \in \mathbb{N}$, $\exists \mu \in \Delta_p$,
and $\exists \xvec^{(1)}, \ldots, \xvec^{(p)} \in [0, 1]^m$ such that
\begin{align*}
\xvechat &= \sum_{i=1}^p \mu_i \xvec^{(i)},
& z_1 &\ge \sum_{i=1}^p \mu_i f(\xvec^{(i)}),
\end{align*}
and $\exists q \in \mathbb{N}$, $\exists \nu \in \Delta_q$,
and $\exists \xvec^{(1)}, \ldots, \xvec^{(q)} \in [0, 1]^n$ such that
\begin{align*}
\yvechat &= \sum_{j=1}^q \nu_j \yvec^{(j)},
& z_2 &\ge \sum_{j=1}^q \nu_j g(\yvec^{(j)}).
\end{align*}

We will now show that $z_1 + z_2 \in U_h(\xvechat, \yvechat)$.
Let $\lambda_{i,j} \defeq \mu_i\nu_j$ for all $i \in [p]$ and $j \in [q]$. Then
\begin{align*}
& \sum_{i=1}^p \sum_{j=1}^q \lambda_{i,j} = 1,
\qquad \sum_{i=1}^p \sum_{j=1}^q \lambda_{i,j}\xvec^{(i)} = \xvechat,
\qquad \sum_{i=1}^p \sum_{j=1}^q \lambda_{i,j}\yvec^{(j)} = \yvechat,
\\ &\sum_{i=1}^p \sum_{j=1}^q \lambda_{i,j}h(\xvec^{(i)}, \yvec^{(j)})
    = \sum_{i=1}^p \mu_i f(\xvec^{(i)}) + \sum_{j=1}^q \nu_j g(\yvec^{(j)})
    \le z_1 + z_2.
\end{align*}
Hence, $z_1 + z_2 \in U_h(\xvechat, \yvechat)$.

Suppose $\zhat \in U_h(\xvechat, \yvechat)$.
This means that $\exists k \in \mathbb{N}$, $\exists \rho \in \Delta_k$,
$\exists \xvec^{(1)}, \ldots, \xvec^{(k)} \in [0, 1]^m$,
and $\exists \yvec^{(1)}, \ldots, \yvec^{(k)} \in [0, 1]^n$ such that
\begin{align*}
\xvechat &= \sum_{i=1}^k \rho_i \xvec^{(i)},
\qquad \yvechat = \sum_{i=1}^k \rho_i \yvec^{(i)},
\\ \zhat &\ge \sum_{i=1}^k \rho_i (f(\xvec^{(i)}) + g(\yvec^{(j)})).
\end{align*}
Let $z' \defeq \sum_{i=1}^k f(\xvec^{(i)})$.
Then $z' \in U_f(\xvechat)$ and $\zhat - z' \in U_g(\yvechat)$.
Hence, $\zhat = z' + (\zhat - z') \in U_f(\xvechat) + U_g(\yvechat)$.
Since we picked $z_1$, $z_2$, and $\zhat$ arbitrarily,
we get $U_h(\xvechat, \yvechat) = U_f(\xvechat) + U_g(\yvechat)$.
This completes the proof of the lemma.
\end{proof}

We now show (in \cref{thm:sep-opt}) that optimizing over a separable function
is equivalent to optimizing over its components.
This helps us relate the demand sets and revenue of a separable pricing function
to the demand sets and revenues of its components (see \cref{thm:revenue-sep}).

\begin{restatable}{lemma}{thmSepOpt}
\label{thm:sep-opt}
Let $h(\xvec, \yvec) \defeq f(\xvec) + g(\yvec)$ for all $\xvec$ and $\yvec$,
where $f: [0, 1]^m \to \mathbb{R}$, $g: [0, 1]^n \to \mathbb{R}$, and $h: [0, 1]^{m+n} \to \mathbb{R}$.
Let $X^* \defeq \argmax_{\xvec \in [0, 1]^m} f(\xvec)$, $Y^* \defeq \argmax_{\yvec \in [0, 1]^n} g(\yvec)$,
and $Z^* \defeq \argmax_{\zvec \in [0, 1]^{m+n}} h(\zvec)$. Then $Z^* = X^* \times Y^*$.
\end{restatable}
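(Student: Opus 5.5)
The plan is to prove the two inclusions $X^* \times Y^* \subseteq Z^*$ and $Z^* \subseteq X^* \times Y^*$ directly from the definition of $\argmax$. We use only the identity $h(\xvec,\yvec) = f(\xvec) + g(\yvec)$ and the fact that the domain $[0,1]^{m+n}$ is exactly the product $[0,1]^m \times [0,1]^n$. We treat $\argmax$ as the set of points attaining the supremum, which may be empty. Care is needed because a maximizer of $h$ could in principle exist while the suprema of $f$ or $g$ are not attained, so we cannot begin by naming $\max f$ and $\max g$.

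\textbf{First inclusion.} Take $(\xvec^*, \yvec^*) \in X^* \times Y^*$. For any $(\xvec, \yvec) \in [0,1]^{m+n}$ we have $f(\xvec) \le f(\xvec^*)$ and $g(\yvec) \le g(\yvec^*)$. Adding these gives $h(\xvec,\yvec) \le h(\xvec^*,\yvec^*)$, so $(\xvec^*,\yvec^*) \in Z^*$.

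\textbf{Second inclusion.} Take $(\xvec^*, \yvec^*) \in Z^*$. For any $\xvec \in [0,1]^m$, the point $(\xvec, \yvec^*)$ lies in the domain. Hence $f(\xvec) + g(\yvec^*) = h(\xvec,\yvec^*) \le h(\xvec^*,\yvec^*) = f(\xvec^*) + g(\yvec^*)$. Cancelling the finite real number $g(\yvec^*)$ gives $f(\xvec) \le f(\xvec^*)$, so $\xvec^* \in X^*$. The symmetric argument, fixing $\xvec^*$ and varying $\yvec$, shows $\yvec^* \in Y^*$.

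This argument also covers the empty cases. If $X^*$ or $Y^*$ is empty, then the second inclusion forces $Z^*$ to be empty, and the product is empty as well.

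The only point requiring attention is that this cancellation is legitimate because $f$ and $g$ are real-valued, so no infinities occur. Otherwise the proof is entirely routine, and I expect no real obstacle. The main use of the lemma downstream is to apply it with $\v - \p$ for a separable pricing function $\p$, since linear $\v$ is also separable.
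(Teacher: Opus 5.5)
Your proof is correct, and it differs from the paper's proof in how it is organized.

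The paper picks $\xvec^* \in X^*$, $\yvec^* \in Y^*$ and $(\xvechat,\yvechat) \in Z^*$ all at once. It writes the chain $h(\xvec^*,\yvec^*) \ge h(\xvechat,\yvechat) \ge h(\xvec^*,\yvec^*)$ and reads off both inclusions from the fact that every inequality is tight. That argument tacitly assumes all three argmax sets are non-empty. This is harmless where the paper uses the lemma, since the demand sets there are non-empty.

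You prove the two inclusions separately. For $Z^* \subseteq X^* \times Y^*$ you compare $(\xvec^*,\yvec^*)$ with the slice point $(\xvec,\yvec^*)$ and cancel the real number $g(\yvec^*)$. That step never needs a maximizer of $f$ or $g$ in advance. Likewise, your inclusion $X^* \times Y^* \subseteq Z^*$ never needs an element of $Z^*$.

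So your version proves the lemma exactly as stated, with no non-emptiness hypothesis. It also yields the small extra fact that $Z^* \neq \emptyset$ forces $X^* \neq \emptyset$ and $Y^* \neq \emptyset$. The paper's route is a line shorter, while yours is slightly more general and fully rigorous in the degenerate cases.
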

\begin{proof}
The proof is deferred (page~\pageref{prf:sep-opt}) to \cref{sec:real-analysis:sep}.
\end{proof}

\begin{lemma}
\label{thm:revenue-sep}
For $j \in [m]$, let $p_j: [0, 1] \to \mathbb{R}$ be an MLC pricing function.
For $\xvec \in [0, 1]^m$, let $\p(\xvec) \defeq \sum_{j=1}^m p_j(x_j)$.
Let $u_j: [0, 1] \to \mathbb{R}_{\ge 0}$ be a continuous valuation function.
For $\xvec \in [0, 1]^m$, let $\v(\xvec) \defeq \sum_{j=1}^m u_j(x_j)$. Then
\[ r(\p \mid \v, \infty) = \sum_{j=1}^m r(p_j \mid u_j, \infty). \]
\end{lemma}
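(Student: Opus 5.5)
With infinite budget the feasible set is all of $[0,1]^m$. So the demand set $\Dem(\p \mid \v, \infty)$ is exactly the argmax of $\v - \p$ over $[0,1]^m$. Since $\v - \p = \sum_{j=1}^m (u_j - p_j)$ is separable, the plan is to split this argmax into a product and then split the supremum of prices over that product.

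\textbf{Step 1 (product structure of the demand set).} Apply \cref{thm:sep-opt} to $h \defeq \v - \p$, inducting on the number of components: write $h(\xvec) = (u_1 - p_1)(x_1) + \sum_{j=2}^m (u_j - p_j)(x_j)$. The first term plays the role of $f$ and the remaining separable sum plays the role of $g$. This gives
\[ \Dem(\p \mid \v, \infty) = \Dem(p_1 \mid u_1, \infty) \times \cdots \times \Dem(p_m \mid u_m, \infty). \]
\cref{thm:sep-opt} is stated for argmax sets without any nonemptiness hypothesis, so the identity holds as sets in all cases. For the revenue to be meaningful, each factor should be nonempty. I would get this from \cref{thm:mlc-demand-nonempty}, applied to each $p_j$ with valuation $u_j$ and, say, budget $b = p_j(1)$, which makes every bundle affordable. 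One caveat: that lemma asks for $u_j$ monotone and $p_j(0) = 0$. If $p_j$ is only assumed MLC, I would either note that shifting $p_j$ by a constant does not change the argmax, or obtain existence directly: $u_j - p_j$ is upper semicontinuous on a compact set when $p_j$ is monotone and lower-continuous.

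\textbf{Step 2 (splitting the supremum).} The revenue is $\sup$ of $\sum_j p_j(x_j)$ over the product set $\prod_j \Dem(p_j \mid u_j, \infty)$. The supremum of a separable sum over a product of nonempty sets equals the sum of the componentwise suprema. For the $\le$ direction, each term is bounded by its own supremum. For the $\ge$ direction, choose $x_j$ within $\eps/m$ of each supremum and combine them into a single bundle, which lies in the product set. Hence $r(\p \mid \v, \infty) = \sum_j r(p_j \mid u_j, \infty)$. Note that by the definitions in \cref{defn:revenue}, the revenue involves only the price and not $u^*$, so no further bookkeeping is needed.

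\textbf{Expected obstacle.} The only delicate point is nonemptiness of the component demand sets under the exact hypotheses. If some factor were empty, the product would be empty and both sides would be $\sup \emptyset$ on one side but possibly finite on the other. I therefore expect the main care to go into invoking \cref{thm:mlc-demand-nonempty} correctly for each component, after normalizing $p_j(0)$ if needed. The rest is a formal consequence of \cref{thm:sep-opt}.
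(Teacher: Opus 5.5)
Your proposal is correct and matches the paper's proof. The paper applies \cref{thm:sep-opt} to $\v - \p$ to get $\Dem(\p \mid \v, \infty) = \prod_{j=1}^m \Dem(p_j \mid u_j, \infty)$, then splits the supremum over this product by \cref{thm:sup-decomp}; your explicit check that each factor is nonempty (via \cref{thm:mlc-demand-nonempty} or upper semicontinuity of $u_j - p_j$ on $[0,1]$) supplies the nonemptiness hypothesis of \cref{thm:sup-decomp}, which the paper uses without comment.
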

\begin{proof}
By applying \cref{thm:sep-opt} on $\v-p$, we get that $\Dem(\p) = \prod_{j=1}^m \Dem(p_j)$.

Since the supremum of a sum of functions of independent variables
equals the sum of their suprema (\cref{thm:sup-decomp} in \cref{sec:ext-real}), we get
\[ r(\p) = \sup_{\xvec \in \Dem(\p)} \p(\xvec) = \sup_{x_j \in \Dem(p_j) \forall j \in [m]} \sum_{j=1}^m p_j(x_j)
    = \sum_{j=1}^m \sup_{x_j \in \Dem(p_j)} p_j(x_j) = \sum_{j=1}^m r(p_j).
\qedhere \]
\end{proof}

\subsection{Proof of the Structural Theorem}
\label{sec:price-char:together}

\begin{theorem}
\label{thm:convex-disc-dom}
Let there be $n$ buyers and $m$ datasets.
Each buyer $i \in [n]$ has a budget $b_i \in \mathbb{R}_{\ge 0}$.
Let $\v_{i,j} \in \mathbb{R}_{\ge 0}$ be buyer $i$'s value for dataset $j$.
Buyers' valuations are linear, i.e., any buyer $i$'s valuation for a bundle $\xvec \in [0, 1]^m$
is given by $\v_i(\xvec) \defeq \sum_{j=1}^m \v_{i,j}x_j$.
For any $j \in [m]$, let $S_j \defeq \{\v_{i,j}: i \in [n]\}$
be the set of distinct valuations for dataset $j$.

For each dataset $j \in [m]$, let $p_j: [0, 1] \to \mathbb{R}_{\ge 0}$ be an MLC pricing function.
Let $\phat_j \defeq \disc(\conv(p_j), S_j)$ for all $j \in [m]$.
For all $\xvec \in [0, 1]^m$, let $\p(\xvec) \defeq \sum_{j=1}^m p_j(x_j)$
and $\phat(\xvec) \defeq \sum_{j=1}^m \phat_j(x_j)$.
Then for each buyer, the revenue from $\pvechat$ is not less than that from $\p$, i.e.,
$r(\pvechat \mid \v_i, b_i) \ge r(\p \mid \v_i, b_i)$ for all $i \in [n]$.
Moreover, $\phat_j$ is monotone, continuous, convex, and piecewise-linear for all $j \in [m]$.
\end{theorem}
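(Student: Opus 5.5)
The plan is to compose the two transformations through an intermediate separable function. For each $j \in [m]$ let $\ptild_j \defeq \conv(p_j)$, and let $\pvectild(\xvec) \defeq \sum_{j=1}^m \ptild_j(x_j)$. By \cref{thm:sep-props}, $\p$ is MLC because each $p_j$ is. By \cref{thm:convex-env-sep}, applied inductively on the number of components, $\conv(\p) = \pvectild$; note that $\conv(p_j)(0) = 0$, since $p_j \ge 0$ and $p_j(0)=0$. \Cref{thm:convex-env-dominates} then gives $r(\pvectild \mid \v_i, b_i) \ge r(\p \mid \v_i, b_i)$ for every buyer $i$. It also shows that $\pvectild$ is monotone, continuous and convex, so by \cref{thm:sep-props} each $\ptild_j$ is monotone, continuous and convex.

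Next I compare $\pvechat$ with $\pvectild$ buyer by buyer. Both are continuous and convex, and $\v_i$ is linear, so $\v_i - \pvectild$ and $\v_i - \pvechat$ are concave. Hence \cref{thm:budget-exhaustion} applies to both:
\[ r(\pvectild \mid \v_i, b_i) = \min\bigl(b_i, r(\pvectild \mid \v_i, \infty)\bigr), \qquad r(\pvechat \mid \v_i, b_i) = \min\bigl(b_i, r(\pvechat \mid \v_i, \infty)\bigr). \]
It therefore suffices to prove the unbudgeted inequality $r(\pvechat \mid \v_i,\infty) \ge r(\pvectild \mid \v_i,\infty)$. By \cref{thm:revenue-sep}, taking the components $u_j(x) = \v_{i,j}x$, each side splits into a sum over datasets of single-dataset revenues. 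So it is enough to show $r(\phat_j \mid u_j, \infty) \ge r(\ptild_j \mid u_j, \infty)$ for each $j$. Since $u_j(1) = \v_{i,j} \in S_j$, this is exactly \cref{thm:disc-revenue} with $b=\infty$. Alternatively, one can argue directly from \cref{thm:convex-price-rev} and \cref{thm:disc-dominates}: both revenues equal the price at $\ell(\cdot \mid \v_{i,j})$, and the discretized value there is no smaller. Chaining the two inequalities gives the first claim.

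For the structural claims, \cref{thm:plin-is-mcc} shows that $\phat_j$ is continuous, convex and piecewise-linear. It is monotone because its smallest slope is $\min S_j \ge 0$. We also have $\phat_j(0) = \ptild_j(0) = 0$, so $\pvechat$ is a valid pricing function.

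The main obstacle is the bookkeeping needed to apply the earlier lemmas under their exact hypotheses. First, \cref{thm:convex-env-sep} and \cref{thm:sep-props} are stated for two blocks, so they must be extended to $m$ components by induction. Second, the components must be normalized at $0$. Third, \cref{thm:disc-revenue} is stated with budgets, while I use it at $b=\infty$. If that case is not directly covered, I would fall back on the direct route via \cref{thm:convex-price-rev} and \cref{thm:disc-dominates} with $\beta = \v_{i,j}$, which avoids budgets entirely.
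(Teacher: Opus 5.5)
Your proof is correct and follows the paper's argument essentially step for step. You convexify componentwise using \cref{thm:convex-env-sep} and \cref{thm:convex-env-dominates}, then use \cref{thm:budget-exhaustion} and \cref{thm:revenue-sep} to reduce the discretization step to the unbudgeted single-dataset comparison of \cref{thm:disc-revenue} at $b=\infty$; that lemma's proof is exactly your fallback through \cref{thm:convex-price-rev} and \cref{thm:disc-dominates}, and the only cosmetic difference is that you derive the regularity of each $\ptild_j$ by decomposing $\pvectild$ with \cref{thm:sep-props}, whereas the paper applies \cref{thm:convex-env-props} to each $p_j$ directly.
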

\begin{proof}
By \cref{thm:sep-props}, $\p$ is MLC. Let $\ptild_j \defeq \conv(p_j)$.
By \cref{thm:convex-env-props},
$\ptild_j$ is monotone, continuous, convex, and $\ptild_j(0) = 0$.
Let $\pvectild(\xvec) \defeq \sum_{j=1}^m \ptild_j(x_j)$ for all $j \in [m]$.
By \cref{thm:convex-env-sep}, we get that $\pvectild = \conv(\p)$.
By \cref{thm:convex-env-dominates}, for all $i \in [n]$, we get
$r(\pvectild \mid \v_i, b_i) \ge r(\p \mid \v_i, b_i)$.

Since $\phat_j \defeq \disc(\ptild_j, S_j)$, we get that
$\phat_j = \plin(\zvec^{(j)}, S_j)$ for some non-decreasing $\zvec^{(j)} \in [0, 1]^{|S_j|}$
(we treat $S_j$ as a non-decreasing sequence consisting of values $\{\v_{i,j}: i \in [n]\}$).
By \cref{thm:plin-is-mcc}, we get that $\phat_j$ is monotone, continuous, and convex.
By \cref{thm:sep-props}, we get that $\pvechat$ is also monotone, continuous, and convex.
We will show that for all $i \in [n]$, we have $r(\pvechat \mid \v_i, b_i) \ge r(\pvectild \mid \v_i, b_i)$.

By \cref{thm:disc-revenue}, for all $j \in [m]$, we get
$r(\phat_j \mid \v_{i,j}, \infty) \ge r(\ptild_j \mid \v_{i,j}, \infty)$.
By \cref{thm:revenue-sep}, we get that
\[ r(\pvechat \mid \v_i, \infty)
    = \sum_{j=1}^m r(\phat_j \mid \v_{i,j}, \infty)
    \ge \sum_{j=1}^m r(\ptild_j \mid \v_{i,j}, \infty)
    = r(\pvectild \mid \v_i, \infty). \]
By \cref{thm:budget-exhaustion}, we get
$r(\pvechat \mid \v_i, b_i) = \min(b_i, r(\pvechat \mid \v_i, \infty))
\ge \min(b_i, r(\pvectild \mid \v_i, \infty)) = r(\pvectild \mid \v_i, b_i)$.

Hence, for all $i \in [n]$, we get
$r(\pvechat \mid \v_i, b_i) \ge r(\pvectild \mid \v_i, b_i) \ge r(\p \mid \v_i, b_i)$.
\end{proof}

The structural theorem is a simple corollary of the above result.

\thmStruct*
\begin{proof}
By \cref{thm:sep-props}, each component of $\p$ is MLC. Now apply \cref{thm:convex-disc-dom}.
\end{proof}

\subsection{The Sharding Interpretation}
\label{sec:sharding-lp}

By the structural theorem (\cref{thm:struct}), we can restrict our attention to
monotone, continuous, convex, and piecewise-linear pricing functions.
We now give an insightful interpretation of such pricing functions.

\begin{example}
\label{ex:sharding}
Consider a single dataset whose pricing function is
$p \defeq \plin((0.4, 0.4),\allowbreak (10, 25, 65))$, i.e.,
\begin{align*}
p'(x) &= \begin{cases}
10 & \text{ if } 0 \le x \le 0.4
\\ 25 & \text{ if } 0.4 \le x \le 0.8
\\ 65 & \text{ if } 0.8 \le x \le 1
\end{cases},
& p(x) &= \begin{cases}
10x & \text{ if } 0 \le x \le 0.4
\\ 4 + 25(x-0.4) & \text{ if } 0.4 \le x \le 0.8
\\ 14 + 65(x-0.8) & \text{ if } 0.8 \le x \le 1
\end{cases}.
\end{align*}
See \cref{fig:plc-ex} (in \cref{sec:our-contrib}) for plots of $p$ and $p'$.

Instead of viewing this example as a single dataset having a non-linear pricing function,
we can think of the dataset as being broken into three \emph{shards},
each with a different linear pricing function:
\begin{tightemize}
\item Shard 1: size $0.4$, per-unit price $10$.
\item Shard 2: size $0.4$, per-unit price $25$.
\item Shard 3: size $0.2$, per-unit price $65$.
\end{tightemize}

Giving the buyer a $0.6$ fraction of the dataset means
giving her the first shard and half of the second shard.

Instead of presenting to the buyers a single dataset having pricing function $p$,
suppose we present each shard as a separate dataset.
This means we allow buyers to purchase the shards \emph{out-of-order}, e.g.,
allowing them to purchase part of shard 2 without first purchasing the entire shard 1.
However, one can show that buyers will not utilize this extra freedom:
buyers have the same per-unit value for the shards,
so they will first purchase the entire shard with the lowest price
before moving on to the next shard.
\end{example}

In general, for non-decreasing sequences $\zvec \in [0, 1]^{k-1}$ and $\alphavec \in \mathbb{R}_{\ge 0}^k$,
pricing a dataset using the function $p_j \defeq \plin(\zvec, \alphavec)$
is the same as breaking the dataset into $k$ shards,
where the $t\Th$ shard has size $z_t$ and per-unit price $\alpha_t$.

Once we find the revenue-maximizing pricing by solving LP \cref{eqn:lp},
we can also find the corresponding allocation.
For each buyer $i \in [n]$, we would like to find a bundle $\xvec_i \in [0, 1]^m$ that maximizes her net utility,
and subject to that, maximizes the amount she pays.
Since each shard has linear pricing, the problem reduces to fractional knapsack over the shards,
which can be solved efficiently.

\subsection{Number of Kinks in LP}
\label{sec:price-char:kinks}

In \cref{sec:overview:lp}, we showed that in an optimal basic feasible solution,
the total number of \emph{kinks} (points of non-differentiability)
in the PLC pricing functions is at most $n$.
We now present an example with $n-1$ kinks, showing that our upper bound
on the number of kinks is nearly tight.

\thmMaxKinks*
\begin{proof}
One can easily verify that $(\bvec, \zvec^*)$ is a feasible solution to LP \eqref{eqn:lp}.
Moreover, it is an optimal solution because every buyer exhausts her budget.

Let $(\rvec, \zvec)$ be an optimal solution to LP \eqref{eqn:lp}.
Then we have $\rvec = \bvec$. We will now show that $\zvec = \zvec^*$.
By the feasibility constraints, for each buyer $i \in [n]$, we have
\[ \sum_{j=1}^n jz_{j,1} \ge b_i. \]
Let
\[ a_i \defeq \frac{1}{i(i+1)} = \frac{1}{i} - \frac{1}{i+1} \]
for all $i \in [n-1]$, and let $a_n \defeq 1/n$.
Then taking a sum of the above constraints, weighted by $(a_1, \ldots, a_n)$, we get
\begin{align*}
& \sum_{i=1}^n a_i \left(\sum_{j=1}^n jz_{j,1}\right) \ge \sum_{i=1}^n a_ib_i
\\ &\iff \sum_{j=1}^n jz_{j,1} \left(\sum_{i=j}^n a_j\right)
    \ge \sum_{i=1}^{n-1} \frac{1}{i(i+1)}\cdot\frac{i(i+1)}{2n} + \frac{1}{n}\cdot\frac{n(n+1)}{2n}
\\ &\iff \sum_{j=1}^n z_{j,1} \ge 1.
\end{align*}
Thus, if any of the constraints is not tight, then we get $\sum_{j=1}^n z_{j,1} > 1$,
which is a contradiction. Hence, all constraints must be tight.
One can verify that $z^*$ is the unique solution to this system of equations.
\end{proof}

\section{Comparison to Ironing Techniques}
\label{sec:ironing-review}

Our analysis of the optimal MLC pricing function can also be interpreted as an ``ironing'' process,
which has also been considered in the literature of auction theory~\citep{myerson1981optimal}.
In that context, ironing refers to the procedure of transforming a non-monotonic virtual value or allocation function into a monotonic one, thereby ensuring incentive compatibility.
Our procedure in \cref{sec:rev_max} is conceptually similar to this classical ironing approach, but it differs in several important details.
First, Myerson's mechanism considers the auction of \emph{a single item}, and the ironing process typically only works on the virtual value or allocation function of a \emph{single bidder}.
In contrast, in our setting, we consider the general pricing function of $m$ datasets for $n$ buyers, and any change to it would affect the revenue gained from each buyer.
Second, although Myerson's mechanism is computationally efficient for one item and multiple bidders, extending it to the settings of multiple-item auction is known to be difficult~\citep{daskalakis2014complexity}.
In contrast, our pricing strategy works for multiple datasets and can be captured by a linear program.
Furthermore, one limitation of Myerson's mechanism is that it may not always be easy to implement, since the payment function is determined by a complex integral expression.
In contrast, our pricing function is well-structured: piecewise-linear, and the derivatives are only taken from buyers' values, which makes it easy to implement in practical settings.

\section{Details on Real and Convex Analysis}
\label{sec:real-analysis}

We present several results from real and convex analysis here.
Most of these results are standard and well-known.
\begin{tightenum}
\item \cref{sec:ext-real} defines the extended real number system $\mathbb{R} \cup \{-\infty, \infty\}$
    and extends $\sup$ and $\inf$ for it.
\item \cref{sec:real-analysis:seq-conv} is on convergence of sequences.
\item \cref{sec:real-analysis:mlc} studies monotone and continuous functions.
\item \cref{sec:real-analysis:convex-sets} is on convex sets.
\item \cref{sec:real-analysis:convex-functions} is on convex functions.
\item \cref{sec:real-analysis:ld} studies the left derivative of univariate convex functions.
\item \cref{sec:real-analysis:sep} is on separable functions.
\end{tightenum}

\subsection{Extended Real Numbers}
\label{sec:ext-real}

In this paper, we will often use the extended real number system $\mathbb{R} \cup \{-\infty, \infty\}$.
We assume that
\begin{tightenum}
\item $-\infty < x < \infty$ for all $x \in \mathbb{R}$.
\item $x + \infty = \infty + x = \infty$ for all $x \in \mathbb{R} \cup \{\infty\}$.
\item $x - \infty = - \infty + x = -\infty$ for all $x \in \mathbb{R} \cup \{-\infty\}$.
\end{tightenum}

Let $X \subseteq \mathbb{R} \cup \{-\infty, \infty\}$.
$\sup(X)$ is defined to be the least upper bound on $X$, i.e.,
\begin{enumerate}
\item $\sup(X) \defeq \infty$ if $\forall M \in \mathbb{R}$, $\exists x \in X$ such that $x > M$.
\item $\sup(X) \defeq -\infty$ if $X \subseteq \{-\infty\}$.
\item $\sup(X) \in \mathbb{R}$ if both of the following hold:
    \begin{tightenum}
    \item $\forall x \in X$, we have $x \le \sup(X)$.
    \item $\forall \eps > 0$, $\exists x \in X$ such that $\sup(X) - \eps < x$.
    \end{tightenum}
\end{enumerate}

$\inf(X)$ is the greatest lower bound on $X$, and is defined analogously.
Also, we have $\inf(X) \defeq -\sup(\{-x: x \in X\})$.

\begin{observation}
\label{thm:sup-inf-union}
Let $f: S \cup T \to \mathbb{R}$. Then
\[ \inf_{x \in S \cup T} f(x) = \min\left(\inf_{x \in S} f(x), \inf_{x \in T} f(x)\right), \]
\[ \sup_{x \in S \cup T} f(x) = \max\left(\sup_{x \in S} f(x), \sup_{x \in T} f(x)\right). \]
\end{observation}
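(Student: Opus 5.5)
The observation is immediate from the definitions of $\sup$ and $\inf$ in the extended reals, so I will argue directly from the least-upper-bound characterization. I will prove the statement for $\sup$ first. The statement for $\inf$ then follows by applying it to $-f$ and using $\inf(X) = -\sup(\{-x : x \in X\})$, together with $-\max(a,b) = \min(-a,-b)$.

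Let $s_S \defeq \sup_{x \in S} f(x)$, $s_T \defeq \sup_{x \in T} f(x)$, $s \defeq \sup_{x \in S \cup T} f(x)$, and $m \defeq \max(s_S, s_T)$. The plan is to show $s \le m$ and $m \le s$.

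\emph{Showing $m \le s$.} This is monotonicity of $\sup$ under set inclusion. Since $f(S) \subseteq f(S \cup T)$, every element of $f(S)$ is at most $s$. Hence $s$ is an upper bound of $f(S)$, and by leastness $s_S \le s$. Symmetrically $s_T \le s$, so $m \le s$.

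\emph{Showing $s \le m$.} Every $y \in f(S \cup T)$ lies in $f(S)$ or in $f(T)$, so $y \le s_S \le m$ or $y \le s_T \le m$. Thus $m$ is an upper bound of $f(S \cup T)$, and leastness gives $s \le m$.

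The only obstacle is bookkeeping in the extended-real definition, which the paper states casewise rather than as ``least upper bound.'' So I will first note that, in each of its three cases, that definition agrees with the least upper bound in $\mathbb{R} \cup \{-\infty, \infty\}$. In particular, $\sup(\emptyset) = -\infty$ by case 2, so the argument also covers $S$ or $T$ empty. Once this is noted, both inequalities above are just the upper-bound and leastness properties.
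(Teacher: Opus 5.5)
Your proof is correct. It verifies the two inequalities from the upper-bound and leastness properties, and reduces the $\inf$ case to $-f$ via $\inf(X) = -\sup(\{-x : x \in X\})$; this is exactly the routine verification the paper has in mind, since it records the statement as an observation without proof, treating it as immediate from its extended-real definitions of $\sup$ and $\inf$.
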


\begin{lemma}
\label{thm:sup-sum}
Let $f, g: X \to \mathbb{R}$. Then
\[ \sup_{x \in X} (f(x) + g(x)) \le \sup_{x \in X} f(x) + \sup_{x \in X} g(x). \]
\end{lemma}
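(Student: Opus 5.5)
Fix $\eps > 0$ and compare the two sides against the single quantity $f(x) + g(x)$, using only the defining properties of $\sup$ over the extended reals given above. If either $\sup_{x \in X} f(x)$ or $\sup_{x \in X} g(x)$ equals $\infty$, the right-hand side is $\infty$ (by the convention $x + \infty = \infty$ for $x \in \mathbb{R} \cup \{\infty\}$; neither supremum can be $-\infty$ once $X \neq \emptyset$, since $f,g$ are real-valued). Then there is nothing to prove. If $X = \emptyset$, both sides are $-\infty$ and the inequality holds trivially. So the main case is $X$ nonempty with both suprema finite reals, say $F \defeq \sup_{x \in X} f(x)$ and $G \defeq \sup_{x \in X} g(x)$.

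In that case, the upper-bound clause in the definition of $\sup$ gives $f(x) \le F$ and $g(x) \le G$ for every $x \in X$. Hence $f(x) + g(x) \le F + G$ for every $x \in X$, so $F + G$ is an upper bound on the set $\{f(x) + g(x) : x \in X\}$. It remains to check that the supremum of a set is at most any real upper bound of that set.

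To check this, first rule out $\sup_{x \in X}(f(x)+g(x)) = \infty$: taking $M = F + G$ in the definition would produce an element exceeding $F+G$, which is a contradiction. If the supremum is a real number $s$ with $s > F + G$, take $\eps \defeq s - (F+G) > 0$. The approximation clause of the definition then yields some $x$ with $f(x) + g(x) > s - \eps = F + G$, again a contradiction. Therefore $s \le F + G$.

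The only mild obstacle is the bookkeeping with infinite values. The finite case is the standard least-upper-bound argument, and no earlier results from the paper are needed.
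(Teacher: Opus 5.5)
Your proof is correct and follows the same route as the paper: bound $f(x)+g(x)$ pointwise by $\sup f + \sup g$, then conclude by the least-upper-bound property. The only differences are that you work through the empty-set and infinite cases explicitly and derive the least-upper-bound step from the paper's clause-based definition of $\sup$; the paper takes both for granted. (Your opening ``Fix $\eps > 0$'' is unused and can be dropped.)
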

\begin{proof}
Let $\alpha \defeq \sup_{x \in X} f(x)$ and $\beta \defeq \sup_{x \in X} g(x)$.
Then for any $x \in X$, we have $f(x) + g(x) \le \alpha + \beta$,
so $\sup_{x \in X} (f(x) + g(x)) \le \alpha + \beta$.
\end{proof}

\begin{lemma}
\label{thm:sup-decomp}
Let $X$ and $Y$ be non-empty sets. Let $f: X \to \mathbb{R}$ and $g: Y \to \mathbb{R}$. Then
\[ \sup_{(x, y) \in X \times Y} (f(x) + g(y)) = \sup_{x \in X} f(x) + \sup_{y \in Y} g(y). \]
\end{lemma}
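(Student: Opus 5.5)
The plan is to prove the two inequalities $\le$ and $\ge$ separately. Write $\alpha \defeq \sup_{x \in X} f(x)$, $\beta \defeq \sup_{y \in Y} g(y)$, and $\gamma \defeq \sup_{(x,y) \in X \times Y} (f(x)+g(y))$. All three lie in $\mathbb{R} \cup \{\infty\}$: since $X$ and $Y$ are non-empty and $f,g$ are real-valued, none of them is $-\infty$. So the extended-real conventions of \cref{sec:ext-real} are needed only for the value $+\infty$.

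For $\gamma \le \alpha + \beta$, I will reuse the argument of \cref{thm:sup-sum}. For every $(x,y) \in X \times Y$ we have $f(x) \le \alpha$ and $g(y) \le \beta$, hence $f(x)+g(y) \le \alpha+\beta$. Thus $\alpha+\beta$ is an upper bound on the set, and $\gamma$, being the least upper bound, is at most $\alpha+\beta$. When $\alpha$ or $\beta$ equals $\infty$, this holds trivially by the convention $x+\infty=\infty$.

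For $\gamma \ge \alpha + \beta$, I will use approximate maximizers.
\begin{itemize}
\item \textbf{Both suprema finite.} Given $\eps>0$, the definition of $\sup$ gives $x_\eps \in X$ with $f(x_\eps) > \alpha - \eps/2$ and $y_\eps \in Y$ with $g(y_\eps) > \beta - \eps/2$. Then $\gamma \ge f(x_\eps)+g(y_\eps) > \alpha+\beta-\eps$. Since $\eps$ is arbitrary, $\gamma \ge \alpha+\beta$.
\item \textbf{Some supremum infinite, say $\alpha=\infty$.} Fix any $y_0 \in Y$, which exists by non-emptiness. For any $M \in \mathbb{R}$, pick $x \in X$ with $f(x) > M - g(y_0)$. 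Then $f(x)+g(y_0) > M$, so $\gamma = \infty = \alpha+\beta$. The case $\beta=\infty$ is symmetric.
\end{itemize}

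The only delicate point is this case split on whether the suprema are finite. It is exactly where non-emptiness of $X$ and $Y$ is used: it supplies $y_0$, and it guarantees that neither supremum is $-\infty$, so the sum $\alpha+\beta$ is well-defined. Beyond that, the proof is routine.
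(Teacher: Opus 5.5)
Your proof is correct and follows essentially the same route as the paper: the upper bound $\gamma \le \alpha+\beta$ comes from the argument of \cref{thm:sup-sum}, and the reverse inequality uses a case split. An infinite supremum is handled by fixing a point of the other non-empty set, and the finite case uses $\eps$-approximate maximizers; the only cosmetic difference is that you take $\eps/2$ where the paper takes $\eps$ and concludes with $\alpha+\beta-2\eps$.
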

\begin{proof}
Let $\alpha \defeq \sup_{x \in X} f(x)$, $\beta \defeq \sup_{y \in Y} g(y)$,
and $\gamma \defeq \sup_{(x, y) \in X \times Y} (f(x) + g(y))$.
Then $\gamma \le \alpha + \beta$ by \cref{thm:sup-sum}.

Suppose $\alpha = \infty$. Let $y \in Y$.
For all $M \in \mathbb{R}_{\ge 0}$, $\exists x \in X$ such that $f(x) > M$.
Then $\gamma \ge f(x) + g(y) > M + g(y)$.
By picking an arbitrarily large $M$, we can make $\gamma$ arbitrarily large, so $\gamma = \infty$.
Similarly, if $\beta = \infty$, then $\gamma = \infty$.
Now suppose $\alpha$ and $\beta$ are both finite.

For any $\eps > 0$, there exists $x \in X$ such that $f(x) > \alpha - \eps$
and $\exists y \in Y$ such that $g(y) > \beta - \eps$.
Hence, $\gamma \ge f(x) + g(y) \ge \alpha + \beta - 2\eps$.
By making $\eps$ arbitrarily small, we get $\gamma \ge \alpha + \beta$.
Hence, $\gamma = \alpha + \beta$.
\end{proof}

\subsection{Convergence of Sequences}
\label{sec:real-analysis:seq-conv}

\begin{definition}
\label{defn:seq-convergence}
For all $i \in \mathbb{N}$, let $x^{(i)} \in \mathbb{R}^m$.
The infinite sequence $X \defeq (x^{(i)})_{i \in \mathbb{N}}$ is said to
\emph{converge} to a point $\xhat \in \mathbb{R}^m$ if
for all $\eps > 0$, there exists $n \in \mathbb{N}$ such that
for all $i \ge n$, we have $\linf{x^{(i)} - \xhat} < \eps$.
(A sequence converges to a unique point, as proved in \cref{thm:convergence-unique}.)
\end{definition}

Most results here are standard, so their proofs are ommitted.

\begin{lemma}
\label{thm:convergence-unique}
If the sequence $X \defeq (x^{(i)})_{i \in \mathbb{N}}$ of points in $\mathbb{R}^m$
converges to $\xhat \in \mathbb{R}^m$ and also converges to $x^* \in \mathbb{R}^m$, then $\xhat = x^*$.
\end{lemma}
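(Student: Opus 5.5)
We argue by contradiction, using the definition of convergence in \cref{defn:seq-convergence} together with the triangle inequality for the norm $\linf{\cdot}$.

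Suppose $\xhat \neq x^*$, and set $\eps \defeq \linf{\xhat - x^*}/2 > 0$.

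First, since $X$ converges to $\xhat$, there is $n_1 \in \mathbb{N}$ such that $\linf{x^{(i)} - \xhat} < \eps$ for all $i \ge n_1$. Likewise, since $X$ converges to $x^*$, there is $n_2$ such that $\linf{x^{(i)} - x^*} < \eps$ for all $i \ge n_2$.

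Next, take $i \defeq \max(n_1, n_2)$. The triangle inequality then gives
\[ \linf{\xhat - x^*} \le \linf{\xhat - x^{(i)}} + \linf{x^{(i)} - x^*} < 2\eps = \linf{\xhat - x^*}, \]
which is a contradiction. Hence $\xhat = x^*$.

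The only ingredient beyond \cref{defn:seq-convergence} is that $\linf{\cdot}$ is a norm. Its triangle inequality follows coordinatewise: for each $j$ we have $|a_j + b_j| \le |a_j| + |b_j| \le \linf{a} + \linf{b}$, and taking the maximum over $j$ gives $\linf{a+b} \le \linf{a} + \linf{b}$. Positivity, namely $\linf{a} > 0$ when $a \neq 0$, is immediate and is what makes $\eps > 0$. There is no real obstacle here; the only care needed is to choose $\eps$ strictly positive, which is exactly where the assumption $\xhat \neq x^*$ is used.
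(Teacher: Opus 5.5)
Your proof is correct: taking $\varepsilon = \tfrac{1}{2}\linf{\xhat - x^*} > 0$ and applying the triangle inequality at an index past both thresholds gives the contradiction. The paper omits this proof as standard, and your argument is the standard one it implicitly relies on.
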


\begin{lemma}
\label{thm:convergence-scaling}
Let $c \in \mathbb{R}$. If $X \defeq (x^{(i)})_{i \in \mathbb{N}}$ converges to $x^* \in \mathbb{R}^m$,
then $cX \defeq (cx^{(i)})_{i \in \mathbb{N}}$ converges to $cx^*$.
\end{lemma}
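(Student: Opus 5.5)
We argue directly from \cref{defn:seq-convergence}, splitting on whether $c=0$.

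\textbf{Case $c = 0$.} Every term of $cX$ is $\vecZero$, and $cx^* = \vecZero$. So $\linf{cx^{(i)} - cx^*} = 0 < \eps$ for every $i$ and every $\eps > 0$. Any $n$ (say $n=1$) works, and $cX$ converges to $cx^*$.

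\textbf{Case $c \neq 0$.} Fix $\eps > 0$ and set $\eps' \defeq \eps/|c| > 0$.

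Since $X$ converges to $x^*$, there is an $n \in \mathbb{N}$ with $\linf{x^{(i)} - x^*} < \eps'$ for all $i \ge n$.

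We then use absolute homogeneity of the max-norm, $\linf{c\yvec} = |c|\,\linf{\yvec}$ for any $\yvec \in \mathbb{R}^m$. This holds because $\max_k |c y_k| = |c| \max_k |y_k|$.

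Combining the two facts, for all $i \ge n$,
\[ \linf{cx^{(i)} - cx^*} = \linf{c\,(x^{(i)} - x^*)} = |c|\,\linf{x^{(i)} - x^*} < |c|\,\eps' = \eps. \]
Since $\eps$ was arbitrary, this is exactly the condition in \cref{defn:seq-convergence} for $cX$ to converge to $cx^*$.

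There is no real obstacle here. The only point needing care is the division by $|c|$, which is why $c = 0$ is handled separately. Alternatively, one can use $\eps' \defeq \eps/(|c|+1)$ and treat both cases uniformly.
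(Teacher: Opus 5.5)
Your proof is correct. The $\eps/|c|$ choice, together with $\linf{c\,\yvec} = |c|\,\linf{\yvec}$ and the separate trivial $c=0$ case, is the standard argument. The paper gives no proof to compare against: it lists this lemma among the standard convergence facts in \cref{sec:real-analysis:seq-conv} whose proofs are omitted.
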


\begin{lemma}
\label{thm:convergence-sum}
If $X \defeq (x^{(i)})_{i \in \mathbb{N}}$ converges to $x^* \in \mathbb{R}^m$,
and $Y \defeq (y^{(i)})_{i \in \mathbb{N}}$ converges to $y^* \in \mathbb{R}^m$,
then $X+Y \defeq (x^{(i)} + y^{(i)})_{i \in \mathbb{N}}$ converges to $x^* + y^*$.
\end{lemma}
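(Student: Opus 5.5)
The statement says that if a sequence $X = (x^{(i)})$ converges to $x^*$ and $Y = (y^{(i)})$ converges to $y^*$, then $X+Y$ converges to $x^*+y^*$. I will prove it directly from \cref{defn:seq-convergence} with the usual $\eps/2$ argument.

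Fix $\eps > 0$. Applying \cref{defn:seq-convergence} to $X$ with tolerance $\eps/2$ gives an index $n_1 \in \mathbb{N}$ such that $\linf{x^{(i)} - x^*} < \eps/2$ for all $i \ge n_1$. Applying it to $Y$ with tolerance $\eps/2$ gives $n_2 \in \mathbb{N}$ such that $\linf{y^{(i)} - y^*} < \eps/2$ for all $i \ge n_2$. Set $n \defeq \max(n_1, n_2)$.

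For every $i \ge n$, the only tool needed is the triangle inequality for $\linf{\cdot}$. It holds coordinatewise because $|a+b| \le |a| + |b|$, and taking the maximum over coordinates preserves it. This gives
\[ \linf{(x^{(i)} + y^{(i)}) - (x^* + y^*)} \le \linf{x^{(i)} - x^*} + \linf{y^{(i)} - y^*} < \eps/2 + \eps/2 = \eps. \]
Since $\eps > 0$ was arbitrary, $X+Y$ converges to $x^*+y^*$ by \cref{defn:seq-convergence}.

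There is no real obstacle. The only point needing care is the triangle inequality for the $\ell_\infty$ norm, which I will justify in one line as above rather than assume. \cref{thm:convergence-unique} is not needed, since we only have to show convergence to the specific point $x^*+y^*$.
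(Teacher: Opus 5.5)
Your proof is correct: splitting $\eps$ into $\eps/2 + \eps/2$, taking $n = \max(n_1, n_2)$, and applying the coordinatewise triangle inequality for the $\ell_\infty$ norm is the standard argument, and it matches the strict-inequality form of the paper's convergence definition. The paper itself gives no proof of this lemma, listing it among the standard facts on convergence whose proofs are omitted, so your argument is exactly the routine verification that omission relies on.
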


\begin{lemma}
\label{thm:convergence-decomp}
Let $X \defeq (x^{(i)})_{i \in \mathbb{N}}$, $Y \defeq (y^{(i)})_{i \in \mathbb{N}}$,
and $Z \defeq ((x^{(i)}, y^{(i)}))_{i \in \mathbb{N}}$ be sequences of points in
$\mathbb{R}^m$, $\mathbb{R}^n$, and $\mathbb{R}^{m+n}$, respectively.
Let $x^* \in \mathbb{R}^m$ and $y^* \in \mathbb{R}^n$.
Then $Z$ converges to $(x^*, y^*)$ iff $X$ converges to $x^*$ and $Y$ converges to $Y^*$.
\end{lemma}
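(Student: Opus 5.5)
We need to prove \cref{thm:convergence-decomp}: $Z$ converges to $(x^*, y^*)$ iff $X$ converges to $x^*$ and $Y$ converges to $y^*$. The plan is to work directly from \cref{defn:seq-convergence}, using the fact that the norm there is the $\ell_\infty$ norm. The key identity is that for any $u \in \mathbb{R}^m$ and $w \in \mathbb{R}^n$,
\[ \linf{(u, w)} = \max\bigl(\linf{u}, \linf{w}\bigr), \]
because the coordinates of $(u, w)$ are exactly those of $u$ followed by those of $w$. We apply this with $u = x^{(i)} - x^*$ and $w = y^{(i)} - y^*$, noting that $(x^{(i)}, y^{(i)}) - (x^*, y^*) = (u, w)$.

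For the forward direction, fix $\eps > 0$ and take $N$ from convergence of $Z$. For all $i \ge N$ we have $\max(\linf{x^{(i)} - x^*}, \linf{y^{(i)} - y^*}) < \eps$, so each term is individually $< \eps$. Thus the same $N$ witnesses convergence of both $X$ and $Y$.

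For the reverse direction, fix $\eps > 0$. Take $N_1$ for $X$ and $N_2$ for $Y$, and set $N \defeq \max(N_1, N_2)$. For $i \ge N$ both norms are $< \eps$, so their maximum, which is $\linf{(x^{(i)}, y^{(i)}) - (x^*, y^*)}$, is $< \eps$. Hence $Z$ converges to $(x^*, y^*)$.

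There is no real obstacle here. The only point needing care is the max identity for the $\ell_\infty$ norm of a concatenated vector, which is immediate from the definition. (The statement's ``$Y^*$'' should be read as $y^*$.)
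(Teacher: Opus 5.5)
Your proposal is correct and follows the paper's proof: both rest on the identity that the $\ell_\infty$ norm of $(x^{(i)}, y^{(i)}) - (x^*, y^*)$ equals the maximum of the $\ell_\infty$ norms of $x^{(i)} - x^*$ and $y^{(i)} - y^*$. In the reverse direction you explicitly take $N = \max(N_1, N_2)$, which fills in the step the paper's final ``iff'' leaves implicit.
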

\begin{proof}
\begin{align*}
& Z \text{ converges to } (x^*, y^*)
\\ &\iff \forall \delta > 0, \exists n \in \mathbb{N}, \forall i \ge n,
    \linf{(x^{(i)}, y^{(i)}) - (x^*, y^*)} < \delta
\\ &\iff \forall \delta > 0, \exists n \in \mathbb{N}, \forall i \ge n,
    \max(\linf{x^{(i)} - x^*}, \linf{y^{(i)} - y^*}) < \delta
\\ &\iff X \text{ converges to } x^* \text{ and } Y \text{ converges to } y^*.
\qedhere
\end{align*}
\end{proof}

\begin{definition}
\label{defn:bounded}
A set $S \subseteq \mathbb{R}^m$ is \emph{bounded} if there exists $z \in \mathbb{R}_{\ge 0}$
such that for all $x \in S$, we have $\linf{x} \le z$.
\end{definition}

\begin{definition}
\label{defn:closed}
A set $S \subseteq \mathbb{R}^m$ is \emph{closed} if for every sequence
$(x^{(i)})_{i \in \mathbb{N}}$ of points in $S$ that converges to some $\xhat \in \mathbb{R}^m$,
we have $\xhat \in S$.
\end{definition}

\begin{theorem}[Bolzano-Weierstrass]
\label{thm:bolzano-weier}
(Every bounded sequence has a convergent subsequence.)
Let $S \subseteq \mathbb{R}^m$ be a bounded set.
Then for any infinite sequence $(x^{(i)})_{i \in \mathbb{N}}$ of points in $S$,
there exists a subset $I \subseteq \mathbb{N}$ such that
the subsequence $(x^{(i)})_{i \in I}$ converges to some point $\xhat \in \mathbb{R}^m$.
\end{theorem}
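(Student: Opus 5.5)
We prove the Bolzano--Weierstrass theorem (\cref{thm:bolzano-weier}) in two stages. We first handle the one-dimensional case by bisection, and then lift it to $\mathbb{R}^m$ by extracting nested subsequences coordinate by coordinate, using \cref{thm:convergence-decomp} (or its obvious $m$-fold iteration) to assemble coordinatewise convergence into convergence in $\linf{\cdot}$.

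For $m = 1$, boundedness gives $z \ge 0$ with every $x^{(i)} \in [-z, z]$. We build nested closed intervals $J_0 = [-z, z] \supseteq J_1 \supseteq \cdots$ with $|J_k| = 2z/2^k$, each containing $x^{(i)}$ for infinitely many indices $i$. Given $J_k$, split it into two closed halves. Since infinitely many indices land in $J_k$, at least one half also receives infinitely many, and we take that half as $J_{k+1}$. We then pick indices $i_1 < i_2 < \cdots$ with $x^{(i_k)} \in J_k$, which is possible because each $J_k$ has infinitely many indices available beyond $i_{k-1}$.

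The limit point is $\xhat \defeq \sup_k(\text{left endpoint of } J_k)$. It is finite because the left endpoints are non-decreasing and bounded by $z$, using the least-upper-bound property recorded in \cref{sec:ext-real}. Moreover $\xhat \in J_k$ for every $k$, because each right endpoint is an upper bound for all left endpoints (the intervals are nested). Hence $|x^{(i_k)} - \xhat| \le 2z/2^k \to 0$, so the subsequence indexed by $I \defeq \{i_k\}$ converges to $\xhat$.

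For general $m$, apply the one-dimensional case to the first coordinates of the sequence to get an infinite index set $I_1 \subseteq \mathbb{N}$ along which the first coordinate converges. Then apply it to the second coordinates restricted to $I_1$, obtaining $I_2 \subseteq I_1$, and continue. A subsequence of a convergent sequence converges to the same limit, since the definition in \cref{defn:seq-convergence} only requires the tail beyond some index to be close. So along $I \defeq I_m$ every coordinate converges, say to $\xhat_j$. Taking the maximum of the $m$ thresholds then gives convergence to $\xhat = (\xhat_1, \ldots, \xhat_m)$ in $\linf{\cdot}$, exactly as in the proof of \cref{thm:convergence-decomp}.

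The main subtlety is the completeness step, namely that the nested intervals share a point. Everything rests on the least-upper-bound property of $\mathbb{R}$, and I would invoke it explicitly through the definition of $\sup$ rather than argue it informally. The remaining bookkeeping, choosing strictly increasing indices and passing to subsequences, is routine.
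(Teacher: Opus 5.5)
Your argument is correct. The paper gives no proof of this theorem; it is one of the standard results in \cref{sec:real-analysis:seq-conv} whose proofs are explicitly omitted. Your bisection construction for $m=1$, followed by nested coordinatewise extraction and the max-of-thresholds step from \cref{thm:convergence-decomp}, is the standard textbook proof that fills this in. The only bookkeeping worth stating explicitly is that each $I_j$ is infinite and is enumerated in increasing order before you apply the one-dimensional case to it, and that convergence of $(x^{(i)})_{i \in I}$ means the condition of \cref{defn:seq-convergence} restricted to indices $i \in I$.
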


\subsection{Monotonicity and Continuity}
\label{sec:real-analysis:mlc}

\begin{restatable}[monotone function]{definition}{defnMonotone}
\label{defn:monotone}
For any two vectors $\xvec, \yvec \in [0, 1]^m$, we say that
$\xvec \preceq \yvec$ iff $x_j \le y_j$ for all $j \in [m]$.
A function $f: [0, 1]^m \to \mathbb{R} \cup \{-\infty, \infty\}$ is called \emph{monotone} if
for all $\xvec, \yvec \in [0, 1]^m$ such that $\xvec \preceq \yvec$, we have $f(\xvec) \le f(\yvec)$.
\end{restatable}

Note how we allow $f$'s co-domain to include $\infty$ and $-\infty$ in \cref{defn:monotone}.
This will be useful in \cref{sec:price-char:rate-eq}.
See \cref{sec:ext-real} for our conventions regarding infinity, $\sup$, and $\inf$.

\begin{definition}[continuity]
\label{defn:continuity}
Let $f: [0, 1]^m \to \mathbb{R} \cup \{-\infty, \infty\}$ be a function.
For any $\xvechat \in [0, 1]^m$ and $\delta > 0$, define the $\delta$-neighborhood of $\xvechat$
as $N_{\delta}(\xvechat) \defeq \{\xvec \in [0, 1]^m: \linf{\xvec - \xvechat} \le \delta\}$.
Define the \emph{lower limit} of $f$ at $\xvechat$ as
\[ f^-(\xvechat) \defeq \sup_{\delta > 0} \inf_{\xvec \in N_{\delta}(\xvechat)} f(\xvec). \]
Define the \emph{upper limit} of $f$ at $\xvechat$ as
\[ f^+(\xvechat) \defeq \inf_{\delta > 0} \sup_{\xvec \in N_{\delta}(\xvechat)} f(\xvec). \]
It is easy to see that $f^-(\xvechat) \le f(\xvechat) \le f^+(\xvechat)$, since $\xvechat \in N_{\delta}(\xvechat)$.
$f$ is called \emph{lower-continuous} at $\xvechat$ if $f^-(\xvechat) = f(\xvechat)$.
$f$ is called \emph{upper-continuous} at $\xvechat$ if $f^+(\xvechat) = f(\xvechat)$.
$f$ is called \emph{continuous} at $\xvechat$ if $f^+(\xvechat) = f^-(\xvechat)$.
$f$ is called MLC if $f$ is both monotone and lower-continuous.
\end{definition}

\begin{remark}
Traditionally, continuity is defined as follows.
$f: [0, 1]^m \to \mathbb{R}$ is continuous at $\xvechat$ if
$\forall \eps > 0$, $\exists \delta > 0$, $\forall \xvec \in N_{\delta}(\xvechat)$,
$|f(\xvec) - f(\xvechat)| \le \eps$.
One can show that this definition is equivalent to \cref{defn:continuity}.
\end{remark}

\begin{lemma}
\label{thm:monotone-continuous}
Let $f: [0, 1]^m \to \mathbb{R} \cup \{-\infty, \infty\}$ be a monotone function and let $\xhat \in [0, 1]^m$.
For any $\delta > 0$ and $j \in [m]$, let
let $\xhat^{[-\delta]}_j \defeq \max(0, \xhat_j - \delta)$
and $\xhat^{[+\delta]}_j \defeq \min(1, \xhat_j + \delta)$.
Then
\begin{align*}
f^-(\xhat) &\defeq \sup_{\delta > 0} f(\xhat^{[-\delta]}),
& f^+(\xhat) &\defeq \inf_{\delta > 0} f(\xhat^{[+\delta]}).
\end{align*}
\end{lemma}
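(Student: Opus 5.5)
The plan is to prove the two identities separately, using monotonicity of $f$ to collapse the inner infimum (resp. supremum) over the neighborhood $N_\delta(\xhat)$ to a single point evaluation.

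\textbf{Lower limit.} Fix $\delta > 0$. The neighborhood $N_\delta(\xhat)$ is the box $\prod_{j}[\xhat^{[-\delta]}_j, \xhat^{[+\delta]}_j]$, since clipping to $[0,1]$ is exactly what the $\max(0,\cdot)$ and $\min(1,\cdot)$ in the definitions do. The vector $\xhat^{[-\delta]}$ belongs to this box and is coordinatewise minimal in it: every $\xvec \in N_\delta(\xhat)$ satisfies $\xhat^{[-\delta]} \preceq \xvec$. By monotonicity (\cref{defn:monotone}), $f(\xhat^{[-\delta]}) \le f(\xvec)$ for all such $\xvec$. Since the bound is attained at $\xhat^{[-\delta]}$, we get
\[ \inf_{\xvec \in N_\delta(\xhat)} f(\xvec) = f(\xhat^{[-\delta]}). \]
This uses only the order and membership, so it works verbatim when $f$ takes values $\pm\infty$; a minimum is trivially an infimum in the extended reals. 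Taking $\sup_{\delta > 0}$ of both sides and applying \cref{defn:continuity} gives $f^-(\xhat) = \sup_{\delta>0} f(\xhat^{[-\delta]})$.

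\textbf{Upper limit.} The argument is symmetric. The point $\xhat^{[+\delta]}$ lies in $N_\delta(\xhat)$ and dominates every point there, so $\sup_{\xvec \in N_\delta(\xhat)} f(\xvec) = f(\xhat^{[+\delta]})$. Taking $\inf_{\delta>0}$ gives the formula for $f^+(\xhat)$.

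The only step needing care is verifying the box description of $N_\delta(\xhat)$ in the $\ell_\infty$ norm intersected with $[0,1]^m$, in particular that $\xhat^{[\pm\delta]}$ lies in it. Each coordinate of $\xhat^{[-\delta]}$ differs from $\xhat_j$ by $\min(\delta, \xhat_j) \le \delta$ and stays in $[0,1]$; the analogous check holds for $\xhat^{[+\delta]}$. Beyond this there is no real obstacle; the lemma is essentially a direct unwinding of the definitions.
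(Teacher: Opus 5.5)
Your proposal is correct and follows the paper's argument. The paper's proof is the single observation that monotonicity gives $f(\xhat^{[-\delta]}) \le f(\xvec) \le f(\xhat^{[+\delta]})$ for all $\xvec \in N_\delta(\xhat)$. You make the same observation and also spell out the step the paper leaves implicit: $\xhat^{[\pm\delta]}$ lie in $N_\delta(\xhat)$, so the inner infimum and supremum are attained there.
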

\begin{proof}
By monotonicity, for all $x \in N_{\delta}(\xhat)$, we get
$f(\xhat^{[-\delta]}) \le f(x) \le f(\xhat^{[+\delta]})$.
\end{proof}

\begin{restatable}[monotone support]{lemma}{thmSeqMonotonization}
\label{thm:seq-monotonization}
Let $X \defeq (\xvec^{(i)})_{i \in \mathbb{N}}$ be an infinite sequence of points in $[0, 1]^m$.
For all $k \in \mathbb{N}$ and $j \in [m]$, define
$\xhat^{(k)}_j \defeq \inf_{i \ge k} x^{(i)}_j$.
Let $\Xhat \defeq (\xvechat^{(i)})_{i \in \mathbb{N}}$ be another sequence of points in $[0, 1]^m$.
For any $j \in [m]$, define $x^*_j \defeq \sup_{i \in \mathbb{N}} \xhat^{(i)}_j$. Then
\begin{tightenum}
\item $\Xhat$ is monotone, i.e., $\forall i \in \mathbb{N}$, we have $\xvechat^{(i)} \preceq \xvechat^{(i+1)}$.
\item $\Xhat$ is dominated by $X$, i.e., $\forall i \in \mathbb{N}$, we have $\xvechat^{(i)} \preceq \xvec^{(i)}$.
\item $\Xhat$ converges to $\xvec^*$.
\item If $X$ converges to a point, that point must be $\xvec^*$.
\item If $f: [0, 1]^m \to \mathbb{R} \cup \{-\infty, \infty\}$ is an MLC function, then
    $\sup_{i \in \mathbb{N}} f(\xvechat^{(i)}) = f(\xvec^*)$.
\end{tightenum}
\end{restatable}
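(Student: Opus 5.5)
I will prove the five items in order, working coordinatewise for the first four. Fix $j \in [m]$.

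For item~1, note that $\xhat^{(k)}_j = \inf_{i \ge k} x^{(i)}_j$ is an infimum over a set that shrinks as $k$ grows. Hence $\xhat^{(k)}_j \le \xhat^{(k+1)}_j$, which is monotonicity. Item~2 is immediate, since $\xvec^{(k)}$ is itself one of the terms in the infimum defining $\xhat^{(k)}_j$.

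For item~3, the sequence $(\xhat^{(k)}_j)_k$ is non-decreasing and bounded in $[0,1]$. It therefore converges to its supremum $x^*_j$: for any $\eps>0$, pick $k_0$ with $\xhat^{(k_0)}_j > x^*_j - \eps$, and monotonicity keeps all later terms within $\eps$. Taking the maximum of the resulting indices over the finitely many coordinates gives convergence in $\linf{\cdot}$; this is also \cref{thm:convergence-decomp} applied repeatedly.

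For item~4, suppose $X$ converges to $\yvec$. For each $\eps>0$ there is an $n$ such that for all $i \ge n$ we have $|x^{(i)}_j - y_j|<\eps$. For $k\ge n$ this gives $\xhat^{(k)}_j \in [y_j-\eps, y_j+\eps]$, so $\Xhat$ converges to $\yvec$. By item~3 and \cref{thm:convergence-unique}, $\yvec=\xvec^*$.

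Item~5 carries the real content. For the upper bound, item~1 together with $\xvechat^{(i)} \preceq \xvec^*$ (each term is below the supremum) and monotonicity of $f$ gives $f(\xvechat^{(i)}) \le f(\xvec^*)$, so $\sup_i f(\xvechat^{(i)}) \le f(\xvec^*)$.

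For the lower bound, I use lower continuity in the form of \cref{thm:monotone-continuous}: $f(\xvec^*) = f^-(\xvec^*) = \sup_{\delta>0} f(\xvec^{*[-\delta]})$, where $\xvec^{*[-\delta]}$ is $\xvec^*$ with every coordinate decreased by $\delta$, clipped at $0$. Fix $\delta>0$. By item~3 there is an $i$ with $\linf{\xvechat^{(i)}-\xvec^*}<\delta$. Since $\xvechat^{(i)}\preceq\xvec^*$, this means $\xhat^{(i)}_j \ge \max(0,x^*_j-\delta)$ for every $j$, i.e.\ $\xvec^{*[-\delta]} \preceq \xvechat^{(i)}$. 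Monotonicity then gives $f(\xvec^{*[-\delta]}) \le f(\xvechat^{(i)}) \le \sup_i f(\xvechat^{(i)})$. Taking the supremum over $\delta$ yields $f(\xvec^*) \le \sup_i f(\xvechat^{(i)})$.

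The main obstacle is keeping this step valid in the extended reals, where $f$ may take the values $\pm\infty$. The argument above uses only order comparisons and suprema, never subtraction, so it should go through unchanged.
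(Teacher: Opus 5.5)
Your proposal is correct and follows the paper's proof essentially step for step. Items 1--4 are handled coordinatewise via monotone convergence, and item 4 is obtained by showing that $\Xhat$ inherits the limit of $X$ and then invoking uniqueness of limits. For item 5, the upper bound comes from $\xvechat^{(i)} \preceq \xvec^*$ plus monotonicity of $f$. The lower bound comes from finding, for each $\delta > 0$, an index $i$ with $\xvec^{*[-\delta]} \preceq \xvechat^{(i)}$, and then using lower continuity in the monotone form of \cref{thm:monotone-continuous}. You in fact spell out why $\linf{\xvechat^{(i)} - \xvec^*} < \delta$ together with $\xvechat^{(i)} \preceq \xvec^*$ gives $\xvec^{*[-\delta]} \preceq \xvechat^{(i)}$, which the paper leaves implicit.
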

\begin{proof}
For all $k \in \mathbb{N}$ and $j \in [m]$, we have
$\xhat^{(k)}_j = \min(x^{(k)}_j, \xhat^{(k+1)}_j)$,
so we get $\xhat^{(k)} \preceq x^{(k)}$ and $\xhat^{(k)} \preceq x^{(k+1)}$.
Hence, $\Xhat$ is monotone and dominated by $\Xhat$.

The convergence of $\Xhat$ to $x^*$ follows from \cref{thm:convergence-decomp}
and the monotone convergence theorem for sequences of real numbers.

Suppose $X$ converges to $\xtild$. Pick any $\eps > 0$ and $j \in [m]$.
Then $\exists n \in \mathbb{N}$ such that for all $i \ge n$, we have
$\xtild_j - \eps < x^{(i)}_j < \xtild_j + \eps$.
Hence, we also have $\xtild_j - \eps \le \xhat^{(i)}_j < \xtild_j + \eps$ for all $i \ge n$.
Therefore, $\Xhat$ also converges to $\xtild$, so $\xtild = x^*$ (by \cref{thm:convergence-unique}).

Let $y_i \defeq f(\xhat^{(i)})$ for all $i \in \mathbb{N}$.
Let $Y \defeq (y_i)_{i \in \mathbb{N}}$. We need to show that $\sup(Y) = f(x^*)$.
Since $\xhat^{(i)} \preceq x^*$, we get $\sup(Y) \le f(x^*)$.
Pick any $\delta > 0$. Since $\Xhat$ converges to $x^*$, there exists $i \in \mathbb{N}$
such that $\linf{\xhat^{(i)} - x^*} < \delta$. Hence,
\[ f({x^*}^{[-\delta]}) \le f(\xhat^{(i)}) \le \sup(Y). \]
Since this is true for all $\delta > 0$, we get $f^-(x^*) \le \sup(Y)$.
Since $f$ is MLC, we get $f(x^*) = f^-(x^*) = \sup(Y)$.
\end{proof}

\begin{lemma}
\label{thm:mlc-feas-compact}
Let $f: [0, 1]^m \to \mathbb{R} \cup \{-\infty, \infty\}$ be an MLC function.
Then for any $b \in \mathbb{R}$, the set $S \defeq \{\xvec \in [0, 1]^m: f(\xvec) \le b\}$
is either empty or closed.
\end{lemma}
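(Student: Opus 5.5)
Suppose $S$ is non-empty. To show it is closed, I will follow \cref{defn:closed} directly. Take any sequence $X \defeq (\xvec^{(i)})_{i \in \mathbb{N}}$ of points in $S$ that converges to some $\xvechat \in \mathbb{R}^m$. Since $[0,1]^m$ is closed (coordinatewise limits of numbers in $[0,1]$ stay in $[0,1]$), we have $\xvechat \in [0,1]^m$. It then remains to show $f(\xvechat) \le b$.

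The natural tool is the monotone support lemma (\cref{thm:seq-monotonization}). Define $\xvechat^{(k)}_j \defeq \inf_{i \ge k} x^{(i)}_j$. By that lemma, the sequence $(\xvechat^{(k)})_k$ is monotone and dominated by $X$, meaning $\xvechat^{(k)} \preceq \xvec^{(k)}$. It also converges to the point $\xvec^*$ defined there. Since $X$ itself converges to $\xvechat$, part 4 of the lemma gives $\xvec^* = \xvechat$.

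Now apply monotonicity of $f$. From $\xvechat^{(k)} \preceq \xvec^{(k)}$ we get $f(\xvechat^{(k)}) \le f(\xvec^{(k)}) \le b$ for every $k$. Part 5 of \cref{thm:seq-monotonization}, which uses that $f$ is MLC, then gives
\[ f(\xvechat) = f(\xvec^*) = \sup_{k} f(\xvechat^{(k)}) \le b. \]
Hence $\xvechat \in S$, so $S$ is closed.

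The only substantive step is the passage from an arbitrary convergent sequence to a monotone increasing approximating sequence from below. Lower continuity alone controls limits only along such sequences; this is exactly what \cref{thm:seq-monotonization} supplies, so I expect no real obstacle beyond invoking it carefully. The extended-real codomain causes no trouble, since all the inequalities above remain valid in $\mathbb{R} \cup \{-\infty, \infty\}$.
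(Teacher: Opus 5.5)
Your proof is correct and takes essentially the same approach as the paper: pass to the monotone support sequence of \cref{thm:seq-monotonization}, bound $f$ along it by $b$ using domination plus monotonicity, and transfer the bound to the limit via part~5 (lower continuity). You also check explicitly that the limit lies in $[0,1]^m$ and, via part~4, that it equals $\xvec^*$; the paper leaves both details implicit.
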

\begin{proof}
If $S = \emptyset$, we are done. Now assume $S \neq \emptyset$.
$S \subseteq [0, 1]^m$, so $S$ is bounded.

Let $X \defeq (\xvec^{(i)})_{i \in \mathbb{N}}$ be a sequence of points in $S$
that converges to $\xvec^*$. We want to show that $\xvec^* \in S$, i.e., $f(\xvec^*) \le b$.
Construct the sequence $\Xhat$ as in \cref{thm:seq-monotonization}. Then
\[ f(\xvec^*) = \sup_{i \in \mathbb{N}} f(\xvechat^{(i)}) \le \sup_{i \in \mathbb{N}} f(\xvec^{(i)}) \le b. \]
Hence, $\xvec^* \in S$, so $S$ is closed.
\end{proof}

\lemMLCDemandNonempty*
\label{prf:mlc-demand-nonempty}
\begin{proof}
Let $u^*(p) \defeq \sup_{\xvec \in F(p)} (\v(\xvec) - p(\xvec))$.
Then there exists a sequence $X \defeq (\xvec^{(i)})_{i \in \mathbb{N}}$
such that for all $i \in \mathbb{N}$, we have $\xvec^{(i)} \in F(p)$ and
$u^*(p) - 2^{-i} < \v(\xvec^{(i)}) - p(\xvec^{(i)}) \le u^*(p)$.
By the Bolzano-Weierstrass Theorem (\cref{thm:bolzano-weier} in \cref{sec:real-analysis:seq-conv}),
we can assume \wLoG{} that $X$ converges to some point $\xvec^*$.
By \cref{thm:mlc-feas-compact}, $F(p)$ is closed, so $\xvec^* \in F(p)$.

The sequence $(\v(\xvec^{(i)}) - p(\xvec^{(i)}))_{i \in \mathbb{N}}$ converges to $u^*(p)$.
Since $\v$ is continuous, the sequence $(\v(\xvec^{(i)}))_{i \in \mathbb{N}}$ converges to $\v(\xvec^*)$.
By \cref{thm:convergence-scaling,thm:convergence-sum} (in \cref{sec:real-analysis:seq-conv}),
$(p(\xvec^{(i)}))_{i \in \mathbb{N}}$ converges to $\v(\xvec^*) - u^*(p)$.

Construct the sequence $\Xhat$ as in \cref{thm:seq-monotonization}.
Since $\p$ is MLC, we get that $(\p(\xvechat^{(i)}))_{i \in \mathbb{N}}$ converges to $\p(\xvec^*)$.
By \cref{thm:convergence-scaling,thm:convergence-sum} (in \cref{sec:real-analysis:seq-conv}),
the sequence $Q \defeq (\p(\xvec^{(i)}) - \p(\xvechat^{(i)}))_{i \in \mathbb{N}}$
converges to $\v(\xvec^*) - u^*(p) - \p(\xvec^*)$. Since all elements in $Q$ are non-negative,
$Q$ converges to a non-negative value. Hence, $\v(\xvec^*) - \p(\xvec^*) \ge u^*(\p)$.
\end{proof}

\begin{lemma}[intermediate value theorem]
\label{thm:intermed-value}
Let $f: [a, b] \to \mathbb{R}$ be a continuous function.
Then for all $y \in [\min(f(a), f(b)), \max(f(a), f(b))]$,
there exists $x \in [a, b]$ such that $f(x) = y$.
\end{lemma}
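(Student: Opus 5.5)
The intermediate value theorem (\cref{thm:intermed-value}) is classical. I would prove it by a supremum (least-upper-bound) argument, relying only on the continuity notion of \cref{defn:continuity}, which by the accompanying remark is equivalent to the usual $\eps$--$\delta$ definition. Rescaling the domain affinely, I may work on $[a,b]$ directly.

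First I reduce to a single case. Replacing $f$ by $-f$ (and $y$ by $-y$) if necessary, assume $f(a) \le y \le f(b)$; continuity is preserved under negation. If $y = f(a)$ or $y = f(b)$, I take $x = a$ or $x = b$. So I may assume $f(a) < y < f(b)$.

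Next, let $T \defeq \{t \in [a,b] : f(t) \le y\}$. This set contains $a$ and is bounded above by $b$, so $c \defeq \sup(T)$ exists and lies in $[a,b]$. The goal is to show $f(c) = y$ by ruling out both strict inequalities.

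\begin{itemize}
\item Suppose $f(c) > y$. Then $c \neq a$, and by continuity there is $\delta > 0$ with $f > y$ on $[c-\delta, c]$. Hence no point of $T$ lies in $(c-\delta, c]$, and $T$ has no points above $c$ either. So $c-\delta$ is an upper bound for $T$, contradicting the minimality of the supremum.
\item Suppose $f(c) < y$. Then $c \neq b$, and by continuity there is $\delta > 0$ with $f < y$ on $[c, \min(b, c+\delta)]$. This places points of $T$ strictly above $c$, contradicting that $c$ is an upper bound.
\end{itemize}

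Therefore $f(c) = y$, with $c \in [a,b]$.

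The only delicate step is extracting a one-sided neighborhood on which $f$ stays strictly above or below $y$. I would get this from the $\eps$--$\delta$ form of continuity with $\eps = |f(c) - y|/2$, handling the endpoint cases $c = a$ and $c = b$ explicitly as above. Everything else is routine.
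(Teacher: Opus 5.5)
The paper gives no proof of this lemma. It is listed as a standard fact in the real-analysis appendix and used only once, in the budget-exhaustion theorem, for $h(t)=\p((1-t)\xvec^{(0)}+t\xvec^*)$ on $[0,1]$ with $h(0)\le b<h(1)$.

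Your least-upper-bound argument is a correct and complete proof. The reduction via $-f$, the endpoint cases, and both contradictions are sound. In the case $f(c)>y$, a neighborhood on which $f>y$ cannot reach $a$, because $f(a)<y$; so $[c-\delta,c]\subseteq[a,b]$ holds automatically.

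One optional refinement ties the proof to the paper's framework. You do not need the unproved remark equating \cref{defn:continuity} with the $\eps$--$\delta$ definition, because the one-sided neighborhoods come directly from the lower and upper limits:
\begin{itemize}
\item If $f(c)>y$, then $f^-(c)=f(c)>y$ gives some $\delta>0$ with $\inf\{f(x): x\in[a,b],\ |x-c|\le\delta\}>y$.
\item If $f(c)<y$, then $f^+(c)=f(c)<y$ gives some $\delta>0$ with $\sup\{f(x): x\in[a,b],\ |x-c|\le\delta\}<y$.
\end{itemize}
So the first contradiction uses only lower continuity and the second only upper continuity.
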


\subsection{Convex Sets}
\label{sec:real-analysis:convex-sets}

\begin{definition}
\label{defn:convex-set}
$S \subseteq \mathbb{R}^m$ is convex iff
for all $x, y \in S$ and $\alpha \in (0, 1)$, we have
$(1-\alpha)x + \alpha y \in S$.
\end{definition}

\begin{definition}[Convex hull of a set]
Let $S \subseteq \mathbb{R}^m$. The convex hull of $S$ is defined as
\[ \conv(S) \defeq \left\{\xvec \in \mathbb{R}^m: \begin{array}{ll}
    \exists k \in \mathbb{N}, \exists \lambda \in \Delta_k, \exists \xvec^{(1)}, \ldots, \xvec^{(k)} \in S
    \\ \text{such that } \xvec = \sum_{i=1}^k \lambda_i \xvec^{(i)}
    \end{array} \right\}. \]
\end{definition}

It is easy to see that for any $S \subseteq \mathbb{R}^m$, $\conv(S)$ is convex.

\begin{observation}
\label{thm:convex-hull-subset}
If $A \subseteq B \subseteq \mathbb{R}^m$, then $\conv(A) \subseteq \conv(B)$.
\end{observation}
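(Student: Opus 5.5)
Observation~\ref{thm:convex-hull-subset} should follow directly from the definition of the convex hull of a set, with no auxiliary results. Fix an arbitrary $\xvec \in \conv(A)$; I will show that $\xvec \in \conv(B)$.

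By definition of $\conv(A)$, there exist $k \in \mathbb{N}$, a weight vector $\lambda \in \Delta_k$, and points $\xvec^{(1)}, \ldots, \xvec^{(k)} \in A$ with $\xvec = \sum_{i=1}^k \lambda_i \xvec^{(i)}$. Since $A \subseteq B$, each $\xvec^{(i)}$ also lies in $B$. So the same $k$, the same $\lambda$, and the same points witness the condition in the definition of $\conv(B)$. Hence $\xvec \in \conv(B)$.

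Because $\xvec$ was arbitrary, this gives $\conv(A) \subseteq \conv(B)$. The hypothesis $B \subseteq \mathbb{R}^m$ is used only so that $\conv(B)$ is defined. I expect no step to be an obstacle: the only point to check is that the witnesses for membership in $\conv(A)$ transfer unchanged, which holds because the definition constrains the points only through membership in the underlying set.
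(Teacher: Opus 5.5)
Your proof is correct: the witnesses $k$, $\lambda$, and $\xvec^{(1)}, \ldots, \xvec^{(k)}$ for membership in $\conv(A)$ also witness membership in $\conv(B)$, because $A \subseteq B$. The paper states this observation without proof, as immediate from the definition of the convex hull of a set, and your argument is exactly that unwinding of the definition.
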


\begin{observation}
\label{thm:convex-hull-cap}
Let $A, B \subseteq \mathbb{R}^m$.
Then $\conv(A \cap B) \subseteq \conv(A) \cap \conv(B)$.
\end{observation}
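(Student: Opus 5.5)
The claim is $\conv(A \cap B) \subseteq \conv(A) \cap \conv(B)$. I will prove it directly from the definition of the convex hull of a set, by showing that $\conv(A \cap B)$ is contained in each of $\conv(A)$ and $\conv(B)$ separately, and then intersecting the two containments.

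The first step is to invoke \cref{thm:convex-hull-subset} (monotonicity of $\conv$ under inclusion). Since $A \cap B \subseteq A$, it gives $\conv(A \cap B) \subseteq \conv(A)$. Since $A \cap B \subseteq B$, it gives $\conv(A \cap B) \subseteq \conv(B)$. A set contained in both $\conv(A)$ and $\conv(B)$ lies in their intersection, which yields the claim.

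If one prefers not to lean on \cref{thm:convex-hull-subset}, its proof is a one-liner from the definition. Take $\xvec \in \conv(A \cap B)$, witnessed by some $k \in \mathbb{N}$, some $\lambda \in \Delta_k$, and points $\xvec^{(1)}, \ldots, \xvec^{(k)} \in A \cap B$ with $\xvec = \sum_{i=1}^k \lambda_i \xvec^{(i)}$. Every point $\xvec^{(i)}$ then lies in $A$, so the same witness shows $\xvec \in \conv(A)$. The identical witness also shows $\xvec \in \conv(B)$.

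There is no real obstacle here. The only care needed is to note that the same witness $(k, \lambda, \xvec^{(1)}, \ldots, \xvec^{(k)})$ serves for both hulls, so no new convex combination has to be constructed. I would also remark that the reverse inclusion fails in general, which is why the statement is only an inclusion.
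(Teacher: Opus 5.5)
Your proof is correct and is the intended argument. The paper states this observation without proof, immediately after \cref{thm:convex-hull-subset}, and your approach (monotonicity of $\conv$ applied to $A \cap B \subseteq A$ and $A \cap B \subseteq B$, or equivalently reusing the same convex-combination witness for both hulls) is exactly the one-line justification it leaves implicit.
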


\begin{lemma}[Theorem 17.2 in \cite{rockafellar1997convex17}]
\label{thm:conv-closed}
If $S \subseteq \mathbb{R}^m$ is closed and bounded,
then $\conv(S)$ is closed and bounded.
\end{lemma}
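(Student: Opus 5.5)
Since $S$ is bounded, it lies in some box $[-z,z]^m$, and every point of $\conv(S)$ is a convex combination of points of $S$, so the same bound $\linf{\xvec} \le z$ holds on $\conv(S)$; boundedness is therefore immediate. The substance is closedness. I would use Carathéodory's theorem to show that every point of $\conv(S)$ is a convex combination of at most $m+1$ points of $S$. This fixes the number of summands, so $\conv(S)$ is the image of the set $\Delta_{m+1} \times S^{m+1}$ under the continuous map $(\lambda, \xvec^{(1)}, \ldots, \xvec^{(m+1)}) \mapsto \sum_{i} \lambda_i \xvec^{(i)}$.

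\textbf{Carathéodory step.} Suppose $\xvec = \sum_{i=1}^k \lambda_i \xvec^{(i)}$ with $k > m+1$ and all $\lambda_i > 0$. The $k$ points are affinely dependent, so there is a nonzero $\mu \in \mathbb{R}^k$ with $\sum_i \mu_i \xvec^{(i)} = \vecZero$ and $\sum_i \mu_i = 0$. Some $\mu_i$ is therefore positive. Set $t \defeq \min_{\mu_i > 0} \lambda_i/\mu_i$ and replace $\lambda$ by $\lambda - t\mu$. This remains a valid convex combination representing $\xvec$, and at least one coefficient becomes zero. Dropping that point and repeating, we reach a representation with at most $m+1$ points.

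\textbf{Closedness via sequences.} I would use \cref{defn:closed} and \cref{thm:bolzano-weier}. Take a sequence $\yvec^{(n)} \in \conv(S)$ converging to $\yvec^*$, and write each term as
\[ \yvec^{(n)} = \sum_{i=1}^{m+1} \lambda^{(n)}_i \xvec^{(n,i)}. \]
Each tuple $(\lambda^{(n)}, \xvec^{(n,1)}, \ldots, \xvec^{(n,m+1)})$ lies in a bounded subset of $\mathbb{R}^{(m+1)+(m+1)m}$. Bolzano–Weierstrass gives a subsequence converging to some limit $(\lambda^*, \xvec^{*,1}, \ldots)$. By \cref{thm:convergence-decomp}, each coordinate block converges separately. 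Since $\Delta_{m+1}$ and $S$ are closed, $\lambda^* \in \Delta_{m+1}$ and $\xvec^{*,i} \in S$. For the products $\lambda_i^{(n)}\xvec^{(n,i)}$, the bound $|\lambda x - \lambda^* x^*| \le |\lambda - \lambda^*|\,|x| + |\lambda^*|\,|x - x^*|$ together with boundedness shows they converge to $\lambda^*_i \xvec^{*,i}$. Then \cref{thm:convergence-sum} gives that the subsequence of $\yvec^{(n)}$ converges to $\sum_i \lambda^*_i \xvec^{*,i}$. By uniqueness of limits (\cref{thm:convergence-unique}), this limit equals $\yvec^*$, so $\yvec^* \in \conv(S)$.

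\textbf{Main obstacle.} The Carathéodory reduction is the step everything depends on. Without a uniform bound on the number of points $k$, the compactness argument fails, because $k$ could grow with $n$. The remaining steps are routine limit manipulations.
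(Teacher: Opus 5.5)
Your proof is correct and is essentially the standard argument behind the cited result (Rockafellar's Theorem 17.2, which the paper invokes without proof). Carath\'eodory's theorem bounds the number of points by $m+1$, so $\conv(S)$ is the continuous image of the compact set $\Delta_{m+1}\times S^{m+1}$ under $(\lambda,\xvec^{(1)},\ldots,\xvec^{(m+1)})\mapsto\sum_i\lambda_i\xvec^{(i)}$; your sequential Bolzano--Weierstrass argument just unpacks that compactness. Two small details are left implicit but are harmless: pad any representation with fewer than $m+1$ points using zero-weight copies (allowed, since $\Delta_{m+1}$ permits zero entries), and note that the case $S=\emptyset$ is trivial.
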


\subsection{Convex Functions}
\label{sec:real-analysis:convex-functions}

We begin by giving an equivalent definition of the convex hull of a function.

\begin{definition}[Convex hull of a function \citep{rockafellar1997convex05}]
\label{defn:convex-env-epi}
Let $f: [0, 1]^m \to \mathbb{R}_{\ge 0}$ be a function. The \emph{epigraph} of $f$ is
the set $\epi(f) \defeq \{(\xvec, y): \xvec \in [0, 1]^m \text{ and } y \ge f(\xvec)\}$.
Let $U_f(\xvec) \defeq \{y \in \mathbb{R}: (\xvec, y) \in \conv(\epi(f))\}$
and $\fhat(\xvec) \defeq \inf(U_f(\xvec))$ for all $\xvec \in [0, 1]^m$.
Then $\fhat: [0, 1]^m \to \mathbb{R}_{\ge 0}$ is called
the \emph{convex hull} of $f$, and is denoted by $\conv(f)$.
\end{definition}

\begin{lemma}[Theorem 5.3 in \cite{rockafellar1997convex05}]
\label{thm:conv-hull-is-conv}
Let $S \subseteq [0, 1]^m \times \mathbb{R}$ be a convex set.
Let $f(x) \defeq \inf(\{y: (x, y) \in S\})$ for all $x \in [0, 1]^m$.
Then $f: [0, 1]^m \to \mathbb{R} \cup \{-\infty, \infty\}$ is a convex function.
\end{lemma}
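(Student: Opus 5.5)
The plan is to verify the convexity inequality directly from the definition of $f$ as a fibrewise infimum. I would use the inequality form of convexity appropriate to extended-real-valued functions. For all $\xvec^{(1)}, \xvec^{(2)} \in [0, 1]^m$ and $\lambda \in (0, 1)$, and for all reals $\alpha_1 > f(\xvec^{(1)})$ and $\alpha_2 > f(\xvec^{(2)})$, we want
\[ f\bigl(\lambda \xvec^{(1)} + (1-\lambda)\xvec^{(2)}\bigr) < \lambda\alpha_1 + (1-\lambda)\alpha_2 . \]
This is Rockafellar's formulation. It avoids ever forming $\infty - \infty$. When both values are finite it is equivalent to the usual inequality $f(\lambda \xvec^{(1)} + (1-\lambda)\xvec^{(2)}) \le \lambda f(\xvec^{(1)}) + (1-\lambda) f(\xvec^{(2)})$, obtained by letting $\alpha_i \downarrow f(\xvec^{(i)})$.

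\textbf{Step 1 (degenerate cases).} If $f(\xvec^{(1)}) = \infty$ or $f(\xvec^{(2)}) = \infty$, there is no real $\alpha_i > f(\xvec^{(i)})$, so the condition is vacuous. Here I use the convention that $\inf(\emptyset) = \infty$, so an empty fibre of $S$ gives $f = \infty$. Hence we may assume both values are $< \infty$, possibly $-\infty$.

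\textbf{Step 2 (main step).} Fix reals $\alpha_1 > f(\xvec^{(1)})$ and $\alpha_2 > f(\xvec^{(2)})$. By the definition of $\inf$ in \cref{sec:ext-real}, the fibre $\{y: (\xvec^{(i)}, y) \in S\}$ contains some $y_i < \alpha_i$. This holds both when the infimum is finite and when it is $-\infty$. Since $S$ is convex (\cref{defn:convex-set}), the point
\[ \bigl(\lambda \xvec^{(1)} + (1-\lambda)\xvec^{(2)},\; \lambda y_1 + (1-\lambda) y_2\bigr) \]
lies in $S$. Its first coordinate stays in $[0, 1]^m$ because the cube is convex. Therefore
\[ f\bigl(\lambda \xvec^{(1)} + (1-\lambda)\xvec^{(2)}\bigr) \le \lambda y_1 + (1-\lambda) y_2 < \lambda\alpha_1 + (1-\lambda)\alpha_2 . \]

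\textbf{Step 3 (conclusion).} The $\alpha$-inequality holds for every admissible choice of $\alpha_1, \alpha_2$. When $f(\xvec^{(1)})$ and $f(\xvec^{(2)})$ are finite, taking the infimum over $\alpha_i \downarrow f(\xvec^{(i)})$ recovers the standard convexity inequality. When either value is $-\infty$, the bound forces the left side to $-\infty$, which is the correct extended-real statement.

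I do not expect a genuine obstacle. The only delicate point is bookkeeping with $\pm\infty$: using strict upper bounds $\alpha_i$ rather than the values $f(\xvec^{(i)})$ themselves handles $-\infty$ fibres and empty fibres uniformly. Also, $S$ need not be upward closed in $y$. This is why the witnesses $y_i$ are taken from the fibres via the definition of the infimum rather than set equal to the bound $\alpha_i$ itself.
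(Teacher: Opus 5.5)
Your proof is correct. It is essentially Rockafellar's own argument for Theorem~5.3: take witnesses $y_i < \alpha_i$ from the fibres, combine them using convexity of $S$, and conclude via his strict-inequality characterization of convexity (Theorem~4.2), which the paper cites rather than reproves.
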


\begin{lemma}
\label{thm:convex-env-props-basic}
Let $f: [0, 1]^m \to \mathbb{R}_{\ge 0}$ be a function and $\fhat$ be its convex hull. Then
\begin{tightenum}
\item $\fhat(\xvec) \le f(\xvec)$ for all $\xvec \in [0, 1]^m$.
\item $\fhat$ is convex.
\item If $f$ is monotone, then $\fhat$ is also monotone.
\end{tightenum}
\end{lemma}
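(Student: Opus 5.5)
We work with the epigraph description of \cref{defn:convex-env-epi}: $\fhat(\xvec) = \inf U_f(\xvec)$, where $U_f(\xvec)$ is the fiber of $C \defeq \conv(\epi(f))$ over $\xvec$. The three parts are handled in order, each reducing to an elementary property of $C$.

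\textbf{Part 1.} For any $\xvec$, the point $(\xvec, f(\xvec))$ lies in $\epi(f) \subseteq C$. Hence $f(\xvec) \in U_f(\xvec)$, and so $\fhat(\xvec) \le f(\xvec)$.

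\textbf{Part 2.} The set $C$ is convex, being a convex hull. By \cref{thm:conv-hull-is-conv} applied with $S = C$, the function $\xvec \mapsto \inf\{y : (\xvec, y) \in C\}$ is convex, and this function is exactly $\fhat$.

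We should also check that $\fhat$ is real-valued and non-negative, so that the codomain $\mathbb{R}_{\ge 0}$ is justified. Every point of $\epi(f)$ has $y \ge f(\xvec) \ge 0$. The half-space $\{y \ge 0\}$ is convex, so $C$ lies in it and $\fhat \ge 0$. By Part 1 we also have $\fhat \le f < \infty$.

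\textbf{Part 3.} This is the main step. Let $f$ be monotone and take $\xvec \preceq \yvec$. It suffices to show $U_f(\yvec) \subseteq U_f(\xvec)$, since then $\fhat(\xvec) \le \fhat(\yvec)$. So take $z \in U_f(\yvec)$, witnessed by
\[ \yvec = \sum_i \lambda_i \yvec^{(i)}, \qquad z \ge \sum_i \lambda_i f(\yvec^{(i)}). \]
The task is to produce points $\xvec^{(i)} \preceq \yvec^{(i)}$ in $[0,1]^m$ with $\sum_i \lambda_i \xvec^{(i)} = \xvec$. Once we have them, monotonicity gives $f(\xvec^{(i)}) \le f(\yvec^{(i)})$, and hence $z \in U_f(\xvec)$.

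The construction works coordinatewise. Fix a coordinate $j$.
\begin{itemize}
\item If $y_j = 0$, then every $y^{(i)}_j$ with $\lambda_i > 0$ equals $0$, and we set $x^{(i)}_j = 0$.
\item Otherwise set $x^{(i)}_j \defeq (x_j / y_j)\, y^{(i)}_j$.
\end{itemize}
In the second case $0 \le x_j/y_j \le 1$, so $0 \le x^{(i)}_j \le y^{(i)}_j \le 1$, which keeps the point in the box and dominated by $\yvec^{(i)}$. The weighted average is $\sum_i \lambda_i x^{(i)}_j = (x_j/y_j)\, y_j = x_j$, as required. Indices with $\lambda_i = 0$ may be dropped from the witness.

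The only subtlety here is staying inside $[0,1]^m$ and handling zero coordinates, and the scaling construction settles both.
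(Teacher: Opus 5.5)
Your Parts 1 and 2 match the paper: $(\xvec, f(\xvec)) \in \epi(f)$, and convexity follows from \cref{thm:conv-hull-is-conv} applied to $\conv(\epi(f))$. Your extra check that $\fhat$ is real-valued and non-negative is not in the paper, but it is harmless and justifies the stated codomain.

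Part 3 takes a genuinely different route, and yours is the sound one. The paper sets $d \defeq \yvec - \xvec$ and translates every witness point, using $\xvec = \sum_i \lambda_i(\yvec^{(i)} - d)$ and $f(\yvec^{(i)} - d) \le f(\yvec^{(i)})$. Translated points need not stay in $[0,1]^m$. Take $m = 1$, $\xvec = 1/4$, $\yvec = 1/2$, and the witness $\yvec^{(1)} = 0$, $\yvec^{(2)} = 1$, $\lambda = (1/2, 1/2)$. Then $\yvec^{(1)} - d = -1/4$, where $f$ is not defined, so that step fails as written.

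Your coordinatewise scaling $x^{(i)}_j = (x_j/y_j)\,y^{(i)}_j$ keeps $\vecZero \preceq \xvec^{(i)} \preceq \yvec^{(i)}$ and preserves the weighted average, which is exactly what the domain constraint demands. The only cost is the separate treatment of coordinates with $y_j = 0$; there you should say explicitly that $x_j = 0$ too, since $0 \le x_j \le y_j$.

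One thing you leave implicit: the witness form ``$\yvec = \sum_i \lambda_i \yvec^{(i)}$ and $z \ge \sum_i \lambda_i f(\yvec^{(i)})$'' is equivalent to $(\yvec, z) \in \conv(\epi(f))$. This is a one-line check (add the slack $z - \sum_i \lambda_i f(\yvec^{(i)})$ to every height), and the paper also uses the two forms interchangeably.
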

\begin{proof}
For any $x \in [0, 1]^m$, we have $(x, f(x)) \in \epi(f)$, so $f(x) \in U_f(x)$, so $\fhat(x) \le f(x)$.
$\fhat$ is convex by applying \cref{thm:conv-hull-is-conv} on $S = \conv(\epi(f))$.

Suppose $f$ is monotone. We want to show that $\fhat$ is also monotone.
Suppose $\xhat \preceq x^*$. Let $d \defeq x^* - \xhat$. Then $d \succeq \vecZero$.
Suppose $y^* \in U_f(x^*)$. Then $\exists k \in \mathbb{N}$, $\exists \lambda \in \Delta_k$,
and $\exists x^{(0)}, \ldots, x^{(k)} \in [0, 1]^m$ such that
\begin{align*}
x^* &= \sum_{i=1}^k \lambda_i x^{(i)},
& y^* \ge \sum_{i=1}^k \lambda_i f(x^{(i)}).
\end{align*}
Then
\begin{align*}
\xhat &= \sum_{i=1}^k \lambda_i (x^{(i)} - d),
& y^* \ge \sum_{i=1}^k \lambda_i f(x^{(i)}) \ge \sum_{i=1}^k \lambda_i f(x^{(i)} - d).
\end{align*}
Hence, $y^* \in U_f(\xhat)$. Thus, $U_f(x^*) \subseteq U_f(\xhat)$. Hence,
\[ \fhat(x^*) = \inf(U_f(x^*)) \ge \inf(U_f(\xhat)) = f(\xhat). \]
Hence, $\fhat$ is monotone.
\end{proof}

\begin{lemma}
\label{thm:mlc-epi-closed}
If $f: [0, 1]^m \to \mathbb{R}$ is MLC, then $\epi(f)$ is closed.
\end{lemma}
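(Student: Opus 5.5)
We prove that $\epi(f)$ is closed directly from \cref{defn:closed}: take a sequence of points in $\epi(f)$ converging to a limit, and show the limit lies in $\epi(f)$. The idea is the one used in \cref{thm:mlc-feas-compact}. Monotonicity and lower-continuity force $f$ at a limit point to be controlled by the values of $f$ along a dominated, monotone approximating sequence.

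Concretely, let $((\xvec^{(i)}, y_i))_{i \in \mathbb{N}}$ be a sequence in $\epi(f)$ converging to $(\xvec^*, y^*) \in \mathbb{R}^{m+1}$. By \cref{thm:convergence-decomp}, $\xvec^{(i)} \to \xvec^*$ and $y_i \to y^*$. Since $[0,1]^m$ is closed, $\xvec^* \in [0,1]^m$, so it remains to show $f(\xvec^*) \le y^*$.

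Now apply \cref{thm:seq-monotonization} to $X \defeq (\xvec^{(i)})_{i}$, obtaining the monotone sequence $\Xhat = (\xvechat^{(i)})_i$. This sequence satisfies $\xvechat^{(i)} \preceq \xvec^{(i)}$ and converges to the coordinatewise limit point. Item~4 of that lemma identifies this point with the limit $\xvec^*$ of $X$, and item~5 gives $f(\xvec^*) = \sup_i f(\xvechat^{(i)})$.

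The key step is to bound this supremum by $y^*$; an individual term $f(\xvechat^{(i)})$ need not be at most $y^*$. Fix $k$. For every $i \ge k$, monotonicity of $\Xhat$ and domination give $\xvechat^{(k)} \preceq \xvechat^{(i)} \preceq \xvec^{(i)}$. Monotonicity of $f$ then yields
\[ f(\xvechat^{(k)}) \le f(\xvec^{(i)}) \le y_i \quad \text{for all } i \ge k. \]
Letting $i \to \infty$ gives $f(\xvechat^{(k)}) \le y^*$. Taking the supremum over $k$ gives $f(\xvec^*) \le y^*$, so $(\xvec^*, y^*) \in \epi(f)$.

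The main subtlety is this tail argument. Taking the supremum directly against $y_k$, as in \cref{thm:mlc-feas-compact}, works there only because the threshold $b$ is fixed. Here the threshold $y_i$ varies, so we must compare $f(\xvechat^{(k)})$ with the whole tail $(y_i)_{i \ge k}$ before passing to the limit.
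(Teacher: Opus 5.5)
Your proof is correct and takes essentially the same approach as the paper. Both arguments pass to the monotone support sequence from \cref{thm:seq-monotonization}, use item~5 to identify $f(\xvec^*)$ with the supremum (equivalently, the limit) of $f(\xvechat^{(i)})$, and use domination plus monotonicity of $f$ to get $f(\xvechat^{(i)}) \le f(\xvec^{(i)}) \le y_i$. The only cosmetic difference is the last step: the paper lets the non-negative sequence $y_i - f(\xvechat^{(i)})$ converge to $y^* - f(\xvec^*)$, whereas you bound each $f(\xvechat^{(k)})$ by $y^*$ through the tail and then take the supremum over $k$.
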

\begin{proof}
Let $((x^{(i)}, y^{(i)}))_{i \in \mathbb{N}}$ be a sequence in $\epi(f)$ that converges to $(x^*, y^*)$.
We must show that $(x^*, y^*) \in \epi(f)$.
Equivalently, given $y^{(i)} \ge f(x^{(i)})$ for all $i \in \mathbb{N}$,
and we must show that $y^* \ge f(x^*)$.

Let $X \defeq (x^{(i)})_{i \in \mathbb{N}}$ and $Y \defeq (y^{(i)})_{i \in \mathbb{N}}$.
By \cref{thm:convergence-decomp}, $X$ converges to $x^*$ and $Y$ converges to $y^*$.
Construct the sequence $\Xhat \defeq (\xhat^{(i)})_{i \in \mathbb{N}}$ as in \cref{thm:seq-monotonization}.
Then $\Xhat$ also converges to $x^*$.
Let $\Yhat \defeq (f(\xhat^{(i)})_{i \in \mathbb{N}}$.
Then $\Yhat$ converges to $f(x^*)$.
Moreover, $f(\xhat^{(i)}) \le f(x^{(i)}) \le y^{(i)}$.
Hence, $Y - \Yhat \defeq (y^{(i)} - f(\xhat^{(i)}))_{i \in \mathbb{N}}$
is a sequence of non-negative numbers that converges to $y^* - f(x^*)$.
Hence, $y^* \ge f(x^*)$. Thus, $(x^*, y^*) \in \epi(f)$, so $\epi(f)$ is closed.
\end{proof}

\begin{lemma}
\label{thm:mlc-conv-epi-comm}
Let $f: [0, 1]^m \to \mathbb{R}$ be an MLC function.
Let $H \defeq [0, 1]^m \times [f(\vecZero), f(\vecOne)]$.
Then $\conv(\epi(f)) \cap H = \conv(\epi(f) \cap H)$.
\end{lemma}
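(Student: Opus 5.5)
We prove the two inclusions separately. Only monotonicity of $f$ will be used, which guarantees $f(\vecZero) \le f(\xvec) \le f(\vecOne)$ for every $\xvec \in [0, 1]^m$. Lower-continuity plays no role here.

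\emph{The inclusion $\supseteq$.} Since $\epi(f) \cap H \subseteq \epi(f)$, \cref{thm:convex-hull-subset} gives $\conv(\epi(f) \cap H) \subseteq \conv(\epi(f))$. Likewise $\epi(f) \cap H \subseteq H$, and $H$ is a box, hence convex, so $\conv(\epi(f) \cap H) \subseteq \conv(H) = H$. Intersecting the two gives the claim; this is just \cref{thm:convex-hull-cap} together with $\conv(H) = H$.

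\emph{The inclusion $\subseteq$.} Take $(\xvec, y) \in \conv(\epi(f)) \cap H$. We can write $\xvec = \sum_{i=1}^k \lambda_i \xvec^{(i)}$ and $y = \sum_{i=1}^k \lambda_i y_i$ with $\lambda \in \Delta_k$ and $y_i \ge f(\xvec^{(i)})$ for each $i$. Moreover $f(\vecZero) \le y \le f(\vecOne)$. The obstacle is that some $y_i$ may exceed $f(\vecOne)$, so the points $(\xvec^{(i)}, y_i)$ need not lie in $H$.

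To fix this, truncate: set $y'_i \defeq \min(y_i, f(\vecOne))$. By monotonicity, $f(\vecZero) \le f(\xvec^{(i)}) \le y'_i \le f(\vecOne)$. Hence $(\xvec^{(i)}, y'_i) \in \epi(f) \cap H$, and so $(\xvec, y') \in \conv(\epi(f) \cap H)$, where $y' \defeq \sum_i \lambda_i y'_i \le y$.

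We also need a point above $(\xvec, y)$ in the same set. Again by monotonicity $f(\xvec) \le f(\vecOne)$, so $(\xvec, f(\vecOne))$ itself lies in $\epi(f) \cap H$. Since $y' \le y \le f(\vecOne)$, the point $(\xvec, y)$ lies on the vertical segment joining $(\xvec, y')$ and $(\xvec, f(\vecOne))$. Both endpoints belong to the convex set $\conv(\epi(f) \cap H)$, so $(\xvec, y)$ does too. This completes the proof.

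The only step that requires an idea is handling heights above $f(\vecOne)$. The truncation together with the top face $\{(\xvec, f(\vecOne))\}$ resolves it, and I do not expect any further difficulty.
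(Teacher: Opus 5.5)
Your proof is correct, and for the inclusion $\subseteq$ it takes a different and shorter route than the paper.

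The paper's argument runs in three steps:
\begin{itemize}
\item It first disposes of the case $y \ge f(\xvec)$.
\item It then merges all generating points lying above height $f(\vecOne)$ into a single aggregated point $z^{(k)}$, using convexity of the region $[0,1]^m \times [f(\vecOne), \infty)$.
\item Finally it slides from $(\xvec, y)$ along the segment toward $z^{(k)}$ until reaching height $f(\vecOne)$, producing a point $\tilde z$ on the top face. It then re-expresses $(\xvec, y)$, with explicitly computed weights, as a convex combination of $\tilde z$ and the barycenter $z^{(0)}$ of the remaining low points.
\end{itemize}

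You instead push every high generating point vertically down to height $f(\vecOne)$. This keeps it in $\epi(f)\cap H$ because $f(\xvec^{(i)}) \le f(\vecOne)$, and gives a point $(\xvec, y')$ directly below $(\xvec, y)$. You then close the gap with the vertical segment to $(\xvec, f(\vecOne))$.

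Both arguments rest on the same consequence of monotonicity: the whole top face $[0,1]^m \times \{f(\vecOne)\}$ lies in $\epi(f)\cap H$. What the paper's version buys is that it leaves the low generating points untouched and replaces only one aggregated point, so the new representation uses no more points than the original. Yours may use one extra point, which is irrelevant here since only membership in the convex hull matters. In exchange, your version avoids the case split, the merging step, and the reweighting computation entirely.

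Your remark that lower-continuity plays no role is also right. The paper's argument does not genuinely use it either: its appeal to $T$ being ``closed'' in the merging step is really an appeal to convexity.
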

\begin{proof}
By \cref{thm:convex-hull-cap}, we get $\conv(\epi(f)) \cap H \supseteq \conv(\epi(f) \cap H)$,
so we need to show that $\conv(\epi(f)) \cap H \subseteq \conv(\epi(f) \cap H)$.
Let $(\xhat, \yhat) \in \conv(\epi(f)) \cap H$.
Equivalently, $(\xhat, \yhat) \in \conv(\epi(f))$ and $\yhat \le f(\vecOne)$.
Suppose $\yhat \ge f(\xhat)$. Then $(\xhat, \yhat) \in \epi(f) \cap H$.
So now assume $\yhat < f(\xhat) \le f(\vecOne)$.

Since $(\xhat, \yhat) \in \conv(\epi(f))$,
$\exists k \in \mathbb{N}$, $\exists \lambda \in \Delta_k$,
$\exists (x^{(1)}, y^{(1)}), \ldots, (x^{(k)}, y^{(k)}) \in \epi(f)$ such that
\begin{align*}
\xhat &= \sum_{i=1}^k \lambda_ix^{(i)},
& \yhat &= \sum_{i=1}^k \lambda_iy^{(i)}
\end{align*}
Assume \wLoG{} that $\lambda_i > 0$ for all $i \in [k]$.
If $y^{(i)} \le f(\vecOne)$ for all $i \in [k]$,
then $(\xhat, \yhat) \in \conv(\epi(f) \cap H)$, and we are done.
Otherwise, assume \wLoG{} that for some $k' \in [1, k-1]$,
$(x^{(i)}, y^{(i)}) \in T$ for all $i \in [k] \setminus [k - k']$.
Since $T$ is closed, we can replace these points by the single point
\[ \left(\frac{1}{L}\sum_{i=k-k'+1}^k \lambda_ix^{(i)},\;
\frac{1}{L}\sum_{i=k-k'+1}^k \lambda_iy^{(i)}\right), \]
where $L \defeq \frac{1}{L}\sum_{i=k-k'+1}^k \lambda_i$.
Hence, assume \wLoG{} that $k' = 1$.
Hence, $y^{(i)} < f(\vecOne)$ for all $i \in [k-1]$ and $y^{(k)} \ge f(\vecOne)$.

Let $z^{(i)} \defeq (x^{(i)}, y^{(i)})$ for all $i \in [k]$.
Then $z^{(i)} \in \epi(f) \cap H$ for $i \in [k-1]$ and $z^{(k)} \in T$.
Let
\[ z^{(0)} \defeq \frac{1}{1-\lambda_k}\sum_{i=1}^{k-1} \lambda_iz^{(i)}. \]
Let $\zhat \defeq (\xhat, \yhat)$.
Then $\zhat = (1-\lambda_k)z^{(0)} + \lambda_kz^{(k)}$.

Define
\[ \alpha \defeq \frac{f(\vecOne) - \yhat}{y^{(k)} - \yhat}. \]
Then $\alpha \in (0, 1]$.
Let $\xtild \defeq (1-\alpha)\xhat + \alpha x^{(k)}$,
$\ytild \defeq (1-\alpha)\yhat + \alpha y^{(k)} = f(\vecOne)$,
and $\ztild \defeq (\xtild, \ytild)$. Then $\ztild \in \epi(f) \cap H$.

\begin{align*}
& \zhat = (1-\lambda_k)z^{(0)} + \lambda_kz^{(k)}
\\ &\implies 0 = (1-\lambda_k)(z^{(0)}-\zhat) + \lambda_k(z^{(k)} - \zhat)
\\ &\implies 0 = (1-\lambda_k)(z^{(0)}-\zhat) + (\lambda_k/\alpha)(\ztild - \zhat)
\\ &\implies 0 = \frac{\alpha(1-\lambda_k)}{\alpha(1-\lambda_k)+\lambda_k}(z^{(0)}-\zhat)
    + \frac{\lambda_k}{\alpha(1-\lambda_k)+\lambda_k}(\ztild - \zhat)
\\ &\implies \zhat = \frac{\alpha(1-\lambda_k)}{\alpha(1-\lambda_k)+\lambda_k}z^{(0)}
    + \frac{\lambda_k}{\alpha(1-\lambda_k)+\lambda_k}\ztild.
\end{align*}
Hence, $\zhat$ is a convex combination of points in $\epi(f) \cap H$.
Hence, $\conv(\epi(f)) \cap H = \conv(\epi(f) \cap H)$.
\end{proof}

\begin{lemma}
\label{thm:mlc-conv-epi-closed}
If $f: [0, 1]^m \to \mathbb{R}$ is MLC, then $\conv(\epi(f))$ is closed.
\end{lemma}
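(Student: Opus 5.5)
We show that $\conv(\epi(f))$ is closed by splitting it along the horizontal slab $H \defeq [0, 1]^m \times [f(\vecZero), f(\vecOne)]$ from \cref{thm:mlc-conv-epi-comm}. Let $T \defeq [0, 1]^m \times [f(\vecOne), \infty)$. Since $f$ is monotone, $f(\xvec) \le f(\vecOne)$ for every $\xvec$, so $T \subseteq \epi(f)$. Monotonicity also gives $f(\xvec) \ge f(\vecZero)$, hence every point of $\epi(f)$ (and of its convex hull) has second coordinate at least $f(\vecZero)$. Thus
\[ \conv(\epi(f)) = \bigl(\conv(\epi(f)) \cap H\bigr) \cup \bigl(\conv(\epi(f)) \cap T\bigr). \]
The second piece equals $T$, because $T \subseteq \epi(f) \subseteq \conv(\epi(f))$ and everything in the intersection already lies in $T$. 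The set $T$ is closed.

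For the first piece, \cref{thm:mlc-conv-epi-comm} gives $\conv(\epi(f)) \cap H = \conv(\epi(f) \cap H)$. By \cref{thm:mlc-epi-closed}, $\epi(f)$ is closed, so $\epi(f) \cap H$ is closed. It is also bounded, since it lies in $[0, 1]^m \times [f(\vecZero), f(\vecOne)]$ and $f$ is real-valued. \Cref{thm:conv-closed} then shows that $\conv(\epi(f) \cap H)$ is closed and bounded.

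A union of two closed sets is closed, which finishes the argument. Closedness of unions can be checked directly with \cref{defn:closed}: a convergent sequence in the union has infinitely many terms in one of the two sets, and that subsequence converges to the same limit.

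The main care point is verifying the decomposition: specifically, that $\conv(\epi(f))$ has no points below height $f(\vecZero)$ and that $T \subseteq \epi(f)$. Both follow from monotonicity. All the substantive work is already contained in \cref{thm:mlc-conv-epi-comm}.
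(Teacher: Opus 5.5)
Your proposal is correct and follows the paper's proof essentially step for step. It uses the same slab $H$ and upper region $T$, the same reduction via \cref{thm:mlc-conv-epi-comm}, and the same appeals to \cref{thm:mlc-epi-closed} and \cref{thm:conv-closed}; you additionally spell out why the convex hull stays at or above height $f(\vecZero)$ and why a union of two closed sets is closed, which the paper takes for granted.
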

\begin{proof}
We would like to use \cref{thm:mlc-epi-closed,thm:conv-closed} to show that $\conv(\epi(f))$ is closed.
Unfortunately, \cref{thm:conv-closed} only works for bounded sets, so we need a workaround.
Let $H \defeq [0, 1]^m \times [f(\vecZero), f(\vecOne)]$
and $T \defeq [0, 1]^m \times [f(\vecOne), \infty)$.
For all $(x, y) \in T$, we have $y \ge f(\vecOne) \ge f(x)$, so $(x, y) \in \epi(f)$.
For all $(x, y) \in \epi(f)$, we have $y \ge f(x) \ge f(\vecZero)$, so $(x, y) \in H \cup T$.
Hence, $T \subseteq \epi(f) \subseteq H \cup T$.
Thus, $\epi(f)$ is the disjoint union of $\epi(f) \cap H$ and $T$.

Since $\epi(f) \subseteq \conv(\epi(f))$, we get $T \subseteq \conv(\epi(f))$.
It is also easy to see that $\conv(\epi(f)) \subseteq H \cup T$.
Hence, $\conv(\epi(f))$ is the disjoint union of $\conv(\epi(f)) \cap H$ and $T$.
$T$ is closed, and the union of closed sets is closed.
Hence, to show that $\conv(\epi(f))$ is closed, we need to show that $\conv(\epi(f)) \cap H$ is closed.

By \cref{thm:mlc-conv-epi-comm}, we get $\conv(\epi(f)) \cap H = \conv(\epi(f) \cap H)$.
By \cref{thm:mlc-epi-closed}, $\epi(f)$ is closed. Since $H$ is closed,
and the intersection of closed sets is closed, we get that $\epi(f) \cap H$ is also closed.
$\epi(f) \cap H$ is also bounded.
Hence, $\conv(\epi(f) \cap H)$ is closed by \cref{thm:conv-closed}.
Thus, $\conv(\epi(f))$ is closed.
\end{proof}

\begin{lemma}
\label{thm:Ufx-has-min}
Let $f: [0, 1]^m \to \mathbb{R}_{\ge 0}$ be MLC. Then for all $\xvec \in [0, 1]^m$,
$U_f(\xvec)$ has a minimum element, i.e., $\inf(U_f(\xvec)) \in U_f(\xvec)$.
\end{lemma}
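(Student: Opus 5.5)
The plan is to read $U_f(\xvec)$ through \cref{defn:convex-env-epi}, as the set of heights $y$ with $(\xvec, y) \in \conv(\epi(f))$, and then show that this vertical slice of a closed set is a closed subset of $\mathbb{R}$ that is bounded below. Such a set contains its infimum.

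\textbf{Step 1: the slice is nonempty.} Since $(\xvec, f(\xvec)) \in \epi(f) \subseteq \conv(\epi(f))$, we have $f(\xvec) \in U_f(\xvec)$.

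\textbf{Step 2: the slice is bounded below.} Every point $(\xvec', y') \in \epi(f)$ satisfies $y' \ge f(\xvec') \ge 0$, because $f$ maps into $\mathbb{R}_{\ge 0}$; monotonicity would also give $y' \ge f(\vecZero)$. The half-space $\{(\xvec', y') : y' \ge 0\}$ is convex, so it contains $\conv(\epi(f))$. Hence every element of $U_f(\xvec)$ is at least $0$, and $\fhat(\xvec) = \inf(U_f(\xvec))$ is a finite real number.

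\textbf{Step 3: the slice is closed.} By \cref{thm:mlc-conv-epi-closed}, $\conv(\epi(f))$ is closed whenever $f$ is MLC. Now take a sequence $y^{(i)} \in U_f(\xvec)$ converging to some $y^*$. Then $(\xvec, y^{(i)})$ is a sequence in $\conv(\epi(f))$, and by \cref{thm:convergence-decomp} it converges to $(\xvec, y^*)$. Closedness gives $(\xvec, y^*) \in \conv(\epi(f))$, that is, $y^* \in U_f(\xvec)$.

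\textbf{Step 4: conclude.} By the definition of infimum, for each $i$ we can pick $y^{(i)} \in U_f(\xvec)$ with $y^{(i)} < \fhat(\xvec) + 2^{-i}$. This sequence converges to $\fhat(\xvec)$, so Step 3 gives $\fhat(\xvec) \in U_f(\xvec)$. That element is the minimum.

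The substantive ingredient is the closedness of $\conv(\epi(f))$. The convex hull of a closed but unbounded set need not be closed, which is exactly why \cref{thm:mlc-conv-epi-closed} uses the splitting into $H$ and $T$. Here I take that lemma as given, so the rest is routine.

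One technical point needs checking: \cref{defn:convex-env} and \cref{defn:convex-env-epi} define $U_f$ differently. The first uses $y \ge \sum_i \lambda_i f(\xvec^{(i)})$; the second uses $(\xvec, y) \in \conv(\epi(f))$. The later uses of $U_f$ (for example in \cref{thm:convex-env-witness}) rely on the first form, so I would verify that the two sets coincide. Given a convex combination of points $(\xvec^{(i)}, y^{(i)})$ in $\epi(f)$, we have $y = \sum_i \lambda_i y^{(i)} \ge \sum_i \lambda_i f(\xvec^{(i)})$, so the first form follows. Conversely, if $y \ge \sum_i \lambda_i f(\xvec^{(i)})$, raise each height to $y^{(i)} \defeq f(\xvec^{(i)}) + \bigl(y - \sum_l \lambda_l f(\xvec^{(l)})\bigr)$. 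Each $(\xvec^{(i)}, y^{(i)})$ lies in $\epi(f)$, and their $\lambda$-combination is exactly $(\xvec, y)$, so the second form follows.
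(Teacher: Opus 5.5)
Your proof is correct and follows the paper's route: the paper also writes $U_f(\xvec)$ as the slice $\conv(\epi(f)) \cap (\{\xvec\} \times \mathbb{R})$ and gets closedness from \cref{thm:mlc-conv-epi-closed}. Your Steps 1--2 (nonemptiness and the lower bound $0$) and your check that \cref{defn:convex-env} and \cref{defn:convex-env-epi} define the same $U_f$ fill in details that the paper's one-line argument leaves implicit.
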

\begin{proof}
Let $\xhat \in [0, 1]^m$.
$U_f(\xhat) = \conv(\epi(f)) \cap (\{\xhat\} \times \mathbb{R})$.
$\{\xhat\} \times \mathbb{R}$ is closed,
and by \cref{thm:mlc-conv-epi-closed}, $\conv(\epi(f))$ is closed.
Since the intersection of closed sets is closed,
we get that $U_f(\xhat)$ is closed.
Thus, $\inf(U_f(\xhat)) \in U_f(\xhat)$.
\end{proof}

\begin{lemma}
\label{thm:convex-ucont}
If $f: [0, 1]^m \to \mathbb{R}$ is monotone and convex,
then $f^+(\xhat) = f(\xhat)$ for all $\xhat \in [0, 1]^m$.
\end{lemma}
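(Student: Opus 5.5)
The approach is to use the characterization of the upper limit of a monotone function from \cref{thm:monotone-continuous}, namely $f^+(\xhat) = \inf_{\delta > 0} f(\xhat^{[+\delta]})$. Then we squeeze $f(\xhat^{[+\delta]})$ between $f(\xhat)$ and a convex combination of $f(\xhat)$ and $f(\vecOne)$ whose weight on $f(\vecOne)$ vanishes as $\delta \to 0$.

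The lower bound is immediate. Since $\xhat \preceq \xhat^{[+\delta]}$, monotonicity gives $f(\xhat^{[+\delta]}) \ge f(\xhat)$ for every $\delta > 0$, so $f^+(\xhat) \ge f(\xhat)$. This also follows from the general remark $f^- \le f \le f^+$ in \cref{defn:continuity}.

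For the upper bound, first dispose of the trivial case $\xhat = \vecOne$. There $\xhat^{[+\delta]} = \xhat$ for all $\delta$, so equality holds. Otherwise, let $c \defeq \min\{1 - \xhat_j : \xhat_j < 1\} > 0$, and for $\delta > 0$ set $t \defeq \min(1, \delta/c)$. The key step is to check coordinatewise that
\[ \xhat^{[+\delta]} \preceq (1-t)\xhat + t\vecOne. \]
For coordinates with $\xhat_j = 1$ both sides equal $1$. For the other coordinates we need $t(1-\xhat_j) \ge \min(1-\xhat_j, \delta)$. If $t = 1$ this is clear. Otherwise $t = \delta/c$, and then $t(1-\xhat_j) = \delta(1-\xhat_j)/c \ge \delta$, since $1-\xhat_j \ge c$.

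Monotonicity followed by convexity then gives
\[ f(\xhat^{[+\delta]}) \le f\bigl((1-t)\xhat + t\vecOne\bigr) \le (1-t) f(\xhat) + t f(\vecOne). \]
Since $f$ is real-valued, $f(\vecOne)$ is finite. As $\delta \to 0$ we have $t \to 0$, so the right-hand side tends to $f(\xhat)$. Taking the infimum over $\delta$ yields $f^+(\xhat) \le f(\xhat)$, which combined with the lower bound proves the claim.

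The only delicate point is the choice of $t$: it must make a single segment toward $\vecOne$ dominate the truncated shift $\xhat^{[+\delta]}$ uniformly in all coordinates, including those already at or near $1$. The constant $c$ above handles this.
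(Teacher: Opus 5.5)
Your proof is correct and takes essentially the same route as the paper: both bound $f(\xhat^{[+\delta]})$ using convexity along a segment from $\xhat$ to a fixed point, where the weight on that point vanishes as $\delta \to 0$. The only difference is cosmetic: the paper uses the endpoint $\xhat + \mu d$ ($d$ the indicator of coordinates below $1$), so $\xhat^{[+\delta]}$ lies exactly on the segment for $\delta < \mu$, whereas you use the endpoint $\vecOne$ and an extra monotonicity step to dominate $\xhat^{[+\delta]}$ by a point on that segment.
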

\begin{proof}
Pick any $\xhat \in [0, 1]^m$.
If $\xhat = \vecOne$, then $\xhat^{[+\delta]} = \xhat$ for all $\delta > 0$,
so $f^+(\xhat) = f(\xhat)$ and we are done. So now assume $\xhat \neq \vecOne$.

Let $J \defeq \{j \in [m]: \xhat_j < 1\}$. Then $J \neq \emptyset$.
Let $\mu \defeq \min_{j \in J} (1-\xhat_j)$. Then $\mu > 0$.
For all $j \in [m]$, let $d_j$ be 1 if $j \in J$ and 0 otherwise.
Then $\xhat + \mu d \in [0, 1]^m$.

Let $0 < \delta < \mu$. Then $\xhat^{[+\delta]} = \xhat + \delta d$.
Moreover, $\xhat + \delta d$ is a convex combination of $\xhat$ and $\xhat + \mu d$.
Specifically, we have
\[ \xhat + \delta d = \left(1 - \frac{\delta}{\mu}\right)\xhat + \left(\frac{\delta}{\mu}\right)(\xhat + \mu d). \]
Since $f$ is convex, we get
\begin{align*}
f^+(\xhat) &\le f(\xhat^{[+\delta]}) = f(\xhat + \delta d)
\\ &\le \left(1 - \frac{\delta}{\mu}\right)f(\xhat) + \left(\frac{\delta}{\mu}\right)f(\xhat + \mu d)
\\ &= f(\xhat) + \frac{\delta}{\mu}(f(\xhat + \mu d) - f(\xhat)).
\end{align*}
Since this holds for all $\delta > 0$, we get $f^+(\xhat) \le f(\xhat)$.
Hence, $f^+(\xhat) = f(\xhat)$.
\end{proof}

\begin{lemma}
\label{thm:convex-env-lcont}
Let $f: [0, 1]^m \to \mathbb{R}_{\ge 0}$ be an MLC function. Let $\fhat \defeq \conv(f)$.
Then $\fhat^-(\xhat) = \fhat(\xhat)$ for all $\xhat \in [0, 1]^m$.
\end{lemma}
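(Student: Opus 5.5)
Since $\fhat^-(\xhat) \le \fhat(\xhat)$ always holds, the plan is to prove the reverse inequality $\fhat(\xhat) \le \fhat^-(\xhat)$ using closedness of $\conv(\epi(f))$ (\cref{thm:mlc-conv-epi-closed}). We may assume $\fhat$ is monotone by \cref{thm:convex-env-props-basic}. Then \cref{thm:monotone-continuous} gives $\fhat^-(\xhat) = \sup_{\delta>0} \fhat(\xhat^{[-\delta]})$, and this supremum is a monotone limit. So take $\delta_i \defeq 2^{-i}$, put $\xvec^{(i)} \defeq \xhat^{[-\delta_i]}$, and let $y_i \defeq \fhat(\xvec^{(i)})$. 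The sequence $(y_i)$ is non-decreasing, bounded above by $\fhat(\xhat) \le f(\xhat)$, and converges to $\fhat^-(\xhat)$. Also, $\xvec^{(i)}$ converges to $\xhat$.

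Next, by \cref{thm:Ufx-has-min}, each $y_i \in U_f(\xvec^{(i)})$, so $(\xvec^{(i)}, y_i) \in \conv(\epi(f))$. This holds because $U_f(\xvec) = \{y : (\xvec,y) \in \conv(\epi(f))\}$ under \cref{defn:convex-env-epi}. By \cref{thm:convergence-decomp}, the sequence $((\xvec^{(i)}, y_i))_{i \in \mathbb{N}}$ converges to $(\xhat, \fhat^-(\xhat))$. Since $\conv(\epi(f))$ is closed by \cref{thm:mlc-conv-epi-closed}, the limit $(\xhat, \fhat^-(\xhat))$ lies in $\conv(\epi(f))$. Hence $\fhat^-(\xhat) \in U_f(\xhat)$, and so $\fhat(\xhat) = \inf U_f(\xhat) \le \fhat^-(\xhat)$. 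Combined with the trivial inequality, this gives equality.

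The main obstacle is that everything rests on closedness of $\conv(\epi(f))$. That closedness was already handled in \cref{thm:mlc-conv-epi-closed} by splitting off the unbounded part $T$, so here it only needs to be invoked. The only other delicate points are bookkeeping. First, we need the two definitions of $U_f$ (\cref{defn:convex-env} and \cref{defn:convex-env-epi}) to agree. This is immediate, since a point lies in the convex hull of the epigraph iff it is a convex combination $\sum_i \lambda_i(\xvec^{(i)}, y^{(i)})$ with $y^{(i)} \ge f(\xvec^{(i)})$. Second, we need $\fhat^-(\xhat)$ to be finite. This holds because $\fhat \ge 0$ and $\fhat \le f(\vecOne)$, so the limit is a real number.
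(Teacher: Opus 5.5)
Your proof is correct and follows the paper's argument essentially step for step. You approximate $\xhat$ from below by $\xhat^{[-2^{-i}]}$, use \cref{thm:Ufx-has-min} to place each $(\xvec^{(i)}, \fhat(\xvec^{(i)}))$ in $\conv(\epi(f))$, and pass to the limit using closedness of that set. Your write-up is in fact slightly more careful than the paper's. You work with $\fhat^-$ where the paper writes $f^-$. You justify $\sup_i y_i = \fhat^-(\xhat)$ via monotonicity of $\fhat$. And you cite closedness of $\conv(\epi(f))$ (\cref{thm:mlc-conv-epi-closed}), whereas the paper cites closedness of $\epi(f)$ alone, which is not the fact the limit step needs.
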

\begin{proof}
Let $C \defeq \conv(\epi(f))$.
For all $i \in \mathbb{N}$, define $x^{(i)} \defeq \xhat^{[-2^{-i}]}$ and $y^{(i)} \defeq \fhat(x^{(i)})$.
Let $X \defeq (x^{(i)})_{i \in \mathbb{N}}$ and $Y \defeq (y^{(i)})_{i \in \mathbb{N}}$.
Then $X$ converges to $\xhat$ and $\sup(Y) = f^-(\xhat)$.
Let $Z \defeq ((x^{(i)}, y^{(i)}))_{i \in \mathbb{N}}$.
By \cref{thm:convergence-decomp}, $Z$ converges to $(\xhat, f^-(\xhat))$.

By \cref{thm:Ufx-has-min}, $(x^{(i)}, \fhat(x^{(i)})) \in C$.
Hence, each element of $Z$ lies in $C$.
By \cref{thm:mlc-epi-closed}, $C$ is closed.
Hence, $(\xhat, f^-(\xhat))$ also lies in $C$.
Thus, $\fhat(\xhat) \le f^-(\xhat)$, and so $\fhat(\xhat) = f^-(\xhat)$.
\end{proof}

\thmConvexEnvProps*
\label{prf:convex-env-props}
\begin{proof}
The first three claims follow from \cref{thm:convex-env-props-basic}.
$\fhat^-(x) = \fhat(x)$ for all $x \in [0, 1]^m$ by \cref{thm:convex-env-lcont}.
By \cref{thm:convex-env-props-basic}, $\fhat$ is monotone and convex,
so $\fhat^+(x) = \fhat(x)$ for all $x \in [0, 1]^m$ by \cref{thm:convex-ucont}.
Hence, $\fhat$ is continuous.
\end{proof}

\subsection{Left Derivative of Convex Functions}
\label{sec:real-analysis:ld}

\begin{definition}[Left derivative]
\label{defn:convex-ld}
Let $f: [0, 1] \to \mathbb{R}$ be a convex function.
The \emph{left derivative} of $f$ at $\xhat \in (0, 1]$ is defined as
\[ f'(\xhat) \defeq \sup_{\delta \in (0, \xhat]} \frac{f(\xhat) - f(\xhat - \delta)}{\delta}. \]
Moreover, we define $f'(0)$ to be $-\infty$.
\end{definition}

For convex functions, one can show using \cref{thm:convex-rates}
that \cref{defn:convex-ld} is equivalent to the more well-known definition
\[ f'(\xhat) \defeq \lim_{\delta \to 0^-} \frac{f(\xhat) - f(\xhat - \delta)}{\delta}. \]

\begin{lemma}
\label{thm:convex-rates}
Let $f: [0, 1] \to \mathbb{R}$ be a convex function. Let $0 \le a < \xhat < b \le 1$. Then
\[ \frac{f(\xhat) - f(a)}{\xhat - a} \le \frac{f(b) - f(a)}{b - a} \le \frac{f(b) - f(\xhat)}{b - \xhat}. \]
\end{lemma}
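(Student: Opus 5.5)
The plan is to derive both inequalities directly from convexity along the segment, expressing $\xhat$ as a convex combination of $a$ and $b$. Write $\xhat = (1-t)a + tb$ with $t \defeq (\xhat - a)/(b-a) \in (0,1)$, so that $1 - t = (b - \xhat)/(b-a)$. Convexity of $f$ gives
\[ f(\xhat) \le (1-t)f(a) + t f(b). \]
Everything else is rearrangement of this single inequality.

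For the left inequality, subtract $f(a)$ from both sides to get $f(\xhat) - f(a) \le t\,(f(b) - f(a))$. Dividing by $\xhat - a > 0$ and substituting $t/(\xhat - a) = 1/(b-a)$ yields
\[ \frac{f(\xhat) - f(a)}{\xhat - a} \le \frac{f(b) - f(a)}{b-a}. \]

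For the right inequality, rewrite the convexity inequality as $f(\xhat) - f(b) \le (1-t)(f(a) - f(b))$. Equivalently,
\[ f(b) - f(\xhat) \ge (1-t)(f(b) - f(a)). \]
Dividing by $b - \xhat > 0$ and using $(1-t)/(b-\xhat) = 1/(b-a)$ gives
\[ \frac{f(b) - f(a)}{b-a} \le \frac{f(b) - f(\xhat)}{b - \xhat}. \]

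The only step needing care is checking the algebraic identities $t/(\xhat-a) = 1/(b-a)$ and $(1-t)/(b-\xhat) = 1/(b-a)$, and confirming that every divisor is strictly positive so the inequality directions are preserved. Since $a < \xhat < b$, both $\xhat - a$ and $b - \xhat$ are positive, so no real obstacle is expected. Note that $f$ is real-valued here, so no extended-real arithmetic arises.
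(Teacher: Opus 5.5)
Your proposal is correct and matches the paper's proof: both write $\xhat$ as the convex combination $\frac{b-\xhat}{b-a}\,a + \frac{\xhat-a}{b-a}\,b$, apply convexity once, and rearrange the resulting inequality into the two bounds. Your explicit checks that the divisors $\xhat - a$ and $b - \xhat$ are positive and that the weights simplify to $1/(b-a)$ are exactly what that rearrangement needs.
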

\begin{proof}
$\xhat$ is a convex combination of $a$ and $b$, since
\[ \xhat = \left(\frac{b-\xhat}{b-a}\right)a + \left(\frac{\xhat-a}{b-a}\right)b. \]
Since $f$ is convex, we get
\[ f(\xhat) \le \left(\frac{b-\xhat}{b-a}\right)f(a) + \left(\frac{\xhat-a}{b-a}\right)f(b). \]
Hence,
\begin{align*}
f(\xhat) - f(a) &\le \left(\frac{\xhat-a}{b-a}\right)(f(b)-f(a)),
& f(b) - f(\xhat) &\ge \left(\frac{b-\xhat}{b-a}\right)(f(b)-f(a)).
\qedhere
\end{align*}
\end{proof}

\begin{lemma}
\label{thm:convex-ld-rates}
Let $f: [0, 1] \to \mathbb{R}$ be a convex function. Let $0 \le a < b \le 1$. Then
\[ f'(a) \le \frac{f(b) - f(a)}{b-a} \le f'(b). \]
\end{lemma}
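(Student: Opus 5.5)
The plan is to derive both inequalities directly from the three-slope inequality of \cref{thm:convex-rates} and the supremum definition of the left derivative (\cref{defn:convex-ld}). We treat the two inequalities separately, and we also dispose of the degenerate endpoint $a = 0$, where $f'(0) = -\infty$.

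For the right inequality, $\frac{f(b)-f(a)}{b-a} \le f'(b)$, we use that $b > a \ge 0$, so $b \in (0,1]$ and $\delta \defeq b - a \in (0, b]$ is an admissible choice in the supremum defining $f'(b)$. The supremum therefore dominates the term $\frac{f(b) - f(b-\delta)}{\delta} = \frac{f(b)-f(a)}{b-a}$. This step needs no convexity beyond the definition.

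For the left inequality, $f'(a) \le \frac{f(b)-f(a)}{b-a}$, we distinguish two cases.
\begin{enumerate}
\item If $a = 0$, then $f'(a) = -\infty$ and the inequality is trivial.
\item If $a \in (0,1)$, we must show that every difference quotient $\frac{f(a) - f(a-\delta)}{\delta}$ with $\delta \in (0,a]$ is at most $\frac{f(b)-f(a)}{b-a}$; taking the supremum over $\delta$ then gives the claim.
\end{enumerate}
In the second case, fix $\delta$ and set $c \defeq a - \delta$, so that $0 \le c < a < b \le 1$. Applying \cref{thm:convex-rates} to the triple $(c, a, b)$ gives
\[ \frac{f(a)-f(c)}{a-c} \le \frac{f(b)-f(c)}{b-c} \le \frac{f(b)-f(a)}{b-a}. \]
Chaining the outer terms yields exactly the needed bound, since $a - c = \delta$.

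The only mild obstacle is bookkeeping: making sure the middle point in \cref{thm:convex-rates} is $a$ (not $b$) so that the outer inequality compares the left secant at $a$ with the right secant at $a$, and handling $a=0$ via the convention $f'(0)=-\infty$. No limits are needed because the paper defines the left derivative as a supremum.
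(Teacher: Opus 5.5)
Your proof is correct and matches the paper's: the left inequality is obtained exactly as in the paper, by applying \cref{thm:convex-rates} to the triple $(a-\delta, a, b)$ and taking the supremum over $\delta \in (0,a]$, with the case $a=0$ trivial because $f'(0)=-\infty$. For the right inequality the paper applies \cref{thm:convex-rates} for every $\delta \in (0, b-a]$, whereas you simply note that $\delta = b-a$ is an admissible term in the supremum defining $f'(b)$; this is a slightly more direct route to the same conclusion.
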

\begin{proof}
By \cref{thm:convex-rates}, for all $\delta \in (0, a]$, we have
\[ \frac{f(a)-f(a-\delta)}{\delta} \le \frac{f(b)-f(a)}{b-a}. \]
Hence, $f'(a) \le (f(b)-f(a))/(b-a)$.

By \cref{thm:convex-rates}, for all $\delta \in (0, b-a]$, we have
\[ \frac{f(b)-f(a)}{b-a} \le \frac{f(b)-f(b-\delta)}{\delta}. \]
Hence, $(f(b)-f(a))/(b-a) \le f'(b)$.
\end{proof}

\begin{lemma}
\label{thm:convex-ld-mlc}
Let $f: [0, 1] \to \mathbb{R}$ be a lower-continuous convex function. Then $f'$ is MLC.
Moreover, if $f$ is monotone, then $f'(x) \ge 0$ for all $x \in (0, 1]$.
\end{lemma}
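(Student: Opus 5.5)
We need to prove \cref{thm:convex-ld-mlc}: for a lower-continuous convex $f:[0,1]\to\mathbb{R}$, the left derivative $f'$ is monotone and lower-continuous, and $f' \ge 0$ on $(0,1]$ when $f$ is monotone. We treat monotonicity, lower continuity, and nonnegativity in that order.

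\textbf{Monotonicity.} With the convention $f'(0) = -\infty$, we have $f'(0) \le f'(x)$ for every $x$. For $0 < a < b \le 1$, we apply \cref{thm:convex-ld-rates} twice:
\[ f'(a) \le \frac{f(b)-f(a)}{b-a} \le f'(b). \]
So $f'$ is monotone, with values in $\mathbb{R}\cup\{-\infty,\infty\}$ as \cref{defn:monotone} permits.

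\textbf{Lower continuity.} Since $f'$ is monotone, \cref{thm:monotone-continuous} gives $(f')^-(\xhat) = \sup_{\delta>0} f'(\xhat^{[-\delta]})$. At $\xhat = 0$ this equals $f'(0) = -\infty$, so there is nothing to show. Fix $\xhat \in (0,1]$. Monotonicity already gives $(f')^-(\xhat) \le f'(\xhat)$, so we must show $f'(\xhat) \le \sup_{y<\xhat} f'(y)$.

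By \cref{defn:convex-ld}, $f'(\xhat)$ is the supremum over $\delta \in (0,\xhat]$ of the slopes $s(\delta) \defeq (f(\xhat)-f(\xhat-\delta))/\delta$. Fix such a $\delta$ and set $a \defeq \xhat-\delta$. We will bound $s(\delta)$ by $\sup_{y<\xhat} f'(y)$ plus an error that vanishes.

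For $y \in (a,\xhat)$, \cref{thm:convex-ld-rates} gives $f'(y) \ge (f(y)-f(a))/(y-a)$. Now let $y \to \xhat^-$. We need $\lim_{y\to\xhat^-} f(y) = f(\xhat)$, i.e., left-continuity of $f$ at $\xhat$; this is the crux.

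To get it, note first that convexity gives upper semicontinuity along the segment. Writing $y$ as a convex combination of $a$ and $\xhat$,
\[ f(y) \le \tfrac{\xhat-y}{\xhat-a} f(a) + \tfrac{y-a}{\xhat-a} f(\xhat), \]
so $\limsup_{y\to\xhat^-} f(y) \le f(\xhat)$. Next, lower continuity of $f$ gives $f(\xhat) = f^-(\xhat) \le \liminf_{y\to\xhat} f(y)$, directly from \cref{defn:continuity}, since $\inf_{N_\delta} f \le f(y)$ for all $y \in N_\delta(\xhat)$. Together these yield $f(y) \to f(\xhat)$ as $y \to \xhat^-$.

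Hence
\[ \sup_{y<\xhat} f'(y) \ge \lim_{y\to\xhat^-} \frac{f(y)-f(a)}{y-a} = s(\delta). \]
Taking the supremum over $\delta$ gives $(f')^-(\xhat) \ge f'(\xhat)$, which completes lower continuity. The main obstacle is exactly this step, since we have only lower continuity of $f$, not continuity. The convexity upper bound combined with the lower-continuity lower bound is what resolves it.

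\textbf{Nonnegativity.} If $f$ is monotone, then for $x \in (0,1]$ every slope $(f(x)-f(x-\delta))/\delta$ is nonnegative. So their supremum $f'(x)$ is nonnegative.
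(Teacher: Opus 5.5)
Your proof is correct and is essentially the paper's argument. Both bound the chord slopes $(f(y)-f(\hat x-\delta))/(y-\hat x+\delta)$ by $f'(y)$ via \cref{thm:convex-ld-rates}, then use lower continuity of $f$ at $\hat x$ to let $y\to\hat x^-$; the paper phrases this by contradiction, assuming $(f')^-(\hat x)$ is finite, while you argue directly. Only the $\liminf$ half of your left-continuity step is actually used, so the convexity $\limsup$ bound is harmless but unnecessary.
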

\begin{proof}
Let $g = f'$. $g$ is monotone by \cref{thm:convex-ld-rates}.
We will now show that $g^-(\xhat) = g(\xhat)$ for all $\xhat \in [0, 1]$.
This is trivially true for $\xhat = 0$, so now assume $\xhat > 0$.

Suppose $g(\xhat) \neq g^-(\xhat)$. Then $g^-(\xhat)$ is finite.
Pick any $\delta \in (0, \xhat)$.
By \cref{thm:convex-ld-rates}, for any $x \in (\xhat-\delta, \xhat)$, we get
\[ \frac{f(x) - f(\xhat-\delta)}{x - \xhat + \delta} \le f'(x) = g(x) \le g^-(\xhat). \]
Hence, $f(x) \le f(\xhat-\delta) + g^-(\xhat)(x - \xhat + \delta) \le f(\xhat-\delta) + \delta g^-(\xhat)$.
Since $f$ is lower-continuous, we get
\[ f(\xhat) = f^-(\xhat) \le \sup_{\delta' \in (0, \delta)} \inf_{\xhat-\delta' \le x \le \xhat} f(x)
    \le f(\xhat-\delta) + \delta g^-(\xhat). \]
Hence,
\[ \frac{f(\xhat) - f(\xhat - \delta)}{\delta} \le g^-(\xhat). \]
Since this holds for arbitrarily small $\delta$, we get
$g(\xhat) = f'(\xhat) \le g^-(\xhat)$.
Hence, $g$ is left-continuous at $\xhat$.

If $f$ is monotone, then
\[ f'(\xhat) = \sup_{\delta \in (0, \xhat]} \frac{f(\xhat) - f(\xhat - \delta)}{\delta} \ge 0.
\qedhere \]
\end{proof}

\begin{lemma}
\label{thm:convex-ld-sum}
Let $f, g: [0, 1] \to \mathbb{R}$ be convex functions.
Then $f+g$ is also convex and $(f+g)' = f'+g'$.
\end{lemma}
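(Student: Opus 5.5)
The plan is to work directly from \cref{defn:convex-ld}, which writes the left derivative as a supremum of difference quotients. Convexity of $f+g$ is immediate. For any $\lambda \in [0,1]$ and $x,y \in [0,1]$, we add the two convexity inequalities for $f$ and $g$ at the point $(1-\lambda)x+\lambda y$.

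For the derivative identity, fix $\xhat \in (0,1]$. (At $\xhat=0$ both sides are $-\infty$ by convention, so there is nothing to check, using $-\infty + (-\infty) = -\infty$.) For $\delta \in (0,\xhat]$, define
\[ q_f(\delta) \defeq \frac{f(\xhat)-f(\xhat-\delta)}{\delta}, \qquad q_g(\delta) \defeq \frac{g(\xhat)-g(\xhat-\delta)}{\delta}. \]
Then the difference quotient of $f+g$ is $q_f(\delta)+q_g(\delta)$.

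The key step is that both $q_f$ and $q_g$ are non-increasing in $\delta$. To see this, take $0<\delta_1<\delta_2\le \xhat$ and apply \cref{thm:convex-rates} with $a = \xhat-\delta_2$, middle point $\xhat-\delta_1$, and right endpoint $\xhat$. Its outer inequality gives $(f(\xhat)-f(a))/\delta_2 \le (f(\xhat)-f(\xhat-\delta_1))/\delta_1$. So each supremum equals the limit as $\delta \to 0^+$.

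This is where I expect the only real subtlety. In general, a supremum of a sum is not the sum of the suprema; \cref{thm:sup-sum} gives only $\le$. Monotonicity repairs this. Given $\eps>0$, pick $\delta_f$ and $\delta_g$ whose quotients are within $\eps$ of their suprema. Setting $\delta \defeq \min(\delta_f,\delta_g)$, both quotients only get larger, so $q_f(\delta)+q_g(\delta) \ge f'(\xhat)+g'(\xhat)-2\eps$. If either supremum is $+\infty$, we instead pick a $\delta$ where that quotient exceeds any bound $M$, and the other quotient stays bounded below by its value at $\delta=\xhat$. Combining this with \cref{thm:sup-sum} yields $(f+g)'(\xhat) = f'(\xhat)+g'(\xhat)$.
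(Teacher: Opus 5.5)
Your proof is correct and follows the same route as the paper's. Convexity comes from adding the two inequalities, the case $\xhat=0$ is handled by the $-\infty$ convention, $(f+g)'(\xhat)\le f'(\xhat)+g'(\xhat)$ comes from \cref{thm:sup-sum}, and the reverse bound comes from a single $\delta$ that is good for both quotients, with a separate argument when one left derivative is $+\infty$.

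What you add is the explicit monotonicity of the difference quotients via \cref{thm:convex-rates}; the inequality you use there is the right-hand one of that chain, not the outer one. This monotonicity is exactly what justifies the common $\delta$ and the lower bound on the other quotient, both of which the paper's proof uses without comment.
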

\begin{proof}
Let $\alpha, x, y \in [0, 1]$. Then
\begin{align*}
& (f+g)((1-\alpha)x + \alpha y)
\\ &= f((1-\alpha)x + \alpha y) + g((1-\alpha)x + \alpha y)
\\ &\le ((1-\alpha)f(x) + \alpha f(y)) + ((1-\alpha)g(x) + \alpha g(y))
\\ &= (1-\alpha)(f+g)(x) + \alpha(f+g)(y).
\end{align*}
Hence, $f+g$ is convex.

$(f+g)'(0) = -\infty$ and $f'(0) + g'(0) = (-\infty) + (-\infty) = -\infty$.
For any $x \in (0, 1]$ and $\delta \in (0, x]$, we have
\[ \frac{(f+g)(x) - (f+g)(x-\delta)}{\delta}
    = \frac{f(x) - f(x-\delta)}{\delta} + \frac{g(x) - g(x-\delta)}{\delta}. \]
By \cref{thm:sup-sum}, we get $(f+g)'(x) \le f'(x) + g'(x)$.

If $f'(x) = \infty$ or $g'(x) = \infty$, then at least one of
\[ \frac{f(x) - f(x-\delta)}{\delta} \quad\text{or}\quad \frac{g(x) - g(x-\delta)}{\delta} \]
can be made arbitrarily large for some $\delta$, and so, we get that $(f+g)'(x) = \infty$.
Now assume $f'(x)$ and $g'(x)$ are finite.

Pick any $\eps > 0$. Then $\exists \delta > 0$ such that
\[ \frac{f(x) - f(x-\delta)}{\delta} > f'(x) - \eps
    \quad\text{and}\quad \frac{g(x) - g(x-\delta)}{\delta} > g'(x) - \eps. \]
Then
\[ (f+g)'(x) \ge \frac{(f+g)'(x) - (f+g)'(x-\delta)}{\delta}
    > f'(x) + g'(x) - 2\eps. \]
Hence, $(f+g)'(x) \ge f'(x) + g'(x)$.
Thus, we get $(f+g)' = f' + g'$.
\end{proof}

\begin{lemma}
\label{thm:convex-ld-min}
Let $f: [0, 1] \to \mathbb{R}$ be a lower-continuous convex function.
Let $x^* \defeq \sup(\{x \in [0, 1]: f'(x) \le 0\})$.
Then $x^*$ is the rightmost global minimum of $f$, i.e.,
$f(x^*) \le f(x)$ for all $x \in [0, x^*]$
and $f(x^*) < f(x)$ for all $x \in (x^*, 1]$.
\end{lemma}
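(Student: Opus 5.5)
Write $S \defeq \{x \in [0, 1]: f'(x) \le 0\}$. Since $f'(0) = -\infty$, the point $0$ lies in $S$, so $S$ is non-empty and $x^* \in [0, 1]$ is well defined. The plan is to prove that $x^* \in S$ and then to use the slope inequalities of \cref{thm:convex-ld-rates} on each side of $x^*$.

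The first step is to show that the supremum is attained. By \cref{thm:convex-ld-mlc}, $f'$ is MLC; its value $-\infty$ at $0$ is allowed, because \cref{defn:monotone} permits extended-real codomains. Hence \cref{thm:mlc-feas-compact} with $b = 0$ shows that $S$ is closed. Every sequence in $S$ converging to $x^*$ therefore has its limit in $S$, so $x^* \in S$, i.e., $f'(x^*) \le 0$. Monotonicity of $f'$ then gives $f'(x) \le 0$ for all $x \le x^*$. The definition of the supremum gives $f'(x) > 0$ for all $x > x^*$.

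The second step handles points on the left. Take $x \in [0, x^*)$. By \cref{thm:convex-ld-rates} with $a = x$ and $b = x^*$,
\[ \frac{f(x^*) - f(x)}{x^* - x} \le f'(x^*) \le 0, \]
so $f(x^*) \le f(x)$. The case $x = x^*$ is trivial.

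The third step handles points on the right and is the delicate one, since it needs strictness. Take $x \in (x^*, 1]$ and choose a midpoint $y \in (x^*, x)$. By \cref{thm:convex-ld-rates} with $a = x^*$ and $b = y$, and then with $a = y$ and $b = x$,
\[ \frac{f(y) - f(x^*)}{y - x^*} \ge f'(x^*) \quad\text{and}\quad \frac{f(x) - f(y)}{x - y} \ge f'(y) > 0. \]
This gives $f(x) > f(y)$. The first inequality only yields $f(y) - f(x^*) \ge (y - x^*) f'(x^*)$, which could be negative, so it is not enough. Instead, observe that $f'(y) > 0$. Applying \cref{thm:convex-ld-rates} with $a = x^*$ and $b = y$ gives the upper bound $(f(y) - f(x^*))/(y - x^*) \le f'(y)$, which points the wrong way, so I will use the definition of the left derivative at $y$ instead.

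By \cref{defn:convex-ld}, there is some $\delta \in (0, y]$ with $f(y) - f(y - \delta) > 0$. Because the difference quotient is monotone in the left endpoint by \cref{thm:convex-rates}, this can be pushed to $\delta = y - x^*$. The result is the chain
\[ \frac{f(y) - f(x^*)}{y - x^*} \ge f'\bigl(\text{some point in } (x^*, y]\bigr) > 0. \]
A cleaner way to obtain this is to apply \cref{thm:convex-ld-rates} with $a = z$ for some $z \in (x^*, y)$, which gives $f(y) - f(z) \ge f'(z)(y - z) > 0$. Combining this with the second step applied on $[x^*, z]$ requires $f(z) \ge f(x^*)$, and that is the obstacle.

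I would resolve the obstacle by contradiction. Suppose $f(x) \le f(x^*)$ for some $x > x^*$. Convexity on $[x^*, x]$ then gives $f(z) \le f(x^*)$ for every $z$ in between, so the slope $(f(x) - f(z))/(x - z)$ is at most $(f(x^*) - f(z))/(x - x^*)$, evaluated along the chord. More simply, by \cref{thm:convex-ld-rates}, $f'(x) \ge (f(x) - f(x^*))/(x - x^*)$. Meanwhile, for $z \in (x^*, x)$, \cref{thm:convex-rates} gives
\[ \frac{f(z) - f(x^*)}{z - x^*} \le \frac{f(x) - f(x^*)}{x - x^*} \le 0. \]
Then the left-derivative quotients at $z$ satisfy
\[ f'(z) \le \frac{f(x) - f(z)}{x - z} \le \frac{f(x) - f(x^*)}{x - x^*} \le 0, \]
where the first inequality is \cref{thm:convex-ld-rates} and the second is \cref{thm:convex-rates}. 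So $z \in S$ with $z > x^*$, contradicting the definition of $x^*$ as the supremum of $S$. This contradiction establishes strictness, $f(x^*) < f(x)$.
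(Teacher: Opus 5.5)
Your first two steps are correct and match the paper. So is the opening of your third step: for $x^* < y < x \le 1$, \cref{thm:convex-ld-rates} gives $f(x) - f(y) \ge f'(y)(x-y) > 0$, so $f$ is strictly increasing on $(x^*, 1]$. The gap is in comparing $f$ on $(x^*,1]$ with $f(x^*)$ itself. Your closing contradiction rests on
\[ \frac{f(x)-f(z)}{x-z} \le \frac{f(x)-f(x^*)}{x-x^*} \qquad (x^* < z < x), \]
but \cref{thm:convex-rates} (with $a = x^*$, middle point $z$, $b = x$) gives the reverse inequality: chords ending at $x$ become steeper as their left endpoint moves right. Your earlier idea of pushing a positive difference quotient at $y$ out to $\delta = y - x^*$ fails for the same reason. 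This cannot be patched using only the facts your argument actually derives. Take $f(0) = 1$ and $f(x) = x$ for $x \in (0,1]$. This $f$ is convex and $f' \equiv 1$ on $(0,1]$, so $f'$ is MLC, $S = \{0\}$ is closed, and $x^* = 0$. Moreover $f$ is strictly increasing on $(0,1]$, yet $f(1/2) < f(x^*)$. The only hypothesis this example violates is lower continuity of $f$ at $x^*$, and your third step never uses it.

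That is the missing ingredient, and it is exactly where the paper uses it. Assume $x^* < 1$ (otherwise there is nothing to prove) and let $\alpha \defeq \inf_{z \in (x^*,1]} f(z)$. Because $f$ is increasing on $(x^*,1]$, for every $\delta > 0$ we have $\inf_{N_\delta(x^*)} f \le \inf_{z \in (x^*, \min(1, x^*+\delta)]} f(z) = \alpha$. Hence $f(x^*) = f^-(x^*) \le \alpha$. Strictness then follows from the monotonicity you already proved: for $x \in (x^*,1]$ pick $z \in (x^*,x)$, so $f(x) > f(z) \ge \alpha \ge f(x^*)$. For $0 < x^* < 1$ you could instead use convexity across $x^*$ with a point to its left. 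At $x^* = 0$, however, lower continuity is indispensable, as the example shows.
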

\begin{proof}
By \cref{thm:convex-ld-mlc}, $f'$ is also MLC,
and by \cref{thm:mlc-feas-compact}, $\{x \in [0, 1]: g(x) \le \beta\}$ is closed.
Thus, $f'(x^*) \le 0$.

Let $x < x^*$. Then by \cref{thm:convex-ld-rates}, we get
\[ \frac{f(x^*) - f(x)}{x^* - x} \le f'(x^*) \le 0, \]
so $f(x^*) \le f(x)$.

Let $x^* < x^{(1)} < x^{(2)} \le 1$.
By definition of $x^*$, we get that $f'(x^{(1)}) > 0$.
Moreover, using \cref{thm:convex-ld-rates}, we get
\[ 0 < f'(x^{(1)}) \le \frac{f(x^{(2)}) - f(x^{(1)})}{x^{(2)} - x^{(1)}} \implies f(x^{(1)}) < f(x^{(2)}). \]
Thus, $f$ is monotone in the interval $(x^*, 1]$.
Let $\alpha \defeq \inf_{x \in (x^*, 1]} f(x)$. Then
\begin{align*}
f^-(x^*) &= \sup_{\delta > 0} \inf_{x \in N_\delta(x^*)} f(x)
\\ &= \sup_{\delta > 0} \min\left(\inf_{x \in [\max(0, x^* - \delta], x^*]} f(x),
    \inf_{x \in (x^*, \min(1, x^* + \delta)]} f(x)\right)
    \tag{by \cref{thm:sup-inf-union}}
\\ &= \min(f(x^*), \alpha).
\end{align*}
If $\alpha < f(x^*)$, then we get that $f^-(x^*) = \alpha < f(x^*)$,
which contradicts the fact that $f$ is lower continuous.
Thus, $\alpha \ge f(x^*)$, and hence, $f(x) \ge f(x^*)$ for all $x \in (x^*, 1]$.

Suppose $f(\xhat) = f(x^*)$ for some $\xhat \in (x^*, 1]$.
Pick any $x \in (x^*, \xhat)$. Then $f(x) = f(x^*)$.
Moreover, using \cref{thm:convex-ld-rates}, we get
\[ 0 = \frac{f(x) - f(x^*)}{x - x^*} \le f'(x) \le \frac{f(\xhat) - f(x)}{\xhat - x} = 0. \]
Thus, $f'(x) = 0$, which contradicts the definition of $x^*$.
Hence, $f(\xhat) > f(x^*)$ for all $\xhat \in (x^*, 1]$.
\end{proof}

\subsection{Separable Functions}
\label{sec:real-analysis:sep}

\thmSepProps*
\label{prf:sep-props}
\begin{proof}
\textbf{Monotonicity}:
Let $\xvec^{(1)}, \xvec^{(2)} \in \mathbb{R}^m$ such that $\xvec^{(1)} \preceq \xvec^{(2)}$.
Let $\yvec^{(1)}, \yvec^{(2)} \in \mathbb{R}^n$ such that $\yvec^{(1)} \preceq \yvec^{(2)}$.
\begin{itemize}
\item If $f$ and $g$ are monotone, then
    $h(\xvec^{(1)}, \yvec^{(1)}) = f(\xvec^{(1)}) + g(\yvec^{(1)})
    \le f(\xvec^{(2)}) + g(\yvec^{(2)}) = h(\xvec^{(2)}, \yvec^{(2)})$,
    so $h$ is monotone.
\item Suppose $h$ is monotone. Then
    $f(\xvec^{(1)}) = h(\xvec^{(1)}, \vecZero) - g(\vecZero)
    \le h(\xvec^{(2)}, \vecZero) - g(\vecZero) = f(\xvec^{(2)})$.
    Hence, $f$ is monotone. Similarly, $g$ is monotone.
\end{itemize}

\textbf{Continuity}:
Let $\xvechat \in [0, 1]^m$ and $\yvechat \in [0, 1]^n$.
Recall the definition of continuity from \cref{defn:continuity}.
For any $\delta > 0$, we have $N_{\delta}((\xvechat, \yvechat)) = N_{\delta}(\xvechat) \times N_{\delta}(\yvechat)$.
So,
\begin{align*}
h^-(\xvechat, \yvechat) &= \sup_{\delta > 0} \inf_{(\xvec, \yvec) \in N_{\delta}((\xvechat, \yvechat))} h(\xvec, \yvec)
\\ &= \sup_{\delta > 0} \left(\inf_{\xvec \in N_{\delta}(\xvechat)} f(\xvec)
    + \inf_{\yvec \in N_{\delta}(\yvechat)} g(\yvec)\right)
\\ &\le f^-(\xvechat) + g^-(\yvechat)  \tag{by \cref{thm:sup-sum}}
\end{align*}
Let $\eps > 0$. Then for some $\delta_1 > 0$, we have
\[ \inf_{\xvec \in N_{\delta_1}(\xvechat)} f(\xvec) > f^-(\xvechat) - \eps, \]
and for some $\delta_2 > 0$, we have
\[ \inf_{\yvec \in N_{\delta_1}(\yvechat)} g(\yvec) > g^-(\yvechat) - \eps. \]
Let $\deltahat \defeq \min(\delta_1, \delta_2)$. Then
\begin{align*}
h^-(\xvechat, \yvechat) &= \sup_{\delta > 0} \left(\inf_{\xvec \in N_{\delta}(\xvechat)} f(\xvec)
    + \inf_{\yvec \in N_{\delta}(\yvechat)} g(\yvec)\right)
\\ &\ge \left(\inf_{\xvec \in N_{\deltahat}(\xvechat)} f(\xvec)
    + \inf_{\yvec \in N_{\deltahat}(\yvechat)} g(\yvec)\right)
\\ &\ge \left(\inf_{\xvec \in N_{\delta_1}(\xvechat)} f(\xvec)
    + \inf_{\yvec \in N_{\delta_2}(\yvechat)} g(\yvec)\right)
\\ &> f^-(\xvechat) + g^-(\yvechat) - 2\eps.
\end{align*}
Since this holds for all $\eps > 0$, we get $h^-(\xvechat, \yvechat) = f^-(\xvechat) + g^-(\yvechat)$.

\begin{itemize}
\item If $f$ and $g$ are lower continuous, then
    $h^-(\xvechat, \yvechat) = f^-(\xvechat) + g^-(\yvechat) = f(\xvechat) + g(\yvechat) = h(\xvechat, \yvechat)$,
    so $h$ is lower continuous too.
\item If $h$ is lower continuous, then
    \[ 0 = h(\xvechat, \yvechat) - h^-(\xvechat, \yvechat)
        = (f(\xvechat) - f^-(\xvechat)) + (g(\yvechat) - g^-(\yvechat)). \]
    Since $f(\xvechat) \ge f^-(\xvechat)$ and $g(\yvechat) \ge g^-(\yvechat)$,
    we get that $f(\xvechat) - f^-(\xvechat) = g(\yvechat) - g^-(\yvechat) = 0$,
    so $f$ and $g$ are also lower continuous.
\end{itemize}

We can similarly prove that $f$ and $g$ are upper continuous iff $h$ is upper continuous.

\textbf{Convexity}:
Let $\xvec^{(1)}, \xvec^{(2)} \in \mathbb{R}^m$ such that $\xvec^{(1)} \preceq \xvec^{(2)}$.
Let $\yvec^{(1)}, \yvec^{(2)} \in \mathbb{R}^n$ such that $\yvec^{(1)} \preceq \yvec^{(2)}$.
\begin{itemize}
\item If $f$ and $g$ are convex, then
    \begin{align*}
    & h((1-\alpha)(\xvec^{(1)}, \yvec^{(1)}) + \alpha(\xvec^{(2)}, \yvec^{(2)}))
    \\ &= f((1-\alpha)\xvec^{(1)} + \alpha\xvec^{(2)}) + g((1-\alpha)\yvec^{(1)} + \alpha\yvec^{(2)})
    \\ &\le ((1-\alpha)f(\xvec^{(1)}) + \alpha f(\xvec^{(2)})) + ((1-\alpha)g(\yvec^{(1)}) + \alpha g(\yvec^{(2)}))
    \\ &= (1-\alpha)h(\xvec^{(1)}, \yvec^{(1)}) + \alpha h(\xvec^{(2)}, \yvec^{(2)}),
    \end{align*}
    so $h$ is convex.
\item Suppose $h$ is convex. Then
    \begin{align*}
    & f((1-\alpha)\xvec^{(1)} + \alpha\xvec^{(2)})
    \\ &= h((1-\alpha)\xvec^{(1)} + \alpha\xvec^{(2)}, \vecZero) - g(\vecZero)
    \\ &\le (1-\alpha)h(\xvec^{(1)}, \vecZero) + \alpha h(\xvec^{(2)}, \vecZero) - g(\vecZero)
    \\ &= (1-\alpha)f(\xvec^{(1)}) + \alpha f(\xvec^{(2)}).
    \end{align*}
    Hence, $f$ is convex. Similarly, $g$ is convex.
\qedhere
\end{itemize}
\end{proof}

\thmSepOpt*
\label{prf:sep-opt}
\begin{proof}
Let $\xvec^* \in X^*$ and $\yvec^* \in Y^*$. Let $(\xvechat, \yvechat) \in Z^*$. Then
$h(\xvec^*, \yvec^*) = f(\xvec^*) + g(\yvec^*) \ge f(\xvechat) + g(\yvechat)
= h(\xvechat, \yvechat) \ge h(\xvec^*, \yvec^*)$.
Hence, all inequalities here hold with equality.
Thus, $h(\xvec^*, \yvec^*) = h(\xvechat, \yvechat)$,
$f(\xvec^*) = f(\xvechat)$, and $g(\yvec^*) = g(\yvechat)$.
Hence, $\xvechat \in X^*$, $\yvechat \in Y^*$, and $(\xvec^*, \yvec^*) \in Z^*$.
Thus, $Z^* = X^* \times Y^*$.
\end{proof}

\end{document}

%% file: dm-monopoly-final.bbl
\newcommand{\etalchar}[1]{$^{#1}$}

%% file: dm-monopoly-final.bbl
\begin{thebibliography}{FMMO24}

\bibitem[ACGM25]{akrami2025theoretical}
Hannaneh Akrami, Bhaskar~Ray Chaudhury, Jugal Garg, and Aniket Murhekar.
\newblock On the theoretical foundations of data exchange economies.
\newblock In {\em Conf.\ Economics and Computation (EC)}, pages 444--444, 2025.
\newblock \href {https://doi.org/10.1145/3736252.3742566}
  {\path{doi:10.1145/3736252.3742566}}.

\bibitem[{Acu}22]{AcumenDM}
{Acumen Research}.
\newblock Big data market size: Global industry, share, analysis, trends and
  forecast 2022 - 2030, 2022.
\newblock URL:
  \url{https://www.acumenresearchandconsulting.com/big-data-market}.

\bibitem[ADHR24]{AgarwalDHR20}
Anish Agarwal, Munther~A. Dahleh, Thibaut Horel, and Maryann Rui.
\newblock Towards data auctions with externalities.
\newblock {\em Games Econ. Behav.}, 148:323--356, 2024.
\newblock \href {https://doi.org/10.1016/j.geb.2024.09.008}
  {\path{doi:10.1016/j.geb.2024.09.008}}.

\bibitem[ADS19]{AgarwalDS19}
Anish Agarwal, Munther~A. Dahleh, and Tuhin Sarkar.
\newblock A marketplace for data: An algorithmic solution.
\newblock In {\em Conf.\ Economics and Computation (EC)}, pages 701--726.
  {ACM}, 2019.
\newblock \href {https://doi.org/10.1145/3328526.3329589}
  {\path{doi:10.1145/3328526.3329589}}.

\bibitem[AFM{\etalchar{+}}23]{acemoglu2023good}
Daron Acemoglu, Alireza Fallah, Ali Makhdoumi, Azarakhsh Malekian, and Asuman
  Ozdaglar.
\newblock How good are privacy guarantees? platform architecture and violation
  of user privacy.
\newblock Technical report, National Bureau of Economic Research, 2023.

\bibitem[AP86]{admati1986monopolistic}
Anat~R Admati and Paul Pfleiderer.
\newblock A monopolistic market for information.
\newblock {\em Journal of Economic Theory}, 39(2):400--438, 1986.
\newblock \href {https://doi.org/10.1016/0022-0531(86)90052-9}
  {\path{doi:10.1016/0022-0531(86)90052-9}}.

\bibitem[AP90]{admati1990direct}
Anat~R Admati and Paul Pfleiderer.
\newblock Direct and indirect sale of information.
\newblock {\em Econometrica: Journal of the Econometric Society}, pages
  901--928, 1990.
\newblock \href {https://doi.org/10.2307/2938355} {\path{doi:10.2307/2938355}}.

\bibitem[BBG22]{bergemann2022economics}
Dirk Bergemann, Alessandro Bonatti, and Tan Gan.
\newblock The economics of social data.
\newblock {\em The RAND Journal of Economics}, 53(2):263--296, 2022.
\newblock \href {https://doi.org/10.1111/1756-2171.12407}
  {\path{doi:10.1111/1756-2171.12407}}.

\bibitem[BBS18]{bergemann2018design}
Dirk Bergemann, Alessandro Bonatti, and Alex Smolin.
\newblock The design and price of information.
\newblock {\em American economic review}, 108(1):1--48, 2018.
\newblock \href {https://doi.org/10.1257/aer.20161079}
  {\path{doi:10.1257/aer.20161079}}.

\bibitem[BGI{\etalchar{+}}24]{BhaskaraGIKMS24}
Aditya Bhaskara, Sreenivas Gollapudi, Sungjin Im, Kostas Kollias, Kamesh
  Munagala, and Govind~S. Sankar.
\newblock Data exchange markets via utility balancing.
\newblock In {\em World Wide Web Conf.\ (WWW)}, pages 57--65. {ACM}, 2024.
\newblock \href {https://doi.org/10.1145/3589334.3645364}
  {\path{doi:10.1145/3589334.3645364}}.

\bibitem[BJS10]{bazaraa2010linear}
M.S. Bazaraa, J.J. Jarvis, and H.D. Sherali.
\newblock {\em Linear Programming and Network Flows}.
\newblock Wiley, 2010.

\bibitem[BKL12]{BabaioffKP12}
Moshe Babaioff, Robert Kleinberg, and Renato~Paes Leme.
\newblock Optimal mechanisms for selling information.
\newblock In {\em Conf.\ Electronic Commerce (EC)}, page 92–109, 2012.
\newblock \href {https://doi.org/10.1145/2229012.2229024}
  {\path{doi:10.1145/2229012.2229024}}.

\bibitem[BT97]{bertsimas1997introduction}
Dimitris Bertsimas and John~N. Tsitsiklis.
\newblock {\em Introduction to Linear Optimization}.
\newblock Athena Scientific, 1997.
\newblock URL: \url{http://www.athenasc.com/linoptbook.html}.

\bibitem[BV25]{baley2025data}
Isaac Baley and Laura~L. Veldkamp.
\newblock {\em The Data Economy: Tools and Applications}.
\newblock Princeton University Press, 2025.
\newblock \href {https://doi.org/10.1515/9780691256740}
  {\path{doi:10.1515/9780691256740}}.

\bibitem[CEP{\etalchar{+}}23]{cummings2023optimal}
Rachel Cummings, Hadi Elzayn, Emmanouil Pountourakis, Vasilis Gkatzelis, and
  Juba Ziani.
\newblock Optimal data acquisition with privacy-aware agents.
\newblock In {\em IEEE Conference on Secure and Trustworthy Machine Learning
  (SaTML)}, pages 210--224, 2023.
\newblock \href {https://doi.org/10.1109/SaTML54575.2023.00023}
  {\path{doi:10.1109/SaTML54575.2023.00023}}.

\bibitem[CGMS26]{ChaudhuryGMS26}
Bhaskar~Ray Chaudhury, Jugal Garg, Aniket Murhekar, and Jiaxin Song.
\newblock Data pricing via competitive equilibrium.
\newblock In {\em Proceedings of the {ACM} on Web Conference (WWW)}, 2026.

\bibitem[CGSS26a]{ChaudhuryGSS2026}
Bhaskar~Ray Chaudhury, Jugal Garg, Eklavya Sharma, and Jiaxin Song.
\newblock Equilibrium pricing in oligopolistic data markets.
\newblock In {\em Proceedings of the International Conference on Machine
  Learning (ICML)}, 2026.
\newblock Spotlight.
\newblock URL: \url{http://jugal.ise.illinois.edu/papers/icml26.pdf}.

\bibitem[CGSS26b]{chaudhury2026revenue}
Bhaskar~Ray Chaudhury, Jugal Garg, Eklavya Sharma, and Jiaxin Song.
\newblock Revenue-optimal pricing for budget-constrained buyers in data
  markets, 2026.
\newblock \href {https://arxiv.org/abs/2602.13897v2}
  {\path{arXiv:2602.13897v2}}.

\bibitem[CHK24]{ChenHK24}
Keran Chen, Joon~Suk Huh, and Kirthevasan Kandasamy.
\newblock Learning to price homogeneous data.
\newblock In {\em Conf.\ Adv.\ Neural Information Processing Systems
  (NeurIPS)}, 2024.

\bibitem[CIL{\etalchar{+}}18]{chen2018optimal}
Yiling Chen, Nicole Immorlica, Brendan Lucier, Vasilis Syrgkanis, and Juba
  Ziani.
\newblock Optimal data acquisition for statistical estimation.
\newblock In {\em Proceedings of the 2018 ACM Conference on Economics and
  Computation}, pages 27--44, 2018.

\bibitem[CLR{\etalchar{+}}15]{cummings2015accuracy}
Rachel Cummings, Katrina Ligett, Aaron Roth, Zhiwei~Steven Wu, and Juba Ziani.
\newblock Accuracy for sale: Aggregating data with a variance constraint.
\newblock In {\em Symp.\ Innovations in Theoret.\ Computer Science (ITCS)},
  pages 317--324, 2015.
\newblock \href {https://doi.org/10.1145/2688073.2688106}
  {\path{doi:10.1145/2688073.2688106}}.

\bibitem[CRTT22]{ChawlaRTT22}
Shuchi Chawla, Rojin Rezvan, Yifeng Teng, and Christos Tzamos.
\newblock Pricing ordered items.
\newblock In {\em Symp.\ Theory of Computing (STOC)}, pages 722--735, 2022.
\newblock \href {https://doi.org/10.1145/3519935.3520065}
  {\path{doi:10.1145/3519935.3520065}}.

\bibitem[CV21]{cai2020sell}
Yang Cai and Grigoris Velegkas.
\newblock How to sell information optimally: An algorithmic study.
\newblock In {\em Symp.\ Innovations in Theoret.\ Computer Science (ITCS)},
  2021.
\newblock \href {https://doi.org/10.4230/LIPIcs.ITCS.2021.81}
  {\path{doi:10.4230/LIPIcs.ITCS.2021.81}}.

\bibitem[DDT14]{daskalakis2014complexity}
Constantinos Daskalakis, Alan Deckelbaum, and Christos Tzamos.
\newblock The complexity of optimal mechanism design.
\newblock In {\em Symp.\ Discrete Algorithms (SODA)}, pages 1302--1318. SIAM,
  2014.
\newblock \href {https://doi.org/10.1137/1.9781611973402.96}
  {\path{doi:10.1137/1.9781611973402.96}}.

\bibitem[FMMO22]{fallah2022bridging}
Alireza Fallah, Ali Makhdoumi, Azarakhsh Malekian, and Asuman Ozdaglar.
\newblock Bridging central and local differential privacy in data acquisition
  mechanisms.
\newblock {\em Conf.\ Adv.\ Neural Information Processing Systems (NeurIPS)},
  35:21628--21639, 2022.

\bibitem[FMMO24]{fallah2024optimal}
Alireza Fallah, Ali Makhdoumi, Azarakhsh Malekian, and Asuman Ozdaglar.
\newblock Optimal and differentially private data acquisition: Central and
  local mechanisms.
\newblock {\em Operations Research}, 72(3):1105--1123, 2024.
\newblock \href {https://doi.org/10.1287/opre.2022.0014}
  {\path{doi:10.1287/opre.2022.0014}}.

\bibitem[FOT23]{fleckenstein2023review}
Mike Fleckenstein, Ali Obaidi, and Nektaria Tryfona.
\newblock A review of data valuation approaches and building and scoring a data
  valuation model.
\newblock {\em Harvard Data Science Review}, 5(1), 2023.
\newblock \href {https://doi.org/10.1162/99608f92.c18db966}
  {\path{doi:10.1162/99608f92.c18db966}}.

\bibitem[FSVV25]{farboodi2025valuing}
Maryam Farboodi, Dhruv Singal, Laura Veldkamp, and Venky Venkateswaran.
\newblock Valuing financial data.
\newblock {\em The Review of Financial Studies}, 38(3):938--980, 2025.
\newblock \href {https://doi.org/10.1093/rfs/hhae034}
  {\path{doi:10.1093/rfs/hhae034}}.

\bibitem[FV23]{farboodi2023data}
Maryam Farboodi and Laura Veldkamp.
\newblock Data and markets.
\newblock {\em Annual Review of Economics}, 15(1):23--40, 2023.
\newblock \href {https://doi.org/10.1146/annurev-economics-082322-023244}
  {\path{doi:10.1146/annurev-economics-082322-023244}}.

\bibitem[GR11]{ghosh2011selling}
Arpita Ghosh and Aaron Roth.
\newblock Selling privacy at auction.
\newblock In {\em Conf.\ Electronic Commerce (EC)}, pages 199--208, 2011.
\newblock \href {https://doi.org/10.1145/1993574.1993605}
  {\path{doi:10.1145/1993574.1993605}}.

\bibitem[HC24]{Hossain024}
Safwan Hossain and Yiling Chen.
\newblock Equilibrium of data markets with externality.
\newblock In {\em {ICML}}. {PMLR}, 2024.
\newblock URL: \url{https://proceedings.mlr.press/v235/hossain24a.html}.

\bibitem[HS16]{horner2016selling}
Johannes H{\"o}rner and Andrzej Skrzypacz.
\newblock Selling information.
\newblock {\em Journal of Political Economy}, 124(6):1515--1562, 2016.

\bibitem[Ich21]{ichihashi2021competing}
Shota Ichihashi.
\newblock Competing data intermediaries.
\newblock {\em The RAND Journal of Economics}, 52(3):515--537, 2021.
\newblock \href {https://doi.org/10.1111/1756-2171.12382}
  {\path{doi:10.1111/1756-2171.12382}}.

\bibitem[Kle04]{klemperer2004auctions}
Paul Klemperer.
\newblock {\em Auctions: Theory and Practice}.
\newblock Princeton University Press, 2004.

\bibitem[MDJM21]{mehta2021sell}
Sameer Mehta, Milind Dawande, Ganesh Janakiraman, and Vijay Mookerjee.
\newblock How to sell a data set? pricing policies for data monetization.
\newblock {\em Information Systems Research}, 32(4):1281--1297, 2021.
\newblock \href {https://doi.org/10.1287/isre.2021.1027}
  {\path{doi:10.1287/isre.2021.1027}}.

\bibitem[MYC{\etalchar{+}}23]{MurhekarYCLM23}
Aniket Murhekar, Zhuowen Yuan, Bhaskar~Ray Chaudhury, Bo~Li, and Ruta Mehta.
\newblock Incentives in federated learning: Equilibria, dynamics, and
  mechanisms for welfare maximization.
\newblock In {\em Conf.\ Adv.\ Neural Information Processing Systems
  (NeurIPS)}, 2023.

\bibitem[Mye81]{myerson1981optimal}
Roger~B Myerson.
\newblock Optimal auction design.
\newblock {\em Mathematics of operations research}, 6(1):58--73, 1981.
\newblock \href {https://doi.org/10.1287/moor.6.1.58}
  {\path{doi:10.1287/moor.6.1.58}}.

\bibitem[NVX14]{nissim2014redrawing}
Kobbi Nissim, Salil Vadhan, and David Xiao.
\newblock Redrawing the boundaries on purchasing data from privacy-sensitive
  individuals.
\newblock In {\em Proceedings of the 5th conference on Innovations in
  theoretical computer science}, pages 411--422, 2014.

\bibitem[PB97]{caliHousingDataset}
R~Kelley Pace and Ronald Barry.
\newblock Sparse spatial autoregressions.
\newblock {\em Statistics \& Probability Letters}, 33(3):291--297, 1997.
\newblock URL:
  \url{https://scikit-learn.org/stable/datasets/real_world.html#california-housing-dataset},
  \href {https://doi.org/10.1016/S0167-7152(96)00140-X}
  {\path{doi:10.1016/S0167-7152(96)00140-X}}.

\bibitem[Pei20]{pei2020survey}
Jian Pei.
\newblock A survey on data pricing: from economics to data science.
\newblock {\em IEEE Transactions on knowledge and Data Engineering},
  34(10):4586--4608, 2020.
\newblock \href {https://doi.org/10.1109/TKDE.2020.3045927}
  {\path{doi:10.1109/TKDE.2020.3045927}}.

\bibitem[Roc97a]{rockafellar1997convex05}
R.T. Rockafellar.
\newblock {\em Convex Analysis}, chapter 5: Functional Operators.
\newblock Princeton University Press, 1997.
\newblock URL: \url{https://books.google.com/books?id=1TiOka9bx3sC}.

\bibitem[Roc97b]{rockafellar1997convex17}
R.T. Rockafellar.
\newblock {\em Convex Analysis}, chapter 17: Carath{\'e}odory's Theorem.
\newblock Princeton University Press, 1997.
\newblock URL: \url{https://books.google.com/books?id=1TiOka9bx3sC}.

\bibitem[SKSC25]{song2025existence}
Jiaxin Song, Pooja Kulkarni, Parnian Shahkar, and Bhaskar~Ray Chaudhury.
\newblock On the existence and complexity of core-stable data exchanges.
\newblock {\em arXiv preprint arXiv:2509.16450}, 2025.

\bibitem[Var09]{varian2009economic}
Hal~R Varian.
\newblock Economic aspects of personal privacy.
\newblock In {\em Internet Policy and Economics: Challenges and Perspectives},
  pages 101--109. Springer, 2009.

\bibitem[Vel23]{veldkamp2023valuing}
Laura Veldkamp.
\newblock Valuing data as an asset.
\newblock {\em Review of Finance}, 27(5):1545--1562, 2023.
\newblock \href {https://doi.org/10.1093/rof/rfac073}
  {\path{doi:10.1093/rof/rfac073}}.

\end{thebibliography}
